\newcommand{\myspacea}{}
\newcommand{\myspaceb}{}
\newcommand{\myspaceab}{}
\tikzstyle{max}=[thick,draw,minimum size=1.4em,inner sep=0em]
\tikzstyle{min}=[diamond,thick,draw,minimum size=1.4em,%
\tikzstyle{ran}=[circle,thick,draw,minimum size=1.4em,%
\tikzstyle{act}=[circle,thick,draw,fill,minimum size=.7em,%
\tikzstyle{mc}=[rounded corners,thick,draw,minimum size=1.4em,%
\tikzstyle{tran}=[thick,draw,->,>=stealth]
\tikzstyle{loop left}=[tran, to path={.. controls +(150:.5)
\tikzstyle{loop right}=[tran, to path={.. controls +(30:.5)
\tikzstyle{loop above}=[tran, to path={.. controls +(60:.5)
\tikzstyle{loop below}=[tran, to path={.. controls +(240:.5)
\newcommand{\theoremlike}[2]{\par\medskip\penalty-250\refstepcounter{theorem}{{\bfseries\noindent#2
\ref{#1}.}}}
\newcommand{\thmhelperpre}[2]{\theoremlike{#1}{#2}}
\newcommand{\thmhelperpost}{\par\medskip}
\newenvironment{reflemma}[1]{\thmhelperpre{#1}{Lemma}}{\thmhelperpost}
\newenvironment{refproposition}[1]{\thmhelperpre{#1}{Proposition}}{
\thmhelperpost }
\newcommand{\QED}{\qed}
\newcommand{\Nset}{\mathbb{N}}
\newcommand{\Nseto}{\Nset_0}
\newcommand{\Rset}{\mathbb{R}}
\newcommand{\Rsetp}{\mathbb{R}_{>0}}
\newcommand{\Rsetpo}{\mathbb{R}_{\ge 0}}
\renewcommand{\vec}[1]{\mathbf{#1}}
\newcommand{\indicator}[1]{\mathbf{1}[#1]}
\newcommand{\probm}{\mathcal{P}}
\newcommand{\Productfield}{\bigotimes}
\newcommand{\sigmafield}{\mathcal{F}}
\newcommand{\initmeasure}{\mu}
\newcommand{\de}[1]{\mathit{d#1}} 
\newcommand{\dist}{\mathcal{D}}
\newcommand{\distribution}{\alpha}
\newcommand{\events}{\mathcal{E}}
\newcommand{\fixed}{\events_f}
\newcommand{\nice}{single-ticking\xspace}
\newcommand{\emnice}{\emph{single-ticking}\xspace}
\newcommand{\states}{S}
\newcommand{\sched}{\mathbf{E}}
\newcommand{\occur}{\mathrm{Succ}}
\newcommand{\val}{\nu} 
\newcommand{\bottom}{\bot} 
\newcommand{\last}{\vartriangle}
\newcommand{\configs}{\Gamma}
\newcommand{\configsfield}{\mathcal{G}}
\newcommand{\kernel}{P}
\newcommand{\hit}{\mathrm{Hit}}
\newcommand{\win}{\mathrm{Win}}
\newcommand{\gssmc}{\Phi}
\newcommand{\allruns}{\Omega}
\newcommand{\reach}{\mathit{Reach}}
\newcommand{\dme}{\mathbf{d}}
\newcommand{\cme}{\mathbf{c}}
\newcommand{\measured}{{\mathring{s}}}
\newcommand{\fr}{\mathrm{frac}}  
\newcommand{\intpart}{\mathrm{int}}
\newcommand{\region}{\sim} 
\newcommand{\regiongraph}{G} 
\newcommand{\sizereg}{|V|}
\newcommand{\smallmeasure}{\kappa}
\newcommand{\densb}{c}
\newcommand{\pmin}{p_{\min}}
\newcommand{\setofruns}{\mathcal{R}}
\newcommand{\restart}{Sink}
\newcommand{\todo}[1]{%
  %
  { \renewcommand{\baselinestretch}{0.3} %
  \begin{tikzpicture}[remember picture]%
      \node [coordinate] (inText) {};%
  \end{tikzpicture} %
  %
  \marginpar[{
      \begin{tikzpicture}[remember picture]%
          \draw node[draw=red, anchor=north west, color=red, text width = 4cm, font=\footnotesize, inner sep=0.10cm,yshift=-0.15cm] (inNote)%
                   {#1};%
      \end{tikzpicture}%
      %
      \begin{tikzpicture}[remember picture, overlay] %
    \draw[draw = red]%
        ([yshift=-0.10cm] inText)%
        -| ([xshift=0.05cm] inNote.south east) %
        -| ([xshift=0.05cm] inNote.east) %
        -| (inNote.east);%
      \end{tikzpicture}%
  }]{ 
      \begin{tikzpicture}[remember picture]%
          \draw node[draw=red, anchor=north west, color=red, text width = 4cm, font=\footnotesize, inner sep=0.10cm,yshift=-0.15cm] (inNote)%
                   {#1};%
      \end{tikzpicture}%
      %
      \begin{tikzpicture}[remember picture, overlay] %
    \draw[draw = red]%
        ([yshift=-0.10cm] inText)%
        -| ([xshift=-0.05cm] inNote.south west)%
        -| ([xshift=-0.05cm] inNote.west)%
        -| (inNote.west);%
      \end{tikzpicture}%
  }}}%
\renewcommand{\todo}[1]{}
\newcommand{\appref}[1]{ Appendix~\ref{#1}}
\begin{document}

\title{Fixed-delay Events in Generalized\\ Semi-Markov Processes Revisited\thanks{
The authors are supported by the Institute for Theoretical Computer Science, project No.~1M0545,  the Czech Science Foundation, grant No.~P202/10/1469 (T.~Br\'{a}zdil, V.~\v{R}eh\'ak) and No.~102/09/H042 (J.~Kr\v{c}\'al), and Brno PhD Talent Financial Aid (J.~K\v ret\'insk\'y).
}}
\toctitle{Fixed-delay Events in Generalized Semi-Markov Processes Revisited}
\titlerunning{Fixed-delay Events in GSMP Revisited}

\author{Tom\'a\v{s} Br\'azdil \and Jan Kr\v{c}\'al \and Jan
  K{\v{r}}et\'{\i}nsk\'y\thanks{
On leave at TU M\"{u}nchen, Boltzmannstr.{} 3, Garching, Germany.} 
\and Vojt\v{e}ch {\v{R}}eh\'ak}

\institute{Faculty of Informatics, Masaryk University,
  Brno, Czech Republic\\
  {\{brazdil,\,krcal,\,jan.kretinsky,\,rehak\}@fi.muni.cz}
  }

\maketitle

\begin{abstract}
  We study long run average behavior of generalized semi-Markov processes
  with both fixed-delay events as well as variable-delay events.
  We show that allowing two fixed-delay events and one variable-delay event
  may cause an unstable behavior of a GSMP. In particular, we show that a
  frequency of a given state may not be defined for almost all runs (or more
  generally, an invariant measure may not exist). We use this observation to
  disprove several results from literature. Next we study GSMP with at most
  one fixed-delay event combined with an arbitrary number of variable-delay
  events. We prove that such a GSMP always possesses an invariant measure
  which means that the frequencies of states are always well defined and we
  provide algorithms for approximation of these frequencies. Additionally,
  we show that the positive results remain valid even if we allow an
  arbitrary number of reasonably restricted fixed-delay events.
\end{abstract}

\myspacea\myspacea

\section{Introduction}
\label{sec-intro}

\myspacea

Generalized semi-Markov processes (GSMP), introduced by Matthes in \cite{Matthes:GSMP}, are a standard model for discrete-event stochastic systems. Such a system operates in continuous time and reacts, by changing its state, to occurrences of events. 
Each event is assigned a random delay after which it occurs; state transitions may be randomized as well.
Whenever the system reacts to an event, new events may be scheduled and pending events may be discarded. 
To get some intuition, imagine a simple communication model in which a server sends messages to several clients asking them to reply. The reaction of each client may be randomly delayed, e.g., due to latency of communication links. Whenever a reply comes from a client, the server changes its state (e.g., by updating its database of alive clients or by sending another message to the client) and then waits for the rest of the replies. Such a model is usually extended by allowing the server to time-out and to take an appropriate action, e.g., demand replies from the remaining clients in a more urgent way. The time-out can be seen as another event which has a fixed delay.

%

More formally, a GSMP consists of a set $\states$ of states and a set $\events$ of events. 
 Each state $s$ is
assigned a set $\sched(s)$ of events {\it scheduled} in $s$. Intuitively, each
event in $\sched(s)$ is assigned a positive real number representing the
amount of time which elapses before the event occurs. Note that several events may occur
at the same time.
Once a set of events $E\subseteq \sched(s)$ occurs, the system makes a {\it transition} to a new
state $s'$. The state $s'$ is randomly chosen according to a fixed distribution which depends only on the
state $s$ and the set $E$. 
In $s'$, the \emph{old} 
events of $\sched(s)\smallsetminus \sched(s')$ are
discarded, each \emph{inherited} event of $(\sched(s')\cap
\sched(s))\smallsetminus E$ remains scheduled to the same point in the future, 
and each \emph{new} event of $(\sched(s')\smallsetminus \sched(s))\cup (\sched(s')\cap E)$
is newly scheduled
according to its given probability distribution.

In order to deal with GSMP in a rigorous way, one has to impose some
restrictions on the distributions of delays. Standard mathematical literature, such
as~\cite{Haas99,Haas:book}, usually considers GSMP with continuously distributed delays.
This is certainly a limitation, as some systems with fixed time delays
(such as time-outs or processor ticks)
cannot be faithfully modeled using only continuously distributed delays. 
We show some examples where fixed delays exhibit qualitatively different
behavior than any continuously distributed approximation.
In this paper we consider the following two types of events:
\begin{itemize}
\item \emph{variable-delay}: the delay of the event is randomly distributed according to
  a probability density function which is continuous and positive either on a bounded interval $[\ell,u]$ or on an unbounded interval $[\ell,\infty)$;
\item \emph{fixed-delay}: the delay is set to a fixed value with probability one.
\end{itemize}
The desired behavior of systems modeled using GSMP can be specified by
various means. One is often
interested in long-run behavior such as mean
response time, frequency of errors, etc. (see, e.g.,~\cite{Alfaro:Long-run-average}). For example, in the above communication model, one may be interested in average response time of clients or in average time in which all clients eventually reply.
Several model independent
formalisms have been devised for expressing such properties of continuous
time systems. For example, a well known temporal logic CSL contains a steady
state operator expressing frequency of states satisfying a given
subformula. In~\cite{BKKKR-HSSC2011}, we proposed to specify long-run behavior of a
continuous-time process using a timed automaton which observes runs of the
process, and measure the frequency of locations of the automaton.

In this paper we consider a standard performance measure, the frequency of
states of the GSMP. To be more specific, 
let us fix a state $\measured\in S$. We define a random variable $\dme$ which to
every run  assigns the (discrete) frequency of visits to $\measured$ on the run, i.e. the ratio of the number of transitions
entering $\measured$ to the number of all transitions.
We also define a random variable $\cme$ which gives timed frequency of $\measured$, i.e. the ratio of the amount of time spent in $\measured$ to the amount of time spent in all states. Technically, both variables $\dme$ and $\cme$ are defined as limits of the corresponding ratios on prefixes of
the run that are prolonged ad infinitum.
Note that the limits may not be defined for some runs. 
For example, consider a run which alternates between $\measured$ and another state $s$; it spends $2$ time unit in $\measured$, then $4$ in $s$, then $8$ in $\measured$, then $16$ in $s$, etc. Such a run does not have a limit ratio between time spent in $\measured$ and in $s$. We say that $\dme$ (or $\cme$) is well-defined for a run if the limit ratios exist for this run.
Our goal is to characterize stable systems that have the variables $\dme$ and $\cme$ well-defined 
for almost all runs, and to analyze the probability distributions of $\dme$
and $\cme$ on these stable systems.  


As a working example of GSMP with fixed-delay events, we present a simplified protocol for time
synchronization.  Using the variable $\cme$, we show how to measure
reliability of the protocol. 
Via message exchange, the protocol sets and keeps a
client clock sufficiently close to a server clock. 
%
%
Each message exchange is initialized by the client asking the server for the
current time, i.e.~sending a \emph{query} message. The server adds a
timestamp into the message and sends it back as a \emph{response}. This
query-response exchange provides a reliable data for \emph{synchronization}
action if it is realized within a given \emph{round-trip delay}.
%
%
Otherwise, the client has to repeat the procedure.
%
%
After a success, the client is considered to be synchronized until a given \emph{stable-time delay} elapses.
Since the aim is to keep the clocks synchronized all the time,
the client restarts the synchronization process sooner, i.e. after a given \emph{polling
  delay} that is shorter than the stable-time delay. Notice that the client gets
desynchronized whenever several unsuccessful synchronizations occur in a row. 
Our goal is to measure the portion of the time when the
client clock is not synchronized.

Figure~\ref{fig-GSMP-intro} shows a GSMP model of this protocol. 
The delays specified in
the protocol are modeled using fixed-delay events $roundtrip\_d$,
$stable\_d$, and $polling\_d$ while actions are modeled by variable-delay
events $query$, $response$, and $sync$.
%
%
%
Note that if the stable-time runs out before a fast enough response arrives,
the systems moves into primed states denoting it is not synchronized at the
moment. 
Thus,
$\cme(\textit{Init'})+\cme(\textit{Q-sent'})$ expresses the portion of the
time when the client clock is not synchronized.

\begin{figure}[t]\myspaceb
  \centering
     
\tikzstyle{max}=[thick,draw,minimum size=1.4em,inner sep=0em]
\tikzstyle{min}=[diamond,thick,draw,minimum size=2em,%
    inner sep=0em]
\tikzstyle{ran}=[circle,thick,draw,text width=14
mm,text centered,minimum size=2em,%
    inner sep=0em]
\tikzstyle{act}=[circle,thick,draw,fill,minimum size=.7em,%
    inner sep=0em]
\tikzstyle{mc}=[rounded corners,thick,draw,minimum size=1.4em,%
    inner sep=.5ex]
\tikzstyle{tran}=[thick,draw,->,>=stealth]
\tikzstyle{loop left}=[tran, to path={.. controls +(150:.7)
    and +(210:.7) .. (\tikztotarget) \tikztonodes}]
\tikzstyle{loop right}=[tran, to path={.. controls +(30:.7)
    and +(330:.7) .. (\tikztotarget) \tikztonodes}]
\tikzstyle{loop above}=[tran, to path={.. controls +(60:.7)
    and +(120:.7) .. (\tikztotarget) \tikztonodes}]
\tikzstyle{loop below}=[tran, to path={.. controls +(240:.7)
    and +(300:.7) .. (\tikztotarget) \tikztonodes}]

 \begin{tikzpicture}[x=1.7cm,y=1.05cm],font=\scriptsize]
\node (Init) at (0,0) [ran] { \textit{Init}\\ $\{query,$\\$stable\_d\}$};
\node (Qsent) at (2,0) [ran] {\textit{Q-sent}\\ $\{response,$\\$roundtrip\_d,$\\$stable\_d\}$};
\node (Rrecv) at (4,0) [ran] {\textit{R-recvd}\\ $\{sync\}$};
\node (Idle) at (6,0) [ran] {\textit{Idle}\\ $\{stable\_d,$\\$polling\_d\}$};

\node (Init') at (0,-2) [ran] { \textit{Init'}\\ $\{query\}$};
\node (Qsent') at (2,-2) [ran] {\textit{Q-sent'}\\ $\{response,$\\$roundtrip\_d\}$};

\draw [tran,rounded corners] (Init) 
  edge 
  node[above] {$query$} 
 (Qsent);

\draw [tran,rounded corners] (Qsent) 
  edge 
  node[above] {$response$} 
  (Rrecv);

\draw [tran,rounded corners] (Rrecv) 
  edge 
  node[above] {$sync$} 
  (Idle);

\draw [tran,rounded corners] (Qsent) 
  -- +(-0.4,0.8)
  -- +(-1.6,0.8) 
  node[below, pos=0.5] {$roundtrip\_d$} 
  -- (Init);

\draw [tran,rounded corners] (Init') 
  edge 
  node[above] {$query$} 
 (Qsent');

\draw [tran,rounded corners] (Qsent') 
  edge 
  node[near start,below right] {$response$} 
  (Rrecv);

\draw [tran,rounded corners] (Qsent') 
  -- +(-0.4,-0.8)
  -- +(-1.6,-0.8) 
  node[above, pos=0.5] {$roundtrip\_d$} 
  -- (Init');

\draw [tran,rounded corners] (Idle) 
  -- +(0,1.2)
  -- +(-6.0,1.2) 
  node[below, pos=0.1667] {$polling\_d$} 
  -- (Init);


\foreach \i in {Init, Qsent}  {
  \draw [tran,rounded corners] (\i) 
    edge 
    node[left] {$stable\_d$} 
    (\i');
}

\end{tikzpicture}
\caption{A GSMP model of a clock synchronization protocol. Below each
state label, we list the set of scheduled events. We only display transitions that can
take place with non-zero probability.}
\label{fig-GSMP-intro}\myspacea\myspacea
\end{figure}
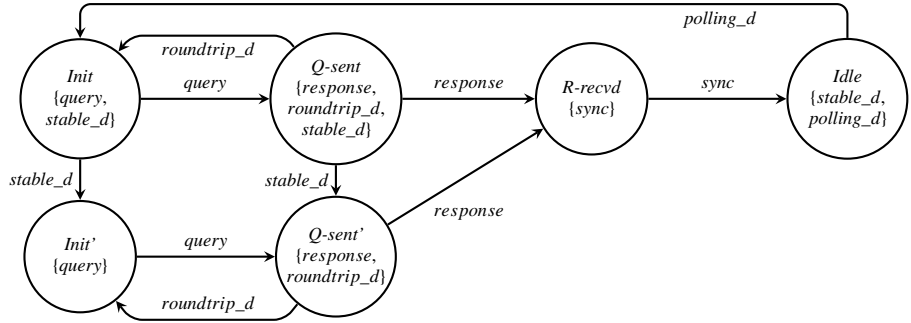
%

\myspacea

\subsubsection{Our contribution.}
So far, GSMP were mostly studied with variable-delay events only. There are
a few exceptions such as~\cite{ACD:verifying-automata-real-time,ACD:model-checking-real-time,BA:gsmp-one-stateful-clock,AB:gsmp-bounded-model-checking} but 
they often contain erroneous statements due to presence of fixed-delay events. Our goal is to study the effect of mixing  a number of fixed-delay events with an arbitrary amount of variable-delay events. 

At the beginning we give an example of a GSMP with two fixed-delay events
for which it is \emph {not true} that the variables $\dme$ and $\cme$ are
well-defined for almost all runs.
We also disprove some crucial statements
of~\cite{ACD:model-checking-real-time,ACD:verifying-automata-real-time}. In
particular, we show an example of a GSMP
which 
reaches one of its states with probability less than one even though the
algorithms
of~\cite{ACD:model-checking-real-time,ACD:verifying-automata-real-time} return 
the probability one.  The mistake of these algorithms is fundamental as they
neglect the possibility of unstable behavior of GSMP.
%

Concerning positive results, we
show that if there is at most one fixed-delay event, then both $\dme$ and $\cme$ are almost surely well-defined.
This is true even if we allow an arbitrary number of 
reasonably restricted
fixed-delay events. 
We also show how to 
approximate 
distribution functions of $\dme$ and $\cme$. 
To be more specific, we show that
for GSMP with at most one unrestricted and an arbitrary number of restricted fixed-delay events, both variables $\dme$ and $\cme$\ have finite ranges $\{d_1,\ldots,d_n\}$ and $\{c_1,\ldots,c_n\}$. Moreover, 
all values $d_i$ and $c_i$ 
and
probabilities $\probm(\dme=d_i)$ and $\probm(\cme=c_i)$ 
can be effectively approximated.


\myspacea

\subsubsection{Related work.}
There are two main approaches to the analysis of GSMP. 
One is to restrict the amount of events or types of their distributions and to solve the problems using symbolic methods~\cite{BA:gsmp-one-stateful-clock,AB:gsmp-bounded-model-checking,MC:dspn-defined}. The other is to estimate the values of interest using simulation~\cite{YS:gsmp-monte-carlo,Haas99,Haas:book}.
Concerning the first approach, time-bounded reachability has been studied in~\cite{AB:gsmp-bounded-model-checking} where the authors restricted the delays of events to so called expolynomial distributions.
%
The same authors also studied reachability probabilities of GSMP where in each transition at most one event is inherited~\cite{BA:gsmp-one-stateful-clock}. 
Further, the widely studied formalisms of semi-Markov processes (see, e.g., \cite{LHK:smp-csl-model-checking,BKKKR-HSSC2011}) and continuous-time Markov chains (see, e.g., \cite{BHHK:CTMC-model-checking-IEEE-TSE,BCHKM:ctmc-ta-efficient}) are both subclasses of GSMP. 
%
%

As for the second approach, GSMP are studied by mathematicians as a standard model 
for discrete event simulation and Markov chains Monte Carlo
(see, e.g., \cite{Glynn:GSMP,HS:GSMP-regenerative,RR:GSSMC-PS}).
Our work is strongly related to~\cite{Haas99,Haas:book} where the long-run average behavior
of GSMP with variable-delay events is studied. 
Under relatively standard
assumptions the stochastic process generated by a GSMP is shown to be irreducible and
to possess an invariant measure. In such a case, the variables $\dme$ and $\cme$ are almost
surely constant. 
Beside the theoretical results, there exist tools that employ simulation for model checking (see, e.g., \cite{YS:gsmp-monte-carlo,CJMS:smart-tool}).

In addition, GSMP are a proper subset of stochastic automata, a model of concurrent systems (see, e.g., \cite{DK:stochastic-automata}).
Further, as shown in~\cite{Haas:book}, GSMP have the same modeling power as stochastic Petri nets
~\cite{SPN-modeling:book}. The formalism of
deterministic and stochastic Petri nets (DSPN) introduced
by~\cite{MC:dspn-defined} 
adds deterministic transitions -- a counterpart of fixed-delay events.
The authors restricted the model to at most one deterministic transition enabled at a time
and to exponentially distributed timed transitions. For this restricted model, 
the authors proved existence of a steady state distribution and provided an algorithm for its computation. However, the methods inherently rely on the properties of the exponential distribution and cannot be extended to our setting with general variable delays.
DSPN have been extended by \cite{GL:dspn-analysis-supplementary-variables,LS:dspn-numerical-analysis} to allow arbitrarily many deterministic transitions. The authors provide algorithms for steady-state analysis of DSPN
that were implemented in the tool DSPNExpress~\cite{LRT:dspnexpress-tool},
but do not discuss under which conditions the steady-state distributions exist.

\myspacea\myspacea

\section{Preliminaries}

\myspacea

In this paper, the sets of all positive integers, non-negative
integers, real numbers, positive real numbers, and non-negative real
numbers are denoted by 
$\Nset$, $\Nseto$, $\Rset$, $\Rsetp$, and $\Rsetpo$, respectively. 
For a real number $r\in\Rset$, $\intpart(r)$ denotes its integral part, 
i.e. the largest integer smaller than $r$, and $\fr(r)$ denotes its fractional part, 
i.e. $r-\intpart(r)$.
%
Let $A$ be a finite or countably infinite set. A  
\emph{probability distribution}
on $A$ is a function $f : A \rightarrow \Rsetpo$ such that
\mbox{$\sum_{a \in A} f(a) = 1$}. 
The set of all distributions
on $A$ is denoted by $\dist(A)$.

A \emph{$\sigma$-field} over a set $\Omega$ is a set $\sigmafield \subseteq 2^{\Omega}$
that includes $\Omega$ and is closed under complement and countable union. 
A \emph{measurable space} is a pair $(\Omega,\sigmafield)$ where $\Omega$ is a
set called \emph{sample space} and $\sigmafield$ is a $\sigma$-field over $\Omega$
whose elements are called \emph{measurable sets}. 
Given a measurable space $(\Omega,\sigmafield)$, we say that a function $f : \Omega \to \Rset$ 
is a random variable if the inverse image of any real interval is a measurable set.
A \emph{probability measure} over a measurable space
$(\Omega,\sigmafield)$ is a function $\probm : \sigmafield \rightarrow \Rsetpo$
such that, for each countable collection $\{X_i\}_{i\in I}$ of pairwise
disjoint elements of $\sigmafield$, we have $\probm(\bigcup_{i\in I} X_i) = 
\sum_{i\in I} \probm(X_i)$ and, moreover, $\probm(\Omega)=1$. A 
\emph{probability space} is a triple $(\Omega,\sigmafield,\probm)$, where 
$(\Omega,\sigmafield)$ is a measurable space and $\probm$
is a probability measure over $(\Omega,\sigmafield)$. 
We say that a property $A \subseteq \Omega$ holds 
for \emph{almost all} elements
of a measurable set $Y$ if $\probm(Y) > 0$,
$A \cap Y \in \sigmafield$, and $\probm(A \cap Y \mid Y) = 1$.
Alternatively, we say that $A$ holds \emph{almost surely} for $Y$.


\myspacea

\subsection{Generalized semi-Markov processes}
\label{sec-GSMP}\myspacea
Let $\events$ be a finite set of \emph{events}. To every $e \in \events$ we associate the \emph{lower bound} 
$\ell_e \in \Nseto$ and the \emph{upper bound} $u_e \in \Nset \cup \{\infty\}$ of its delay.
We say that $e$ is a \emph{fixed-delay} event if $\ell_e = u_e$, and a \emph{variable-delay} event if $\ell_e < u_e$.
Furthermore, we say that a variable-delay event $e$ is \emph{bounded} if 
$u_e \neq \infty$, and \emph{unbounded}, otherwise.
To each variable-delay event $e$ we assign a \emph{density function} $f_e : \Rset \rightarrow \Rset$ such that
$\int_{\ell_e}^{u_e} f_e(x)\, \de{x} = 1$. We assume 
$f_e$ to be positive and continuous on the whole $[\ell_e,u_e]$ or $[\ell_e,\infty)$ if $e$ is 
bounded or unbounded, respectively, and zero elsewhere. 
We require that $f_e$ have finite expected value, i.e.~$\int_{\ell_e}^{u_e} x \cdot f_e(x)\, \de{x} < \infty$.
 
\begin{definition}
  A \emph{generalized semi-Markov process} is a tuple 
  $(\states,\events,\sched,\occur,\distribution_0)$ where 
\begin{itemize}
 \item $\states$ is a finite set of \emph{states}, 
 \item $\events$ is a finite set of \emph{events},
 \item $\sched : \states \rightarrow 2^{\events}$ assigns to each state $s$ a set of events $\sched(s) \neq \emptyset$
  \emph{scheduled} to occur in $s$,
 \item $\occur : \states \times 2^{\events} \rightarrow \dist(\states)$ is the \emph{successor} function, i.e. 
  assigns a probability distribution specifying the successor state
  to each state and set of events that occur simultaneously in this state, and
 \item $\distribution_0 \in \dist(\states)$ is the \emph{initial distribution}.
\end{itemize}
\end{definition}
A \emph{configuration} is a pair $(s,\val)$ where $s \in
S$ and $\val$ is a {\it valuation} which assigns to every event $e\in
\sched(s)$ the amount of time that elapsed since the event $e$ was
scheduled.%
\footnote{Usually, the valuation is defined to store the time left before the event appears. However, our definition is equivalent and more convenient for the general setting where both bounded and unbounded events appear.}
 For convenience, we define $\nu(e) = \bottom$ whenever $e\not\in
\sched(s)$, and we denote by $\nu(\last)$ the amount of time spent in the
previous configuration (initially, we put $\nu(\last)=0$). When a set of events $E$ occurs
and the process moves from $s$ to a state $s'$, the valuation of \emph{old} events of
$\sched(s)\smallsetminus \sched(s')$ is discarded to $\bottom$, the
valuation of each \emph{inherited} event of $(\sched(s')\cap
\sched(s))\smallsetminus E$ is increased by the time spent in $s$, and the
valuation of each \emph{new} event of $(\sched(s')\smallsetminus
\sched(s))\cup (\sched(s')\cap E)$ is set to $0$.

We illustrate the dynamics of GSMP on the example of
Figure~\ref{fig-GSMP-intro}. Let the bounds of the fixed-delay events
$roundtrip\_d$, $polling\_d$, and $stable\_d$ be $1$, $90$, and $100$,
respectively. 
We start in the state \textit{Idle}, i.e.~in
the configuration $(\textit{Idle},((polling\_d,0),(stable\_d,0),(\last,0)))$
denoting that $\nu(polling\_d)=0$, $\nu(stable\_d)=0$, $\nu(\last)=0$, and
$\bottom$ is assigned to all other events. After 90 time units, the event
$polling\_d$ occurs and we move to
$(\textit{Init},((query,0),(stable\_d,90),(\last,90)))$. Assume that the
event $query$ occurs in the state \textit{Init} after $0.6$ time units and
we move to 
$(\textit{Q-sent},((response,0),(roundtrip\_d,0),(stable\_d,90.6),(\last,0.6)))$
and so forth. 


A formal semantics of GSMP is usually defined in terms of general state-space Markov chains (GSSMC, see, e.g., \cite{MT:book}).
A GSSMC is a stochastic process $\Phi$ over a measurable state-space $(\configs, \configsfield)$
whose dynamics is determined by an initial measure $\initmeasure$ on $(\configs, \configsfield)$ and a \emph{transition kernel} $\kernel$ which specifies one-step transition probabilities.\footnote{
Precisely, transition kernel
is a function $\kernel: \configs \times \configsfield \to[0,1]$ such that
$\kernel(z,\cdot)$ is a probability measure over $(\configs,\configsfield)$ for
each $z\in\configs$; and $\kernel(\cdot,A)$ is a measurable function for each $A\in\configsfield$.}
A~given GSMP induces a GSSMC whose state-space consists of all configurations, 
the initial measure $\initmeasure$ is induced by $\distribution_0$ in a natural way, 
and the transition kernel is determined by the dynamics of GSMP described above. Formally,
\begin{itemize}
\item $\configs$ 
  is the set of all configurations, and $\configsfield$ is a $\sigma$-field
  over $\configs$ induced by the discrete topology over $S$ and the Borel
  $\sigma$-field over the set of all valuations;
\item the initial measure $\initmeasure$ allows to start in configurations
  with zero valuation only, i.e. for $A\in\configsfield$ we have
  $\initmeasure(A) = \sum_{s\in \mathit{Zero}(A)} \distribution_0(s)$ where
  $\mathit{Zero}(A) = \{ s \in S \mid (s,\vec{0}) \in A \}$;
\item the transition kernel $\kernel(z,A)$ describing the probability to move in 
one step from a configuration $z = (s,\nu)$ to any configuration in a set $A$ is defined as follows. 
It suffices to consider $A$ of the form $\{s'\} \times X$ where $X$ is a measurable
set of valuations. Let $V$ and $F$ be the sets of
  variable-delay and fixed-delay events, respectively, that are scheduled in $s$.
  Let $F' \subseteq F$ be the set of fixed-delay events that can occur as first among 
 the fixed-delay event enabled in $z$, i.e. that have in $\nu$ the minimal remaining time $u$.  
Note that two variable-delay events occur simultaneously with probability zero.
  Hence, we consider all combinations of $e\in V$ and $t\in\Rsetpo$ stating
  that
  \begin{equation*}
    P(z,A) = 
    \begin{cases}
      \sum_{e\in V} \int_0^\infty \hit(\{e\},t) \cdot \win(\{e\},t)\; \de{t}
      & \text{if $F = \emptyset$} \\
      \sum_{e\in V} \int_0^{u\phantom{tt}} \hit(\{e\},t) \cdot
      \win(\{e\},t)\; \de{t} + \hit(F',u)\cdot\win(F',u) & \text{otherwise,}
    \end{cases}
  \end{equation*}
  where the term $\hit(E,t)$ denotes the conditional probability of hitting
  $A$ under the condition that $E$ occurs at time $t$ and the term
  $\win(E,t)$ denotes the probability (density) of $E$ occurring at time $t$.
  Formally, $$\hit(E,t) = \occur(s,E)(s') \cdot \indicator{\nu' \in X} $$
  where $\indicator{\nu' \in X}$ is the indicator function and 
  $\nu'$ is the valuation after the transition, i.e.~$\nu'(e)$ is
  $\bottom$, or $\nu(e)+t$, or $0$ for each old, or inherited, or new event
  $e$, respectively; and $\nu'(\last) = t$.  The most complicated part is
  the definition of $\win(E,t)$ which intuitively corresponds to the
  probability that $E$ is the set of events ``winning'' the competition
  among the events scheduled in~$s$ at time $t$. First, we define a
  ``shifted'' density function $f_{e \mid \nu(e)}$ that takes into account that
  the time $\nu(e)$ has already elapsed. Formally, for a variable-delay event $e$
  and any elapsed time $\nu(e) < u_e$, we define
  \[
  f_{e \mid \nu(e)}(x) = \frac{f_e(x+\nu(e))}{\int_{\nu(e)}^\infty f_e(y) \; \de{y}} \qquad \text{if
    $x \geq 0$.}
  \]
  Otherwise, we define $f_{e\mid \nu(e)}(x) = 0$. The denominator scales the
  function so that $f_{e\mid \nu(e)}$ is again a density function. Finally,  

  \begin{equation*}
    \win(E,t) = 
    \begin{cases}
      f_{e \mid \nu(e)}(t) \cdot \prod_{c\in V \setminus E} \int_t^\infty
      f_{c \mid \nu(c)}(y) \; \de{y}
      & \text{if $E =\{e\} \subseteq V$} \\
      \prod_{c\in V} \int_t^\infty f_{c \mid \nu(c)}(y) \; \de{y} &
      \text{if $E = F' \subseteq F$}\\
      0 &
      \text{otherwise.}
    \end{cases}
  \end{equation*}
  \end{itemize}
A \emph{run} of the Markov chain is an infinite sequence $\sigma = z_0 \;
z_1 \; z_2 \cdots$ of configurations.  The Markov chain is defined on the
probability space $(\allruns, \sigmafield, \probm)$ where $\allruns$ is the
set of all runs, $\sigmafield$ is the product $\sigma$-field
$\Productfield_{i=0}^{\infty} \configsfield$, and $\probm$ is the unique
probability measure such that for every finite sequence $A_0,\cdots,A_n \in
\configsfield$ we have that
\begin{equation*}\label{eq:gssmc}
  \probm(\gssmc_0 {\in} A_0, \cdots,\gssmc_n {\in} A_n) = 
  \int\limits_{z_0 \in A_0}\!\cdots \int\limits_{z_{n-1} \in A_{n-1}}
  \initmeasure(\de{z_0})\cdot \kernel(z_0,\de{z_1}) \cdots
  \kernel(z_{n-1},A_n)
\end{equation*}
where each $\gssmc_i$ is the $i$-th projection of an element in $\allruns$
(the $i$-th configuration of a run).

Finally, we define an $m$-step transition kernel $\kernel^m$ inductively as
$\kernel^1(z,A) = \kernel(z,A)$ and $ \kernel^{i+1}(z,A) = \int_{\Gamma}
\kernel(z,\de{y}) \cdot \kernel^i(y,A)$.

\myspacea
\subsection{Frequency measures}
Our attention focuses on frequencies of a fixed state $\measured \in S$ in the runs of the Markov chain. 
Let $\sigma = (s_0,\nu_0) \; (s_1,\nu_1) \cdots$ be a run. We define
\[
\qquad \qquad 
 \dme(\sigma) = \lim_{n\rightarrow \infty} \frac{\sum_{i=0}^n \delta(s_i)}{n} 
\qquad \qquad
 \cme(\sigma) = \lim_{n\rightarrow \infty} \frac{\sum_{i=0}^n \delta(s_i) \cdot \nu_{i+1}(\last)}{\sum_{i=0}^n \nu_{i+1}(\last)}  
\]
where $\delta(s_i)$ is equal to $1$ when $s_i = \measured$, and $0$ otherwise. 
We recall that $\nu_{i+1}(\last)$ is the time spent in state $s_i$ before moving to $s_{i+1}$.
We say that the random variable $\dme$ or $\cme$ is \emph{well-defined} for a run $\sigma$ if
the corresponding limit exists for $\sigma$. Then, $\dme$ corresponds to the frequency of discrete visits to the state $\measured$ and $\cme$ corresponds to the ratio of time spent in the state $\measured$.

\myspacea
\subsection{Region graph}
In order to state the results in a simpler way, we introduce the \emph{region graph},
a standard notion from the area of timed automata~\cite{AD:Timed-Automata}. It is a 
finite partition of the uncountable set of configurations. 
First, we define the region relation $\region$. For $a,b \in \Rset$, we say that $a$ and $b$ \emph{agree on integral part}
if $\intpart(a) = \intpart(b)$ and neither 
or both $a$, $b$ are integers. Further, we set the bound $B=\max\big(\{\ell_e,u_e\mid e\in\events\}\setminus\{\infty\}\big)$.
Finally, we put $(s_1,\nu_1) \region (s_2,\nu_2)$ if 
\begin{itemize}
\item $s_1 = s_2$;
\item for all $e\in \sched(s_1)$ we have that $\nu_1(e)$ and $\nu_2(e)$
  agree on integral parts or are both greater than $B$;
\item for all $e, f\in\sched(s_1)$ with $\nu_1(e)\leq B$ and $\nu_1(f)\leq B$ we have that $\fr(\nu_1(e)) \leq
  \fr(\nu_1(f))$ iff $\fr(\nu_2(e)) \leq \fr(\nu_2(f))$.
\end{itemize}
Note that $\region$ is an equivalence with finite index. The equivalence
classes of $\region$ are called \emph{regions}. 
We define a finite \emph{region graph} 
$\regiongraph = (V,E)$ where the set of vertices $V$ is the set of regions and for every pair of regions $R,R'$ there is an edge $(R,R') \in E$ iff \mbox{$\kernel(z,R') > 0$} for some $z \in R$.
The construction is correct because all states in the same region have the
same one-step qualitative behavior (for details, see\appref{app:region-graph}).



\myspacea

\section{Two fixed-delay events}
\label{sec-results}
\myspacea 
Now, we explain in more detail what problems can be caused by
fixed-delay events. We start with an example of a GSMP with two fixed-delay
events for which it is not true that the variables $\dme$ and $\cme$ are
well-defined for almost all runs. Then we show some other examples of GSMP
with fixed-delay events that disprove some results from literature.  In the
next section, we provide positive results when the number and type of
fixed-delay events are limited.

\myspaceab

\subsection*{When  the frequencies $\dme$ and $\cme$ are not well-defined}
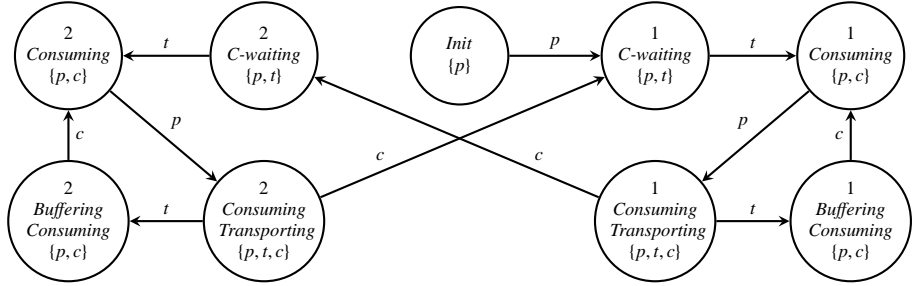
\begin{figure}[t]\myspaceb
  \centering
   
\tikzstyle{max}=[thick,draw,minimum size=1.4em,inner sep=0em]
\tikzstyle{min}=[diamond,thick,draw,minimum size=2em,%
    inner sep=0em]
\tikzstyle{ran}=[circle,thick,draw,text width=12mm,text centered,minimum size=2em,%
    inner sep=0em]
\tikzstyle{act}=[circle,thick,draw,fill,minimum size=.7em,%
    inner sep=0em]
\tikzstyle{mc}=[rounded corners,thick,draw,minimum size=1.4em,%
    inner sep=.5ex]
\tikzstyle{tran}=[thick,draw,->,>=stealth]
\tikzstyle{loop left}=[tran, to path={.. controls +(150:.7)
    and +(210:.7) .. (\tikztotarget) \tikztonodes}]
\tikzstyle{loop right}=[tran, to path={.. controls +(30:.7)
    and +(330:.7) .. (\tikztotarget) \tikztonodes}]
\tikzstyle{loop upright}=[tran, to path={.. controls +(70:.9)
    and +(30:.9) .. (\tikztotarget) \tikztonodes}]
\tikzstyle{loop upleft}=[tran, to path={.. controls +(110:.9)
    and +(150:.9) .. (\tikztotarget) \tikztonodes}]
\tikzstyle{loop above}=[tran, to path={.. controls +(60:.7)
    and +(120:.7) .. (\tikztotarget) \tikztonodes}]
\tikzstyle{loop below}=[tran, to path={.. controls +(240:.7)
    and +(300:.7) .. (\tikztotarget) \tikztonodes}]

 \begin{tikzpicture}[x=1.3cm,y=1.1cm],font=\scriptsize]
\node (init) at (0,0) [ran] { \textit{Init}\\ $\{p\}$};
\node (cwait) at (-2,0) [ran] {2\\\textit{C-waiting}\\ $\{p,t\}$};
\node (C) at (-4,0) [ran] {2\\\textit{Consuming}\\ $\{p,c\}$};
\node (CT) at (-2,-2) [ran] {2\\\textit{Consuming}\\ \textit{Transporting}\\ $\{p,t,c\}$};
\node (BC) at (-4,-2) [ran] {2\\\textit{Buffering}\\ \textit{Consuming} \\$\{p,c\}$};

\node (cwait') at (2,0) [ran] {1\\\textit{C-waiting}\\ $\{p,t\}$};
\node (C') at (4,0) [ran] {1\\\textit{Consuming}\\ $\{p,c\}$};
\node (CT') at (2,-2) [ran] {1\\\textit{Consuming}\\ \textit{Transporting}\\ $\{p,t,c\}$};
\node (BC') at (4,-2) [ran] {1\\\textit{Buffering}\\ \textit{Consuming} \\$\{p,c\}$};

 \draw [tran,rounded corners] (init) 
  edge 
  node[above] {$p$} 
  (cwait');

 \draw [tran,rounded corners] (cwait) 
  edge 
  node[above] {$t$} 
  (C);

 \draw [tran,rounded corners] (C) 
  edge 
  node[above right] {$p$} 
  (CT);

 \draw [tran,rounded corners] (CT) 
  edge 
  node[above] {$t$} 
  (BC);

 \draw [tran,rounded corners] (BC) 
  edge 
  node[right] {$c$} 
  (C);

 \draw [tran,rounded corners] (CT) 
  edge 
  node[above left, near start] {$c$} 
  (cwait');


 \draw [tran,rounded corners] (cwait') 
  edge 
  node[above] {$t$} 
  (C');

 \draw [tran,rounded corners] (C') 
  edge 
  node[above left] {$p$} 
  (CT');

 \draw [tran,rounded corners] (CT') 
  edge 
  node[above] {$t$} 
  (BC');

 \draw [tran,rounded corners] (BC') 
  edge 
  node[left] {$c$} 
  (C');

 \draw [tran,rounded corners] (CT') 
  edge 
  node[above right, near start] {$c$} 
  (cwait);

\end{tikzpicture}
  \caption{A GSMP of a producer-consumer system. The events $p$, $t$, and
$c$ model that a packet production, transport, and consumption is finished,
respectively. 
Below each state label, there is the set of scheduled events. 
The fixed-delay events 
 $p$ and $c$ have $l_p=u_p=l_c=u_c=1$ and the uniformly distributed variable-delay event
    $t$ has $l_t=0$ and $u_t=1$. }
  \label{fig-producer_consumer}\myspaceab
\end{figure}

In Figure~\ref{fig-producer_consumer}, we show an example of a GSMP with two
fixed-delay events and one variable-delay event for which it is not true
that the variables $\dme$ and $\cme$ are well-defined for almost all runs.
It models the following producer-consumer system. We use three components -- a
producer, a transporter and a consumer
of packets. 
The components work in parallel but each component
can process (i.e.~produce, transport, or consume) at most one packet at a
time.

Consider the following time requirements: each packet production takes \emph{exactly} 1~time unit, 
each transport takes \emph{at most} 1~time unit, and each consumption takes again \emph{exactly} 1~time
unit. As there are no limitations to block the producer, it is working for
all the time and new packets are produced precisely each time unit.  As the
transport takes shorter time than the production, every new packet is
immediately taken by the transporter and no buffer is needed at this place.
When a packet arrives to the consumer, the consumption is started
immediately if the consumer is waiting%
; otherwise, the packet is stored into a buffer. When the consumption
is finished and the buffer is empty, the consumer waits; otherwise, a new consumption
starts immediately.

In the GSMP in Figure~\ref{fig-producer_consumer},
the consumer has two modules -- one is in operation and the
other idles at a time -- when the consumer enters the waiting state, it
switches the modules. The labels $1$ and $2$ denote which module of the
consumer is in operation.


One can easily observe that the consumer enters the waiting state (and
switches the modules) if and only if the current transport takes more time
than it has ever taken. As the transport time is bounded by $1$, it gets
harder and harder to break the record. As a result, the system stays in the
current module on average for longer time than in the previous
module. Therefore, due to the successively prolonging stays in the modules,
the frequencies for 1-states and 2-states oscillate. For precise
computations, see\appref{app:ex:not-well-def}. We conclude the above
observation by the following theorem.

\begin{theorem}\label{thm-two-events-no-limit}
  There is a GSMP (with two fixed-delay events and one variable-delay event)
  for which it is \emph{not} true that the variables $\cme$ and $\dme$ are
almost surely
  well-defined.
\end{theorem}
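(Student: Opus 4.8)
The plan is to show that the single GSMP of Figure~\ref{fig-producer_consumer} already refutes the conclusion, by proving that for almost every run both $\dme$ and $\cme$ have $\liminf$ strictly below $\limsup$. I would first pin down the deterministic skeleton forced by the two fixed-delay events. Since $p$ and $c$ both have delay exactly $1$ and $p$ is scheduled in every state, productions finish at the integer times $1,2,3,\dots$, and every consumption, once started, finishes exactly one time unit later. Writing $t_n\in[0,1)$ for the independent, uniformly distributed transport time of the $n$-th packet, packet $n$ reaches the consumer at $a_n=n+t_n$, and the consumer behaves as a single server with unit service time and arrivals $a_n$, so its departure times obey $d_n=\max(d_{n-1},a_n)+1$.

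The key structural lemma is that the consumer empties its buffer (equivalently, the GSMP takes the module-switching edge from a \emph{Consuming-Transporting} state to a \emph{C-waiting} state) exactly at the indices $n$ at which $t_n$ exceeds all earlier transport times. I would prove this by induction over busy periods: if a busy period starts by serving packet $k$ at its arrival $a_k=k+t_k$, then as long as the following transport times stay $\le t_k$ one gets $d_m=(m+1)+t_k$ and $a_{m+1}=(m+1)+t_{m+1}\le d_m$, so the server never idles; the first $m$ with $t_m>t_k$ has $a_m>d_{m-1}$, the server idles, the module switches, and the new busy period starts at packet $m$ with baseline $t_m$, which is a new record. Hence the switch indices are precisely the record indices $R_1<R_2<\cdots$ of $(t_n)$, module $1$ runs during the odd busy periods and module $2$ during the even ones, the $j$-th busy period consumes $L_j=R_{j+1}-R_j$ packets, and it lasts $L_j+O(1)$ time units (service is back-to-back and the trailing wait is $<1$).

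Both frequencies are therefore governed, up to bounded multiplicative factors (each packet contributes a bounded number of transitions and $\approx 1$ time unit), by the module-$1$ share $\varphi_J=(\sum_{j\le J,\, j\text{ odd}}L_j)/(\sum_{j\le J}L_j)$. Since $\sum_{j\le J}L_j=R_{J+1}-1$, evaluating at the ends of busy periods gives $\varphi_J\ge 1-R_J/R_{J+1}$ when $J$ is odd and $\varphi_J\le R_J/R_{J+1}$ when $J$ is even. It thus suffices to show that for every $K$ the event $\{R_{J+1}>K\,R_J\}$ occurs for infinitely many odd $J$ and for infinitely many even $J$ almost surely; this forces $\limsup\varphi_J=1$ and $\liminf\varphi_J=0$, and hence non-existence of the limits defining $\dme$ and $\cme$.

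For the last step I would invoke classical record theory. The record indicators are independent with $\probm(\text{index }n\text{ is a record})=1/n$, so there are infinitely many records and $\log R_j/j\to 1$ almost surely; in particular the sums are dominated by their last term, justifying $\sum_{j\le J}L_j\sim R_{J+1}$. Conditioned on the history up to a record time $R_j$, the current maximum is $M_{R_j}=t_{R_j}$ and the inter-record gap is geometric, giving $\probm(R_{j+1}>K R_j\mid \mathcal{F}_{R_j})=M_{R_j}^{\,(K-1)R_j}\approx e^{-(K-1)W_j}$ with $W_j=R_j(1-M_{R_j})$ tight. As these conditional probabilities are bounded away from $0$ infinitely often within each parity class, their sums over odd and over even $j$ diverge, and the conditional (Lévy) Borel--Cantelli lemma yields the required infinitely-often statements. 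I expect this probabilistic core, namely the heavy-tailed behaviour of consecutive record-time ratios together with the simultaneous infinitely-often control along both parities (handled through conditional Borel--Cantelli rather than plain independence), to be the main obstacle; the queueing and record reduction above is then routine.
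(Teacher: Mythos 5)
Your queueing/record reduction is sound and is in fact the same structural mechanism the paper uses: module switches occur exactly at the record indices $R_1<R_2<\cdots$ of the i.i.d.\ uniform transport times, busy period $j$ serves $L_j=R_{j+1}-R_j$ packets in roughly $L_j$ time units, and non-existence of $\dme$ and $\cme$ reduces to showing that, for every $K$, the event $R_{J+1}>K\,R_J$ occurs infinitely often at odd $J$ and infinitely often at even $J$. (In the paper's vocabulary your busy periods are ``phases'' and $1-M_{R_j}$ is the ``distance''; the paper organizes the estimates via ``strong phases'' and ``half-lives,'' proves a uniform positive lower bound on the probability that the current phase dominates all previous ones, and concludes via a reverse-Fatou argument that oscillation happens on a set of runs of positive measure --- which is all the theorem requires.)

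The genuine gap is in your last step. The inference ``$W_j=R_j(1-M_{R_j})$ is tight, hence the conditional probabilities $\probm(R_{j+1}>K R_j\mid\mathcal{F}_{R_j})\approx e^{-(K-1)W_j}$ are bounded away from zero infinitely often within each parity class almost surely, hence the conditional Borel--Cantelli sums diverge'' is not valid. Tightness is a statement about marginal laws: it rules out $W_j\to\infty$ in probability, but it does \emph{not} give $\liminf_j W_j<\infty$ almost surely (it only gives $\probm(\liminf_j W_j<\infty)>0$), and a fortiori says nothing about each parity class separately. What you actually need is recurrence of the process $(W_j)$. This is provable but is real work: from the exact one-step recursion $W_{j+1}=\bigl(W_j+(1-M_{R_j})G_j\bigr)U_{j+1}$, where $G_j$ is geometric with parameter $1-M_{R_j}$ and $U_{j+1}$ is uniform on $(0,1)$ independent of $G_j$, one shows that $\log W_j$ has uniformly negative drift when $W_j$ is large, hence $(W_j)$ returns to a compact set infinitely often almost surely, and a one-step minorization from that compact set transfers the returns to both parities. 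Note that no shortcut via a uniform minorization exists, since $\probm(W_{j+1}\le C\mid\mathcal{F}_{R_j})$ is of order $C/W_j$ when $W_j$ is large, so ``tight'' cannot be upgraded for free. Alternatively, you can keep exactly what tightness does give --- positive probability that $W_j$ stays bounded along some subsequence --- add one conditional Borel--Cantelli step to propagate bounded $W_j$ from one parity to the other, and conclude oscillation on a set of \emph{positive measure}; that weaker conclusion already proves the theorem (which only denies almost-sure well-definedness) and is precisely the strength of the paper's own proof. As written, however, your claimed almost-sure oscillation does not follow from the tools you invoke.
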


\myspaceb

\subsection*{Counterexamples}

In~\cite{ACD:model-checking-real-time,ACD:verifying-automata-real-time}
there are algorithms for GSMP model checking based on the region
construction. They rely on two crucial statements of the papers: 
\begin{enumerate}
\item Almost all runs end in some of the bottom strongly connected components
  (BSCC) of the region graph.
\item Almost all runs entering a BSCC 
 visit all regions of the component infinitely often.
\end{enumerate}

Both of these statements are true for finite state Markov chains. In the
following, we show that neither of them has to be valid for region graphs of GSMP.

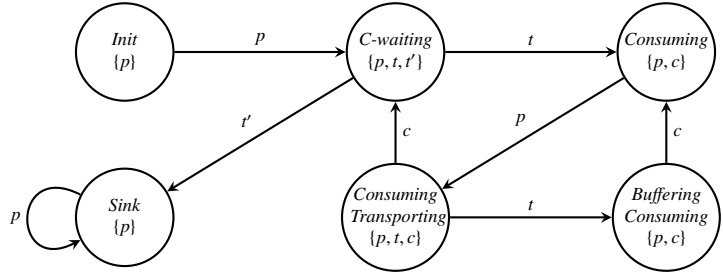
\begin{figure}[t]
  \centering
   
\tikzstyle{max}=[thick,draw,minimum size=1.4em,inner sep=0em]
\tikzstyle{min}=[diamond,thick,draw,minimum size=2em,%
    inner sep=0em]
\tikzstyle{ran}=[circle,thick,draw,text width=12mm,text centered,minimum size=2em,%
    inner sep=0em]
\tikzstyle{act}=[circle,thick,draw,fill,minimum size=.7em,%
    inner sep=0em]
\tikzstyle{mc}=[rounded corners,thick,draw,minimum size=1.4em,%
    inner sep=.5ex]
\tikzstyle{tran}=[thick,draw,->,>=stealth]
\tikzstyle{loop left}=[tran, to path={.. controls +(140:1.1)
    and +(220:1.1) .. (\tikztotarget) \tikztonodes}]
\tikzstyle{loop right}=[tran, to path={.. controls +(30:.7)
    and +(330:.7) .. (\tikztotarget) \tikztonodes}]
\tikzstyle{loop upright}=[tran, to path={.. controls +(70:.9)
    and +(30:.9) .. (\tikztotarget) \tikztonodes}]
\tikzstyle{loop upleft}=[tran, to path={.. controls +(110:.9)
    and +(150:.9) .. (\tikztotarget) \tikztonodes}]
\tikzstyle{loop above}=[tran, to path={.. controls +(60:.9)
    and +(120:.9) .. (\tikztotarget) \tikztonodes}]
\tikzstyle{loop below}=[tran, to path={.. controls +(240:.7)
    and +(300:.7) .. (\tikztotarget) \tikztonodes}]

 \begin{tikzpicture}[x=1.8cm,y=1.1cm],font=\scriptsize]
\node (init) at (0,0) [ran] { \textit{Init}\\ $\{p\}$};
\node (cwait) at (2,0) [ran] {\textit{C-waiting}\\ $\{p,t,t'\}$};
\node (C) at (4,0) [ran] {\textit{Consuming}\\ $\{p,c\}$};
\node (CT) at (2,-2) [ran] {\textit{Consuming}\\ \textit{Transporting}\\ $\{p,t,c\}$};
\node (BC) at (4,-2) [ran] {\textit{Buffering}\\ \textit{Consuming} \\$\{p,c\}$};
\node (restart) at (0,-2) [ran] { \textit{\restart}\\ $\{p\}$};

 \draw [tran,rounded corners] (init) 
  edge 
  node[above] {$p$} 
  (cwait);

 \draw [tran,rounded corners] (cwait) 
  edge 
  node[above left] {$t'$} 
  (restart);

 \draw [tran,rounded corners] (restart) 
  edge [loop left] 
  node[left] {$p$} 
  (restart);

 \draw [tran,rounded corners] (cwait) 
  edge 
  node[above] {$t$} 
  (C);

 \draw [tran,rounded corners] (C) 
  edge 
  node[above left] {$p$} 
  (CT);

 \draw [tran,rounded corners] (CT) 
  edge 
  node[above] {$t$} 
  (BC);

 \draw [tran,rounded corners] (BC) 
  edge 
  node[right] {$c$} 
  (C);

 \draw [tran,rounded corners] (CT) 
  edge 
  node[right] {$c$} 
  (cwait);

\end{tikzpicture}
  \caption{A GSMP with two fixed-delay events $p$ and $c$ (with
      $l_p=u_p=l_c=u_c=1$), a uniformly distributed variable-delay events $t$, $t'$
      (with $l_t=l_{t'}=0$ and $u_t=u_{t'}=1$).}
  \label{fig-producer_consumer_alur}\myspaceb
\end{figure}
Let us consider the GSMP depicted in
Figure~\ref{fig-producer_consumer_alur}.
This is a
producer-consumer model similar to the previous example but we have only one
module of the consumer here. 
Again, 
entering the state \emph{C-waiting} indicates that the current
transport takes more time than it has ever taken. 
In the state \textit{C-waiting}, an additional event $t'$ can occur and move the system into a state
\textit{\restart}.
%
One can intuitively observe that 
we enter the state \emph{C-waiting} less and less often
and stay there for shorter and shorter time. 
Hence, the probability that the event
$t'$ occurs in the state \emph{C-waiting} is decreasing 
during the run. 
For precise computations proving the following claim, see\appref{app:ex:bscc}.

\begin{claim}
The probability to reach \emph{\restart} from \emph{Init} is strictly less than $1$.
\end{claim}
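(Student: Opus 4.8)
The plan is to prove the equivalent statement $\probm(\text{never reach } \restart) > 0$. Since \emph{\restart} can be entered only by the event $t'$ occurring in \emph{C-waiting}, a run avoids \emph{\restart} forever exactly when the transport event $t$ beats $t'$ at every visit to \emph{C-waiting}. So I would first reduce the claim to estimating, across all visits to \emph{C-waiting}, the probability that $t$ wins each of the corresponding races.

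The core of the argument is to make the informal record-breaking picture precise by tracking the three clocks through one pass of the region graph (the detailed bookkeeping is deferred to \appref{app:ex:bscc}). The point is that $p$ acts as a metronome firing exactly at the integer times, a consumption $c$ always lasts exactly one time unit once started, and the successive transport durations form an i.i.d.\ sequence $T_1, T_2, \dots$ uniform on $[0,1]$. One maintains the invariant that whenever the process re-enters \emph{Consuming-Transporting} at an integer tick, the residual consumption time equals the current running maximum $M = \max_i T_i$ of the past transports; consequently the process moves to \emph{C-waiting} precisely when a transport breaks this record, and the residual transport time it carries into \emph{C-waiting}---the very quantity $t'$ must undercut---is exactly the increment of the record. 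Because $t' \sim U[0,1]$ is drawn afresh and $p$ can never win in \emph{C-waiting} (its residual $1-M$ always exceeds the residual $T-M$ of $t$), the conditional probability that $t'$ fires first at the $k$-th record, given the whole sequence $(T_i)$, equals the increment $\Delta_k := M_{n_k} - M_{n_{k-1}}$, where $n_k$ is the index of the $k$-th record-breaking transport and $M_{n_0}:=0$.

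The decisive observation is that these increments telescope: $\sum_k \Delta_k = \sup_i T_i \le 1$ deterministically, since the uniforms never exceed $1$. Conditionally on $(T_i)$ the fresh draws of $t'$ at distinct records are independent, so the conditional probability of never reaching \emph{\restart} is the product $\prod_k (1 - \Delta_k)$, which is strictly positive whenever $\sum_k \Delta_k < \infty$ and no single increment equals $1$; both hold almost surely (the sum is at most $1$, and a single increment of value $1$ has probability zero). Integrating over the transport times gives $\probm(\text{never reach } \restart) = \mathbb{E}\big[\prod_k (1 - \Delta_k)\big] > 0$, which is the claim. The same conclusion holds trivially if only finitely many records occur, in which case the product is finite and positive.

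I expect the middle step to be the real obstacle. Establishing the invariant---that the residual consumption time seen at each tick equals the running maximum, so that the time carried into \emph{C-waiting} is exactly the record increment and not merely an upper bound for it---requires a careful case distinction on the \emph{Buffering-Consuming} branch, where a fast transport is queued while the reference timing is inherited from an earlier, longer transport, together with a check that everything stays phase-locked to the integer ticks of $p$. Once this reduction to the running maximum of i.i.d.\ uniforms is in place, the remainder is a soft Borel--Cantelli / infinite-product argument.
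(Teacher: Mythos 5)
Your proof is correct, and it takes a genuinely different route from the paper's own proof in\appref{app:ex:bscc}. Both arguments rest on the same dynamical picture---$p$ and $c$ tick at integer times, \emph{C-waiting} is entered exactly when a transport breaks the running record $M=\max_i T_i$, and the fresh $t'$ races against a quantity tied to that record---but they exploit it differently. The paper uses only an \emph{upper bound}: the residual of $t$ in \emph{C-waiting} is at most the ``distance'' $1-M$, which does not obviously sum over visits, so the paper compensates with combinatorial bookkeeping. It groups phases (inter-record segments) into half-lives, notes the distance is at most $2^{-i}$ throughout the $i$-th half-life, shows that outside a set of measure at most $1/3$ every half-life $i$ has at most $2i$ phases, and lower-bounds survival by a constant times $\prod_i (1-2^{-i})^{2i}$, whose positivity needs a separate estimate ($\ln\frac{2^i}{2^i-1}\in\mathcal{O}(1/i^3)$). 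You instead pin down the failure probability at the $k$-th record \emph{exactly} as the record increment $\Delta_k$, so the deterministic telescoping $\sum_k \Delta_k=\sup_i T_i\le 1$ makes the conditional survival probability $\prod_k(1-\Delta_k)$ almost surely positive with no control whatsoever over how many records occur or how fast the distance decays. Your insistence that the invariant give equality (not merely a bound) is precisely what the telescoping requires, and it is exactly where the two proofs diverge: the paper trades the equality away and pays with half-lives. What each buys: yours is shorter and yields the clean identity that the probability of never reaching \emph{\restart} equals $\mathbb{E}\bigl[\prod_k(1-\Delta_k)\bigr]$; the paper's is more quantitative (it extracts an explicit uniform bound, $m'>0.009$) and its half-life template is reused for the harder Theorem~\ref{thm-two-events-no-limit}. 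The invariant you defer (residual consumption at each integer tick equals $M$, checked via the \emph{Buffering-Consuming} case split) does hold, and leaving it at that level of detail is comparable to the paper, which likewise asserts without detailed proof that the newly generated distance is uniform on $(0,d)$.
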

The above claim directly implies the following theorem thus disproving statement 1.
\begin{theorem}\label{thm:nespadnu}
  There is a GSMP (with two fixed-delay and two variable delay
  events) where the probability to reach any BSCC of the region graph is strictly smaller than 1.
\end{theorem}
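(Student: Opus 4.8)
The plan is to reduce the theorem to the already-stated Claim by proving that every BSCC of the region graph $\regiongraph$ consists solely of regions of the absorbing state \restart. Once this is established, the event of reaching some BSCC coincides with the event of reaching \restart, so the probability of reaching any BSCC from \emph{Init} is at most the probability of reaching \restart, which the Claim bounds strictly below $1$.

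First I would check that \restart is absorbing and contributes a BSCC. In \restart only the fixed-delay event $p$ is scheduled, so the process merely lets $p$ fire every time unit and immediately reschedules it, giving the self-loop drawn in Figure~\ref{fig-producer_consumer_alur}. Hence the finitely many regions of \restart that are visited after entry form a bottom strongly connected component of $\regiongraph$, and because \restart can never be left, no region of any other state can lie in the same SCC.

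The core step is to show that no region of the producer--consumer part (the states \emph{Init}, \emph{C-waiting}, \emph{Consuming}, \emph{Consuming-Transporting}, \emph{Buffering-Consuming}) belongs to a BSCC, which I would do by exhibiting, for each such region, a path in $\regiongraph$ to \restart. The only exit to \restart is the edge out of \emph{C-waiting} taken by $t'$; this edge is present in $\regiongraph$ because $t'$ is a variable-delay event that is freshly scheduled with valuation $0$ upon entering \emph{C-waiting}, so for a suitable configuration in that region it wins the race against $p$ and $t$ with positive probability. It then remains to argue that from every reachable region of the producer--consumer part one can reach a \emph{C-waiting} region in $\regiongraph$. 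Any region that can reach \restart sits in a non-bottom SCC, since the path to the absorbing \restart must leave that region's SCC, and therefore it lies in no BSCC.

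Putting the pieces together, the set of runs reaching a BSCC equals the set of runs reaching \restart, so by the Claim the probability of reaching any BSCC from \emph{Init} is at most $\probm(\text{reach } \restart) < 1$. The main obstacle is the region-level reachability inside the producer--consumer part: the internal loop \emph{Consuming} $\to$ \emph{Consuming-Transporting} $\to$ \emph{Buffering-Consuming} $\to$ \emph{Consuming} bypasses \emph{C-waiting}, so I must verify that the competing branch \emph{Consuming-Transporting} $\to$ \emph{C-waiting} taken by the consumption event $c$ is eventually enabled, i.e. that every region of this loop can reach a region in which $c$ wins the race over the transport event $t$. This corresponds to a sufficiently slow transport emptying the buffer (an event of positive probability), but proving it cleanly requires tracking the admissible fractional clock orderings through the regions of the loop.
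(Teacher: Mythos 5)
Your proposal is correct and is essentially the paper's own argument: the paper derives the theorem from the Claim with the words ``directly implies,'' leaving implicit precisely the reduction you spell out, namely that the only BSCC of the region graph consists of the \emph{Sink} regions, so reaching a BSCC and reaching \emph{Sink} are the same event. The obstacle you flag at the end does go through cleanly: \emph{Consuming-Transporting} is only ever entered with $\nu(p)=\nu(t)=0$ and $\nu(c)\in(0,1)$ (since $p$ and $t$ are newly scheduled there and $c$ is inherited), so from each such reachable region the event $c$ fires first with probability $\nu(c)>0$, which yields the region-graph edge to \emph{C-waiting} and thence, via $t'$, to \emph{Sink}.
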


Now consider in Figure~\ref{fig-producer_consumer_alur} a transition under the event $p$ from the state \emph{\restart} 
to the state \emph{Init} instead of the self-loop. This turns the whole region graph into a single BSCC.
We prove that the state \emph{\restart} is almost surely visited only finitely often.
Indeed, let $p<1$ be the original probability to reach \emph{\restart} guaranteed by the claim above.
The probability to reach \emph{\restart} from \emph{\restart} again is also
$p$ as the only transition leading from  \emph{\restart} enters the initial configuration. Therefore, the probability to reach \emph{\restart} infinitely often is $\lim_{n\to\infty}p^n=0$. This proves the following theorem. Hence, the statement 2 of~\cite{ACD:model-checking-real-time,ACD:verifying-automata-real-time} is disproved, as well.

\begin{theorem}\label{thm:nevymetam}
  There is a GSMP (with two fixed-delay and two variable delay events) 
with strongly connected region graph and with a region that is reached infinitely often
with probability $0$.
\end{theorem}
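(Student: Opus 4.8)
The plan is to take the modified GSMP described just before the statement---namely the process of Figure~\ref{fig-producer_consumer_alur} in which the self-loop at \emph{Sink} is replaced by a transition under $p$ from \emph{Sink} back to \emph{Init}---and to show that the \emph{Sink} region is visited infinitely often only with probability zero, even though the region graph is strongly connected. First I would isolate the two structural facts on which everything rests. The first is that the modification does not affect the behaviour of the process before it first reaches \emph{Sink}: up to the first passage to \emph{Sink} the original and the modified GSMP are literally the same, so the first-passage probability is exactly the value $p<1$ supplied by the preceding Claim. The second is that \emph{Sink} schedules only the single fixed-delay event $p$ with $\ell_p=u_p=1$; hence the sojourn in \emph{Sink} is deterministic, lasts one time unit, and the ensuing $p$-transition lands the chain in \emph{Init} with $\nu(p)=0$, i.e.\ in a configuration that---up to the bookkeeping coordinate $\nu(\last)$, which does not enter the transition kernel---coincides with the initial configuration $(\mathit{Init},\vec{0})$.

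Next I would exploit this regenerative structure. Because every departure from \emph{Sink} returns the chain to the single configuration $(\mathit{Init},\vec{0})$, the instant of re-entering \emph{Init} after a visit to \emph{Sink} is a regeneration point: the post-departure configuration is a \emph{single} point, so by the (strong) Markov property the future from that instant is distributed exactly as the future from the start, independently of the past. Consequently, given that the chain has just left \emph{Sink}, the probability of reaching \emph{Sink} at least once more is again $p$, and these events are independent across successive visits. Letting $R$ denote any region whose first component is \emph{Sink} (in particular the region entered upon arrival), I would bound the probability that \emph{Sink} is reached at least $n$ times by $p^{n}$, so that the probability of reaching \emph{Sink}, and hence $R$, infinitely often equals $\lim_{n\to\infty}p^{n}=0$.

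Finally I would verify that the region graph of the modified process is strongly connected, so that $R$ indeed lies in a single, genuine BSCC. This is a routine inspection of the finitely many regions: the original graph already connects \emph{Init} to all other regions, and the newly added $p$-edge from \emph{Sink} to \emph{Init} closes the graph into one strongly connected component, exactly as observed in the text. Combining the three ingredients produces a strongly connected region graph in which the region $R$ is reached infinitely often with probability $0$, contradicting statement~2.

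The step I expect to demand the most care is the rigorous justification of the regeneration argument: one must confirm that leaving \emph{Sink} deterministically yields a single configuration (so that the past may legitimately be discarded) and then invoke the strong Markov property at the corresponding return time to obtain honest independence of the successive excursions. The only mild subtlety is checking that the coordinate $\nu(\last)$, which differs from its initial value, plays no role in $\kernel$ and therefore does not obstruct the identification of the post-\emph{Sink} configuration with the initial one.
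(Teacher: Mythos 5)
Your proposal is correct and follows essentially the same route as the paper: modify the GSMP so that \emph{Sink} returns to \emph{Init} under $p$, observe that the return lands in the initial configuration so each excursion independently reaches \emph{Sink} again with probability $p<1$, and conclude that the probability of infinitely many visits is $\lim_{n\to\infty}p^n=0$. Your added care about the regeneration point and the irrelevance of the $\nu(\last)$ coordinate merely makes explicit what the paper asserts in one sentence ("the only transition leading from \emph{Sink} enters the initial configuration").
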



\myspacea

\section{Single-ticking GSMP}
\label{sec:one-results}


%


First of all, motivated by the previous counterexamples, we identify the
behavior of the fixed-delay events that may cause $\dme$ and $\cme$ to be
undefined. The problem lies in fixed-delay events that can immediately
schedule themselves whenever they occur; such an event can occur
periodically like ticking of clocks.  In the example of
Figure~\ref{fig-producer_consumer_alur}, there are two such events $p$ and
$c$. The phase difference of their ticking gets smaller and smaller, causing
the unstable behavior. 




For two fixed-delay events $e$ and $e'$, we say that $e$ \emph{causes} $e'$
if there are states $s$, $s'$ and a set of events $E$ such that
$\occur(s,E)(s')\!>\!0$, $e\in E$, and $e'$ is newly scheduled in $s'$.

\begin{definition}
A GSMP is called \emnice if either there is no fixed-delay event or
there is a strict total order $<$ on fixed-delay events with the least element $e$ (called \emph{ticking} event) such that
whenever $f$ causes $g$ then either $f<g$ or $f=g=e$.
\end{definition}
From now on we restrict 
to \nice GSMP and 
prove our main positive result.

\begin{theorem}\label{thm:main1}
  In \nice GSMP, the random variables $\dme$ and $\cme$ are well-defined for almost every
  run and admit only finitely many values. Precisely, almost every run reaches a BSCC of the region graph and for each BSCC $B$ there are values $d, c \in [0,1]$ such that
  $\dme(\sigma) = d$ and $\cme(\sigma) = c$ for almost all runs $\sigma$ 
  that reach the BSCC $B$.
\end{theorem}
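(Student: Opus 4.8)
The plan is to derive the two parts of the statement — almost-sure reachability of a BSCC and the a.s.\ constancy of $\dme,\cme$ inside each BSCC — from one structural observation about \nice GSMP: a single ticking event forces all fixed-delay events to share a common \emph{phase}. Concretely, the ticking event $e$ has integer delay $d_e=\ell_e=u_e$, so once scheduled at absolute time $t_0$ it occurs exactly at $t_0,t_0+d_e,t_0+2d_e,\dots$, all sharing the fractional part $\fr(t_0)$. By the \nice condition every other fixed-delay event $g$ can be newly scheduled only through a chain of causations that is strictly increasing in $<$ and thus originates in an occurrence of $e$; since all fixed delays are integral, $g$ also occurs only at absolute times with fractional part $\fr(t_0)$. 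I would prove by induction along these causation chains that at every configuration there is a single parameter $\phi\in[0,1)$ — the phase — determining the fractional parts of all scheduled fixed-delay valuations, and that $\phi$ stays constant as long as some fixed-delay event remains scheduled. The point is that, in contrast to the two-event counterexamples of Section~\ref{sec-results}, there is no \emph{second} independent phase able to drift, and this is exactly what keeps the relevant transition probabilities bounded away from $0$.

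For the first part I would show that for every edge $(R,R')\in E$ of the region graph there is a constant $\delta>0$ with $\kernel(z,R')\ge\delta$ for all $z\in R$. Granting the single-phase observation, the kernel $\kernel(z,\cdot)$ for $z\in R$ depends continuously on the finitely many fractional offsets of the variable-delay valuations and on the single phase $\phi$, all ranging over a compact region; since the densities $f_e$ are continuous and positive on their supports, the probability of each qualitatively possible transition attains a positive minimum over this compact set. This is precisely where \nice-ness is essential: with two ticking events the competing deadlines can approach each other and send an exit probability to $0$, as in Figure~\ref{fig-producer_consumer_alur}. Given such uniform lower bounds, a second Borel--Cantelli / L\'evy zero--one argument shows that any SCC of $\regiongraph$ possessing an outgoing edge and visited infinitely often is almost surely eventually left; as $\regiongraph$ is finite, almost every run therefore settles in a BSCC.

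For the second part I fix a BSCC $B$ and analyse runs conditioned on reaching $B$. I would choose a reference region $R_0\in B$ and consider the discrete-time chain $X_0,X_1,\dots$ obtained by sampling the GSSMC at successive entries into $R_0$; since $\phi$ is constant inside $B$, a state of $X$ is a point of a compact continuous domain (the variable-delay fractional offsets together with the fixed phase) attached to $R_0$. I would prove that $X$ is positive Harris recurrent with a unique invariant probability measure $\pi$: recurrence of $R_0$ follows from the first part together with the strong connectedness of $B$ and the uniform lower bounds, while the minorization (Harris) condition follows because after finitely many steps the continuous component acquires a density bounded below on a fixed full-dimensional set, the positivity and continuity of the $f_e$ spreading the variable valuations. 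Uniqueness of $\pi$ then follows from irreducibility of $B$. Finally I would invoke the ergodic theorem for positive Harris chains (e.g.\ \cite{MT:book}), applied in renewal--reward (ratio) form to the rewards accumulated over one $R_0$-to-$R_0$ excursion — the number of visits to $\measured$ and the time spent in $\measured$, against the total step count and total elapsed time. This yields that almost surely on the event of reaching $B$ the prefix ratios defining $\dme$ and $\cme$ converge to constants $d=\mathbb{E}_\pi[\#\measured]/\mathbb{E}_\pi[\#\text{steps}]$ and $c=\mathbb{E}_\pi[\text{time in }\measured]/\mathbb{E}_\pi[\text{total time}]\in[0,1]$, the finiteness of the denominators using the finite expected delays assumed for all events. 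Since there are finitely many BSCCs and almost every run reaches exactly one of them, $\dme$ and $\cme$ are almost surely well-defined with finite range.

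The main obstacle I anticipate is the second part: establishing the minorization condition rigorously — showing that the sampled continuous state genuinely acquires a lower-bounded density despite the deterministic, phase-locked fixed-delay component — and, hand in hand with it, verifying the uniform lower bounds of the first part in full generality rather than only in the clean case where the phase never resets between tick chains.
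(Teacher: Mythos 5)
Your argument for the first part rests on a claim that is false: that for every edge $(R,R')$ of the region graph there is $\delta>0$ with $\kernel(z,R')\ge\delta$ for all $z\in R$. This fails even for GSMP with \emph{no} fixed-delay events at all, so it has nothing to do with a second drifting phase. Take one event $e$ uniform on $[0,1]$ and another event $f$ uniform on $[2,3]$, and let $R$ be the region $[0<e<f<1]$: the probability that $e$ occurs only after the elapsed time of $f$ passes $1$ is about $0.5$ from the valuation $((e,0.2),(f,0.7))$ but about $0.01$ from $((e,0.2),(f,0.21))$, and it tends to $0$ as the two fractional parts approach each other \emph{within the same region}. The compactness step in your proposal is where this breaks down: a region is (up to the integral-part and ordering data) an open polytope of fractional parts, transition probabilities depend continuously on the \emph{differences} of these fractional parts, and the infimum over the region is $0$ because the closure contains configurations with coinciding fractional parts that lie in other regions. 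Single-ticking does not prevent this within one step; what it prevents is the process from being forever trapped near such degenerate configurations. The correct replacement (and the paper's route) is a two-stage argument: from \emph{any} configuration one reaches, in a bounded number $m$ of steps and with probability bounded below by some $q>0$, a $\delta$-\emph{separated} configuration (all fractional parts pairwise at distance $>\delta$ or equal) --- this is Lemma~\ref{lem:delta-sep}, and it is exactly here that the \nice{} hypothesis is used, since non-ticking fixed-delay events have bounded ``lifetime'' and all newly created ones descend from variable-delay occurrences, which can be steered into free slots; and from any $\delta$-separated configuration every region path of length $k$ has probability bounded below by some $p>0$ (Lemma~\ref{lem:reach}, which holds for unrestricted GSMP). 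Alternating these two bounds ad infinitum gives almost-sure absorption in a BSCC; your per-edge bound plus Borel--Cantelli cannot be repaired as stated.

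Your second part has the same architecture as the paper's (the paper establishes a smallness/minorization condition, Lemma~\ref{lem:small}, and then invokes the invariant measure and the SLLN for general state-space chains to get $\dme=E_\pi[h]$ and $\cme=E_\pi[\tau]/E_\pi[W]$, with finiteness of $E_\pi[W]$ from the finite expected delays, just as you intend). However, as written, your Harris-recurrence step inherits the gap: you justify recurrence of the reference region $R_0$ by ``the uniform lower bounds'' of your first part, which do not exist, and the minorization itself --- which you correctly flag as the main obstacle --- again requires the $\delta$-separation machinery (reach a separated configuration in a distinguished region, then show that from every such configuration the $\ell$-step kernel dominates $\beta\cdot\smallmeasure(\cdot)$ for a fixed measure $\smallmeasure$, which in the paper is a delicate path-perturbation argument handling the phase-locked fixed-delay components separately). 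Also note a small inaccuracy in your structural observation: the ticking event's phase is \emph{not} an invariant of the run --- the ticking event may be discarded and later re-scheduled after a variable-delay occurrence, resetting the phase randomly --- which is harmless for the paper's argument but would need care in any proof that treats $\phi$ as a fixed parameter of a sampled chain.
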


The rest of this section is devoted to the proof of Theorem~\ref{thm:main1}. 
First, we show that almost all runs 
end up trapped in some BSCC of the region graph. Second, we solve the problem 
while restricting to runs that \emph{start} in a BSCC (as the initial 
part of a run outside of any BSCC is not relevant for the long run average behavior). We show that in a BSCC, the variables $\dme$ and $\cme$
are almost surely constant. 
The second part of the proof relies on several standard results from 
the theory of general state space Markov chains.
Formally, the proof follows from Propositions~\ref{prop:reach-bscc} and~\ref{prop:bscc-not-sync} stated below.

\myspacea

\subsection{Reaching a BSCC}

\newcommand{\PropReachBscc}{
In \nice GSMP, almost every run reaches a BSCC of the region graph.}

\begin{proposition}
\label{prop:reach-bscc}
\PropReachBscc
\end{proposition}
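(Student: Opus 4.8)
The plan is to argue entirely on the finite region graph $\regiongraph$ and to reduce the claim to a recurrence statement about leaving its transient part. I would first recall that all configurations in one region share the same one-step qualitative behaviour, so the sequence of regions visited by a run is almost surely an infinite path in $\regiongraph$, and a run reaches a BSCC exactly when this region-path enters a region lying in some BSCC. Since $\regiongraph$ is finite, every region can reach a BSCC in $\regiongraph$; hence it suffices to show that the region-path almost surely leaves the set $T$ of \emph{transient} regions (those in no BSCC), i.e. $\probm(\forall n\colon \gssmc_n \text{ lies in } T)=0$. If a run stayed in $T$ forever then, $T$ being finite, some transient region $R$ would be visited infinitely often, and I would fix once and for all a short path in $\regiongraph$ leading from $R$ out of $T$ into the set $B$ of configurations whose region lies in a BSCC.

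The crux will be a \emph{uniform escape lemma}: there are $k\le\sizereg$ and $\beta>0$ such that $\kernel^{k}(z,B)\ge\beta$ for every configuration $z$ that is reachable along a run confined to $T$. The kernel is built from the winning densities $\win(E,t)$ and the hitting terms $\hit(E,t)$, which are continuous in the valuation and strictly positive along every edge of $\regiongraph$; the trouble is that their infimum over a whole region can be $0$, because the closure of a region contains degenerate valuations at which a competing event has vanishing remaining time and the corresponding winning probability collapses. This is precisely the mechanism behind Theorems~\ref{thm:nespadnu} and~\ref{thm:nevymetam}: with two self-rescheduling fixed-delay clocks their phase difference drifts monotonically towards such a degeneracy, so the escape probability decays to $0$ and the run never leaves the transient part.

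The hard part, and where essentially the whole \nice hypothesis is spent, is to control this degeneracy. Because the strict order on fixed-delay events forbids any event other than the ticking event $e$ from causing itself, at most one fixed-delay clock can recur indefinitely, so there is a single reference clock rather than two interfering phases. I would use this to show that the valuations reachable along an infinite run confined to $T$ cannot approach the degenerate boundary indefinitely: the remaining times of the events competing on the chosen escape edge stay bounded away from $0$, so that the relevant configurations lie in a compact set on which the variable-delay densities are, being continuous and strictly positive, bounded below. Combining this confinement with the positivity of the escape edge yields the uniform bound $\beta$. Making the confinement argument rigorous --- translating the order-theoretic \nice condition into a quantitative statement about reachable valuations --- is the main obstacle.

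Granting the escape lemma, I would conclude by a conditional Borel--Cantelli argument. On the event that the run stays in $T$ forever and visits $R$ infinitely often, at each visit the probability, conditioned on the history, of reaching $B$ within the next $k$ steps is at least $\beta$; summing over the infinitely many visits, these conditional probabilities diverge, so by Lévy's extension of the second Borel--Cantelli lemma the run almost surely reaches $B$. This contradicts confinement to $T$, whence $\probm(\forall n\colon \gssmc_n\in T)=0$ and almost every run reaches a BSCC.
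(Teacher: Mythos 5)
Your overall skeleton (reduce to leaving the transient part $T$ of the finite region graph, establish a uniform escape bound, conclude by Borel--Cantelli) is sound, and it matches the paper's strategy of iterating a uniform lower bound on reaching a BSCC within a bounded number of steps. The genuine gap is in your justification of the escape lemma. You claim that the \nice{} hypothesis implies a \emph{deterministic confinement}: that valuations reachable along a run confined to $T$ stay bounded away from the degenerate boundary, so the relevant configurations lie in a compact set on which the kernel is bounded below. This is false. The \nice{} condition does not prevent badly separated configurations from being reached; it only prevents the pathology in which separation can \emph{never} be regained. Already with no fixed-delay events at all (hence trivially \nice{}), the paper's own illustration makes the point: with $e$ uniform on $[0,1]$ and $f$ uniform on $[2,3]$, the configurations $((e,0.2),(f,0.7))$ and $((e,0.2),(f,0.21))$ lie in the same region but give escape probabilities $0.5$ and $0.01$ along the same edge, and configurations with arbitrarily small separation are reached with positive probability whenever a variable-delay event is newly scheduled at a moment when another clock's fractional part is arbitrarily close to zero. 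Along an infinite run such near-degenerate configurations occur (almost surely, infinitely often), so the infimum of your escape probability over the configurations you quantify over is $0$, and the uniform $\beta$ you need cannot be obtained this way.

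What the \nice{} hypothesis actually buys is a \emph{probabilistic} re-separation statement, which is the paper's Lemma~\ref{lem:delta-sep}: there exist $\delta>0$, $m\in\Nset$ and $q>0$ such that from \emph{every} configuration --- however badly separated --- a $\delta$-separated configuration (Definition~\ref{def:delta-separation}) is reached within $m$ steps with probability at least $q$. Its proof uses single-tickingness in the way you gesture at (non-ticking fixed-delay events have bounded lifetime, so after total time $M$ only the ticking clock and freshly scheduled events remain, and the variable-delay occurrences can be steered into empty slots of a grid anchored at the ticking clock), but the conclusion is a uniform \emph{probability} of regaining separation, not an invariant of the reachable set. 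Combined with the paper's Lemma~\ref{lem:reach} (from a $\delta$-separated configuration, any fixed region path of length $k$ can be followed with probability at least some $p>0$, which holds for unrestricted GSMP), one gets $\kernel^{m+k}(z,B)\ge qp$ for every configuration $z$, i.e. exactly the uniform escape bound you wanted --- after which your Borel--Cantelli step (or the simpler iteration $(1-qp)^n\to 0$) finishes the proof. So the statement of your escape lemma is fine; it is the compactness argument offered in its support that fails, and repairing it requires replacing confinement by the two-stage separation argument.
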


The proof uses similar methods as the proof in~\cite{ACD:verifying-automata-real-time}.
%
By definition, the process moves along the edges of the region graph.
From every region, there is a minimal path through the region graph into a BSCC, let $n$ be the maximal length of all such paths. 
Hence, in at most $n$ steps the process reaches a BSCC with positive probability from any configuration.
Observe that if this probability was bounded from below, we would eventually reach a BSCC from any configuration almost surely.
%
However, this probability can be arbitrarily small.
Consider the following example with event $e$ uniform
on $[0,1]$ and event $f$ uniform on $[2,3]$. In an intuitive notation, let $R$ be the region $[0 < e < f < 1]$. What is the probability that the event $e$ occurs after the elapsed time of $f$ reaches $1$ (i.e. that the region $[e = 0; 1 < f < 2]$ is reached)? For a configuration in $R$ with valuation $((e,0.2),(f,0.7))$ the probability is $0.5$ but for another configuration in $R$ with $((e,0.2),(f,0.21))$ it is only $0.01$. 
Notice that the transition probabilities depend on the difference of the fractional values
of the clocks, we call this difference \emph{separation}. Observe that in other situations, the separation of clocks from value $0$ also matters.

\begin{definition}\label{def:delta-separation}
Let $\delta > 0$. We say that a configuration $(s,\nu)$ is \emph{$\delta$-separated} if for every $x,y\in\{ 0 \} \cup \{ \nu(e) \mid e \in \sched(s)\}$, 
we have either $|\fr(x) - \fr(y)| > \delta$ or $\fr(x) = \fr(y)$.
\end{definition}

We fix a $\delta > 0$. To finish the proof using the concept of $\delta$-separation, we need two observations. 
First, from \emph{any} configuration we reach in $m$ steps a $\delta$-separated configuration with probability at least $q > 0$. Second, the probability to reach a fixed region from \emph{any} $\delta$-separated configuration is bounded from below by some $p > 0$.
%
By repeating the two observations ad infinitum, we reach some BSCC almost surely. Let us state the claims. For proofs, see\appref{app:reach}.

\newcommand{\LemDeltaSep}{
There is $\delta>0$, $m \in \Nset$ and $q > 0$ such that from every configuration we reach a 
$\delta$-separated configuration in $m$ steps with probability at least $q$.
}

\begin{lemma}
\label{lem:delta-sep}
\LemDeltaSep
\end{lemma}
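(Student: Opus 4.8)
The plan is to exploit the continuous randomness carried by variable-delay events to spread the fractional parts of all clocks into a $\delta$-separated pattern, while invoking the single-ticking hypothesis to rule out fixed-delay clocks being frozen at distinct-but-arbitrarily-close phases. I track a configuration only through the finite multiset of phases $\Pi(s,\nu) = \{0\} \cup \{\fr(\nu(e)) \mid e \in \sched(s)\}$, since $\delta$-separation depends on nothing else. The elementary fact I rely on is this: if the process spends time $t$ in the current state, then every inherited clock has its phase rotated by $\fr(t)$, the same amount for all of them, every newly scheduled clock and the reference $0$ sit at phase $0$, and discarded clocks vanish. In particular a single firing of a variable-delay event $e$, whose sojourn $t$ is drawn from the conditional density $f_{e \mid \nu(e)}$, resets $e$ (if it survives) to phase $0$ and rotates every surviving phase by the continuous random angle $\fr(t)$, but leaves the difference of any two inherited phases unchanged. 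Hence separation between two clocks can be created only at a step at which at least one of them is (re)scheduled.

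First I would build, from an arbitrary configuration, a schedule of at most $m$ transitions, with $m$ depending only on $|\events|$, that forces the variable-delay events to fire successively, each firing resetting one more clock to phase $0$ and rotating the previously reset phases by a fresh rotation whose conditional density is bounded below on an arc whose length exceeds a positive constant $w_0$ (the smallest width of a freshly scheduled firing window, capped at $1$ for unbounded events); here I use that each $f_e$ is continuous and strictly positive, hence bounded below, on its compact support. Multiplying these conditional lower bounds, the joint law of the variable-origin phases dominates a constant density on a box, so, once $\delta$ is chosen small relative to $w_0$ and $|\events|$, the event that these phases are pairwise either equal or more than $\delta$ apart and all avoid $0$ has probability bounded below by a constant independent of the starting configuration, because the complementary bad set is a union of at most $c\,|\events|^2$ strips of width $O(\delta)$.

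It remains to control the phases contributed by fixed-delay clocks, and this is where single-ticking is indispensable. A fixed-delay event has integer delay, so a clock $g$ scheduled at global time $\tau_g$ has phase $\fr(\tau-\tau_g) = \fr(\tau) - \fr(\tau_g) \pmod 1$, and its integer delay makes $\fr(\tau_g)$ coincide with the scheduling phase of the event that caused $g$. Unwinding this along the causation order and using that single-ticking lets only the ticking event re-cause itself, one sees that every fixed clock kept alive without the help of a variable event shares one and the same phase, that of the ticking event, whereas any remaining fixed clock was (re)scheduled at a variable-triggered transition and hence carries one of the already-randomized phases. Since equal phases are allowed, a single frozen ticking phase is harmless: I treat it as one extra forbidden point that the randomized phases must avoid, adding $O(|\events|)$ strips to the bad set. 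This is exactly the property that fails for the two independent ticking events of Figures~\ref{fig-producer_consumer} and~\ref{fig-producer_consumer_alur}, whose phase difference is frozen and can be pinned below any $\delta$.

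The main obstacle is not the spreading estimate but making the bound $q$ genuinely uniform over all starting configurations, including the degenerate ones: a variable-delay event may start with a tiny remaining window $[0, u_e - \nu(e))$, so that its sojourn is almost deterministic and $\fr(t)$ is confined to a short arc offering no usable rotation, and a fixed-delay clock may sit with phase within $\delta$ of an integer while not yet having fired. I would absorb these by prefixing the spreading schedule with a bounded number of clearing steps that flush the nearly expired clocks out of the configuration; each such step is nearly forced, since the most imminent clock wins the race with probability bounded below, and at most $|\events|$ of them are needed, so their contribution to $q$ stays above a positive constant. After clearing, every variable event has been rescheduled to a full window of width at least $w_0$ and no clock lingers within $\delta$ of the reference, so the spreading argument applies with its uniform bound. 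Folding the clearing phase into the schedule yields the final constants $\delta$, $m$ and $q$.
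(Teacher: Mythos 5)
Your overall strategy coincides with the paper's: use the positive, bounded-below densities of variable-delay events to force them to fire inside prescribed short windows, thereby randomizing their phases into a separated pattern, and use the single-ticking hypothesis to argue that the fixed-delay clocks contribute essentially one uncontrollable phase, that of the ticking event. Your spreading estimate (domination of the joint law by a constant density on a box, with the bad set a union of $O(|\events|^2)$ thin strips) is a sound repackaging of the paper's slot-and-token construction.

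The gap lies in your treatment of fixed-delay clocks already alive at the (arbitrary) starting configuration. Your unwinding claim --- that every fixed clock kept alive without the help of a variable event shares the ticking phase --- is valid only after enough time has elapsed: at the start of your schedule, a fixed clock's causation chain may trace back into the unknown past rather than to the ticking event or to a variable firing inside your window. Concretely, two fixed events $f,g$ may be scheduled in the starting valuation with frozen phases that are unequal yet closer than $\delta$, and both far from $0$. No rotation can separate them; your clearing phase does not touch them (they are neither nearly expired nor near the reference $0$); and ``flushing'' them does not help either, because when $f$ fires it may cause a strictly $<$-greater fixed event that inherits exactly the same frozen phase. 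The only cure is to wait until every such chain dies out, which under single-ticking happens only after time up to $M$, the sum of all fixed delays (each chain strictly ascends the order $<$, so it terminates, but it can survive that long). This breaks your proposal in two places: (i) the bound of ``at most $|\events|$'' clearing steps is false, since during the required waiting time the ticking event alone forces on the order of $M/u_e$ further transitions, so $m$ must depend on the actual delays, not only on $|\events|$; and (ii) the spreading argument is then applied to configurations at which its key premise --- every fixed phase lies in $\{\text{ticking phase}\} \cup \{\text{randomized phases}\}$ --- need not hold. The paper's proof addresses exactly this point: it runs the controlled schedule for total time exceeding $M$, so that all old non-ticking fixed events ``die'', and accordingly takes $m = \lceil M/\delta \rceil$ steps rather than a number depending only on $|\events|$.
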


\newcommand{\LemReach}{
For every $\delta > 0$ and $k \in \Nset$ there is $p > 0$ such that for any pair of regions $R$, $R'$ connected by a path of length $k$ and
for any $\delta$-separated $z \in R$, we have $\kernel^k(z,R') > p$.
}

\begin{lemma}
\label{lem:reach}
\LemReach
\end{lemma}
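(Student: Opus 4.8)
The plan is to reduce the statement to a single edge and then iterate along the path $R = R_0 \to R_1 \to \cdots \to R_k = R'$ witnessing the connection. A direct induction on $k$ is not available, since $\delta$-separation is only assumed at the starting configuration and need not survive a transition (a freshly scheduled clock, set to $0$, may land close to an inherited clock). I would therefore prove a strengthened one-step claim: there exist $\delta' > 0$ and $q > 0$, depending only on $\delta$ and the finite data of the GSMP, such that from every $\delta$-separated $z \in R_i$ with $(R_i,R_{i+1})$ an edge of the region graph, the process moves in one step into a $\delta'$-separated configuration of $R_{i+1}$ with probability at least $q$. Applying this claim $k$ times yields separations $\delta = \delta_0 \ge \delta_1 \ge \cdots \ge \delta_k$ and step-probabilities bounded below by $q(\delta_0),\ldots,q(\delta_{k-1})$, whose product is positive. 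Since the region graph is finite there are only finitely many paths of length $k$, so minimizing these products over all of them gives the uniform $p > 0$ required.

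For the one-step claim, fix the variable-delay event $e$ (or, in the fixed-delay case, the first-firing set $F'$) whose occurrence realizes the edge $(R_i,R_{i+1})$; this is dictated by the region graph. The set of elapsed times $t$ for which firing $e$ at time $t$ lands in $R_{i+1}$ is an interval $I$ whose endpoints are precisely the times at which some fractional part $\fr(\nu(c)+t)$ hits $0$ or overtakes another. Because $z$ is $\delta$-separated, consecutive such critical times are more than $\delta$ apart, so $I$ contains a bounded subinterval $J$ of length at least a fixed multiple of $\delta$. To secure separation of the target, note that all inherited clocks are shifted by the same $t$, so their mutual fractional separations are preserved and stay above $\delta$; the only possible new coincidences are between the freshly scheduled clocks (valuation $0$) and the shifted inherited ones. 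The set of $t \in J$ for which some $\fr(\nu(c)+t)$ lies within $\delta'$ of $0$ has measure $O(|\sched(s)|\cdot\delta')$, so choosing $\delta'$ small relative to $|J|$ leaves a sub-subinterval $J' \subseteq J$ of positive length on which the resulting configuration is $\delta'$-separated and still in $R_{i+1}$.

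It remains to bound the integrand from below and integrate it over $J'$. On the compact part of the valuation space, where all clocks are at most $B$, the positivity and continuity of the densities give a uniform positive lower bound on $\win(\{e\},t)$ (respectively $\win(F',u)$): the conditional density $f_{e\mid\nu(e)}(t)$ and each competing survival factor $\int_t^\infty f_{c\mid\nu(c)}(y)\,\de{y}$ attain positive minima over the compact range of valuations and of $t\in J'$. I expect the main obstacle to be precisely this lower bound in the presence of unbounded variable-delay events, whose valuations may exceed $B$ and grow without bound; there both the conditional density and the competing survival ratios must be controlled as $\nu(c)\to\infty$. I would handle this by confining $J'$ to a bounded range of firing times and exploiting that, up to the finitely many region types and finitely many events, the conditional densities and survivals restricted to $J'$ admit a common positive lower bound by continuity. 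Integrating this bound over $J'$ (of length bounded below in $\delta$) yields the step-probability $q > 0$, completing the one-step claim and hence the lemma.
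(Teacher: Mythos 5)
Your proposal is correct and follows essentially the same route as the paper's proof: both iterate a one-step argument along the region path, fire the prescribed variable-delay event inside a sub-interval of the (length proportional to $\delta$) target window chosen so that the freshly scheduled clocks stay separated from the inherited ones (the paper takes the middle third of the window, so separation degrades geometrically to $\delta/3^k$, while you excise a bad set of measure $O(|\sched(s)|\cdot\delta')$), let fixed-delay events fire deterministically inside the window, and bound each step from below using positivity of the densities and of the survival factors of the competing events. The one soft spot you flag---uniformly lower-bounding conditional densities and survival probabilities when an unbounded event's elapsed valuation exceeds $B$ and hence ranges over a non-compact set, where your appeal to continuity does not by itself yield a uniform bound---is treated no more rigorously in the paper, which simply asserts these quantities are ``bounded from below,'' so your treatment is on par with (and more candid than) the published argument.
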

%
%

Lemma~\ref{lem:reach} holds even for unrestricted GSMP. Notice that Lemma~\ref{lem:delta-sep} does not.
As in the example of Figure~\ref{fig-producer_consumer_alur}, the separation may be non-increasing for all runs.

\myspacea

\subsection{Frequency in a BSCC}

From now on, we deal with the bottom strongly connected components that are reached almost surely.
Hence, we assume that the region graph $\regiongraph$ is strongly connected.
We have to allow an arbitrary initial configuration $z_0 = (s,\nu)$; in particular, $\nu$ does not have to be a zero vector.\footnote{Technically, the initial measure is $\initmeasure(A) = 1$ if $z_0 \in A$ and $\initmeasure(A) = 0$, otherwise.}

\newcommand{\PropBsccNotSync}{
In a \nice GSMP with strongly connected region graph, there are values $d,c \in [0,1]$ 
such that for any initial configuration $z_0$ 
and for almost all runs $\sigma$ starting from $z_0$, we have that
$\dme$ and $\cme$ are well-defined and $\dme(\sigma) = d$ and $\cme(\sigma) = c$. 
}

\begin{proposition}\label{prop:bscc-not-sync}
\PropBsccNotSync
\end{proposition}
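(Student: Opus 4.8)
The plan is to establish this positive, Harris-type ergodic statement by showing that the general state-space Markov chain $\gssmc$ induced by the GSMP, under the assumption that its region graph is strongly connected, is \emph{positive Harris recurrent}, and then to extract the almost-surely constant frequencies from the ergodic theorem for such chains (see~\cite{MT:book}). The point of establishing the Harris property, rather than settling for an invariant measure alone, is that it yields the strong law of large numbers from \emph{every} starting configuration, which is exactly what the ``for any initial configuration $z_0$'' clause demands. Throughout I fix the $\delta > 0$ provided by Lemma~\ref{lem:delta-sep}.

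First I would exhibit a \emph{small set} and verify $\psi$-irreducibility. Fix a reference region $R^\star$ and let $C$ be the set of $\delta$-separated configurations in $R^\star$. The claim is that $C$ is small: there are $n_0 \in \Nset$, $\beta > 0$ and a nontrivial measure $\varphi$ with $\kernel^{n_0}(z,\cdot) \geq \beta\,\varphi(\cdot)$ for all $z \in C$. The minorizing measure $\varphi$ is obtained from the absolutely continuous component that the transition kernel acquires along a fixed path of the region graph returning to a $\delta$-separated copy of $R^\star$: the variable-delay densities are positive and continuous on their supports, so the reached valuations of the variable-delay clocks possess a density bounded below on a compact box, uniformly over starting points in $C$. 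This is precisely where the \nice hypothesis is essential, and I expect it to be the main obstacle. In the unstable examples of Figures~\ref{fig-producer_consumer} and~\ref{fig-producer_consumer_alur} two fixed-delay events are simultaneously self-sustaining and their phase difference shrinks to zero, so no such uniform lower bound can exist; the single-ticking condition forbids this by allowing only the ticking event to be self-causing, so the deterministic part of the fixed-clock phases is governed by the single ticking phase while the remaining fixed clocks are integer shifts of it or are re-randomized by an intervening variable-delay event. One must verify carefully that, inside the strongly connected region graph of a \nice GSMP, this really does keep the fixed-delay clocks from accumulating arbitrarily small separations, so that the common component $\beta\,\varphi$ genuinely exists. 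Taking $\psi = \varphi$, $\psi$-irreducibility then follows from Lemma~\ref{lem:reach} together with strong connectivity: from any $\delta$-separated configuration every region, and in particular $C$, is reached along a bounded path with probability at least some $p > 0$.

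Next I would promote this to \emph{positive Harris recurrence}. Combining Lemma~\ref{lem:delta-sep} (a $\delta$-separated configuration is reached within $m$ steps with probability at least $q$ from \emph{any} configuration) with the reachability bound of the previous step shows that from every configuration the small set $C$ is re-entered within a bounded number $N$ of steps with a uniform probability at least $r := qp > 0$. A geometric-trials (Borel--Cantelli) argument then gives that $C$ is visited infinitely often almost surely from every start, which is Harris recurrence; the same uniform bound stochastically dominates the return time to $C$ by $N$ times a geometric random variable, so the expected return time is finite and the chain is positive Harris with a unique invariant probability measure $\pi$. To handle the timed frequency I additionally need the one-step sojourn to be $\pi$-integrable; this follows from the standing assumption $\int x\, f_e(x)\,\de{x} < \infty$ for every variable-delay event together with the integer (hence bounded) delays of fixed-delay events, so that $\mathbb{E}_\pi[\nu_1(\last)] < \infty$.

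Finally I would read off the frequencies from the ergodic theorem for positive Harris chains. Applied to the bounded function $\indicator{s=\measured}$ (the indicator $\delta(s)$ of the definition) it gives $\dme(\sigma) = \lim_{n\to\infty} \tfrac1n \sum_{i=0}^n \indicator{s_i=\measured} = \int \indicator{s=\measured}\,\de\pi =: d$ for almost every run from every $z_0$, so $\dme$ is well-defined and a.s.\ equal to the constant $d$. For $\cme$, the sojourn $\nu_{i+1}(\last)$ is a coordinate of the configuration $\gssmc_{i+1}$, so both $\sum_i \indicator{s_i=\measured}\,\nu_{i+1}(\last)$ and $\sum_i \nu_{i+1}(\last)$ are additive functionals of the chain and of its pair version $(\gssmc_i,\gssmc_{i+1})$, which is again positive Harris. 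The ergodic theorem then makes the Cesàro average of the numerator converge a.s.\ to $\mathbb{E}_\pi[\indicator{s_0=\measured}\,\nu_1(\last)]$ and that of the denominator to $\mathbb{E}_\pi[\nu_1(\last)]$, the latter finite and strictly positive; hence $\cme(\sigma)$ converges a.s.\ to the ratio $c$ of these two stationary expectations, independently of $z_0$. Since $d$ is the $\pi$-measure of a set and $c$ is a ratio in which the numerator is dominated by the denominator, both lie in $[0,1]$, which establishes the claimed constants.
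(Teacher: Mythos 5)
Your overall architecture coincides with the paper's: the same small set (the $\delta$-separated configurations of a distinguished region), the same two ingredients for uniform access from everywhere (Lemmata~\ref{lem:delta-sep} and~\ref{lem:reach}), a minorization on that set, and then standard general-state-space ergodic theory to read off the constants $d$ and $c$. In fact your uniform bound ``from every configuration, $C$ is hit within $N$ steps with probability at least $r$'' composed with your minorization is exactly the paper's Corollary~\ref{cor:small} (the \emph{whole} space $\configs$ is small), so you are in the uniformly ergodic regime, not merely positive Harris. Two of your choices are genuinely cleaner than the paper's: you observe that the sojourn $\nu_{i+1}(\last)$ is a coordinate of $\gssmc_{i+1}$, so the SLLN applies to it directly, whereas the paper routes $\cme$ through the conditional expectation $W$ and an informal grid-discretization limit $\delta \to 0$ in the appendix; and your Borel--Cantelli route to recurrence does not need the aperiodicity assumption that the paper imposes and then removes by a periodic decomposition.

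The genuine gap is the minorization on $C$ itself, which is the entire mathematical content of the proposition, and the one-line mechanism you propose for it would fail as stated. After following any fixed path of the region graph, the law of the reached configuration is \emph{not} absolutely continuous on a full-dimensional box of clock values: every fixed-delay clock that is (transitively) scheduled by the ticking event is a deterministic function of the starting ticking phase, and that phase varies over $C$ --- so for a single fixed trace the endpoint laws from different $z \in C$ have pinned coordinates at \emph{different} deterministic values and share no common component at all. The paper repairs this with two lemmas your sketch has no counterpart for: Lemma~\ref{lem:wide-path} first \emph{steers} every $\delta$-separated start in $R$ to one common configuration $z^\ast$ along $\delta'$-wide paths with a single shared trace and total time at least $M$, so that all non-ticking fixed-delay clocks die out or are rescheduled by variable-delay events (this is precisely where the \nice{} hypothesis is used, not merely to ``keep separations from shrinking''); only then does Lemma~\ref{lem:fuzzying} perturb the variable-delay waiting times along that common trace, producing a minorizing measure $\smallmeasure$ supported on a \emph{lower-dimensional affine} hypercube, with a case split on whether the last step is fixed or variable because ticking-descended clocks cannot be perturbed in the fixed case. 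Relatedly, your reference region $R^\star$ cannot be arbitrary: the construction needs the ticking event to be unscheduled or to have the greatest value in $R$, which anchors the grid in Lemma~\ref{lem:wide-path}. So your skeleton and end-game are right, and you honestly flag the obstacle, but the flagged step is where the proof actually lives, and the ``density bounded below on a compact box'' mechanism must be replaced by this steer-then-perturb construction.
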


We assume that the region graph is aperiodic in the following sense. 
A \emph{period} $p$ of a graph $G$ is the greatest common divisor of lengths of all cycles in $G$. 
The graph $G$ is \emph{aperiodic} if $p=1$.
Under this assumption\footnote{%
If the region graph has period $p > 1$, we can employ the standard technique 
and decompose the region graph (and the Markov chain) into $p$ aperiodic components. 
The results for individual components yield straightforwardly the results for the whole Markov chain, see, e.g., \cite{BKKKR-HSSC2011}.}, the chain $\gssmc$ is in some sense stable. Namely, 
(i) $\gssmc$ has a unique invariant measure that is independent of the initial measure and (ii) the strong law of large numbers (SLLN) holds for $\gssmc$. 

First, we show that (i) and (ii) imply the proposition. Let us recall the notions.
%
%
%
%
We  say that a probability measure $\pi$ on $(\configs,\configsfield)$ is \emph{invariant} if for all $A\in\configsfield$ 
    \[
    \pi(A)\quad = \quad \int_{\configs} \pi(\de{x})\kernel(x,A).
    \]

\noindent
The SLLN states that if $h:\configs \rightarrow \Rset$ satisfies 
$E_\pi[h]<\infty$, 
then almost surely
    \begin{align}\label{eq:ssln}
    \lim_{n\rightarrow \infty} \frac{\sum_{i=1}^n h(\Phi_i)}{n}\quad = \quad
    E_\pi[h],
    \end{align}
where $E_\pi[h]$ is the expected value of $h$ according to the invariant measure $\pi$.

We set $h$ as follows. For a run $(s_0,\nu_0)(s_1,\nu_1)\cdots$, let
$h(\Phi_i) = 1$ if $s_i=\measured$ and $0$, otherwise.  We have
$E_\pi[h] < \infty$ since $h \leq 1$.  From (\ref{eq:ssln}) we
obtain that almost surely
\begin{align*}\label{eq:freq-def}
    \dme \quad = \quad \lim_{n\rightarrow \infty} \frac{\sum_{i=1}^n h(\Phi_i)}{n}\quad 
= \quad
    E_\pi[h].
\end{align*}
As a result, $\dme$ is well-defined and equals the constant value $E_\pi[h]$ for almost all runs. 
We treat the variable $\cme$ similarly. 
Let $W((s,\nu))$ denote the expected waiting time of the GSMP in the configuration $(s,\nu)$.
We use a function $\tau((s,\nu)) = W((s,\nu))$ if $s = \measured$ and $0$, otherwise.
Since all the events have finite expectation, 
we have $E_\pi[W] < \infty$ and $E_\pi[\tau] < \infty$.
Furthermore, we show in\appref{app:frequencies} that almost surely
\begin{align*}
  \cme \quad = \quad  \lim_{n\rightarrow \infty} \frac{\sum_{i=1}^n \tau(\Phi_i)}{\sum_{i=1}^n W(\Phi_i)}
\quad = \quad 
    \frac{E_\pi[\tau]}{E_\pi[W]}.
\end{align*}
Therefore, $\cme$ is well-defined and 
equals the constant $E_\pi[\tau]/E_\pi[W]$ for almost all runs.

Second, we prove (i) and (ii). 
A standard technique of general state space Markov chains (see, e.g.,
\cite{MT:book}) yields (i) and (ii) for chains that satisfy the following
condition. Roughly speaking, we search for a set of configurations $C$ that
is visited infinitely often and for some $\ell$ the measures
$\kernel^\ell(x,\cdot)$ and $\kernel^\ell(y,\cdot)$ are very similar for any
$x,y\in C$.  This is formalized by the following lemma.


\newcommand{\LemSmall}{
There is a measurable set of configurations $C$ such that
\begin{enumerate}
 \item there is $k \in \Nset$ and $\alpha > 0$ such that for every $z \in \configs$ we have $\kernel^{k}(z,C) \geq \alpha$, and
 \item there is $\ell \in \Nset$, $\beta > 0$, and a probability measure $\smallmeasure$ such that for every $z \in C$ and $A \in \configsfield$
we have $\kernel^{\ell}(z,A) \geq \beta \cdot \smallmeasure(A)$.
\end{enumerate}
}

\begin{lemma}\label{lem:small}
\LemSmall
\end{lemma}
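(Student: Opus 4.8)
The plan is to exhibit a single measurable set $C$ that serves simultaneously as a recurrent return set (condition~1) and as a minorization set (condition~2), following the small-set recipe of \cite{MT:book}. I would fix the constant $\delta>0$ supplied by Lemma~\ref{lem:delta-sep}, pick an arbitrary region $R^*$ of the (strongly connected) region graph, and set $C$ to be the set of all $\delta$-separated configurations lying in $R^*$. Being contained in a single region, all clock values of elements of $C$ lie in a fixed bounded range, and by $\delta$-separation no two fractional parts (nor any fractional part and $0$) are closer than $\delta$; both facts are what will make the estimates below uniform over $C$.

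The core of the argument is condition~2, the minorization. First I would fix a cyclic path $\pi$ in the region graph that starts and ends in $R^*$, has some length $\ell$, forces the ticking event to fire at least once, and forces every variable-delay event scheduled at the start to fire and be freshly resampled; such a $\pi$ exists because $\regiongraph$ is strongly connected and we may pad it with extra cycles. For $z\in C$ I would lower-bound $\kernel^{\ell}(z,A)$ by the probability of realizing $A$ while following $\pi$. Writing this out with the explicit $\win$ and $\hit$ kernels from the Preliminaries and changing variables from the firing times $t_1,\dots,t_\ell$ to the final valuation, I expect the pushforward to possess a density with respect to Lebesgue measure on the fractional coordinates of $R^*$: the $\delta$-separation of $z$ keeps every normalizing integral $\int_{\nu(e)}^{\infty} f_e$ bounded away from $0$, and continuity and strict positivity of the variable densities on compacts keep the integrand bounded below, uniformly as $z$ ranges over the compact, $\delta$-separated set $C$. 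I would then let $\smallmeasure$ be the normalized restriction of that reference Lebesgue measure to a compact sub-box of the $\delta$-separated part of $R^*$, obtaining $\kernel^{\ell}(z,A)\ge\beta\cdot\smallmeasure(A)$ for all $z\in C$ and all $A$, with $\smallmeasure(C)>0$ by construction.

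The single-ticking hypothesis is precisely what makes this density exist, and that is the step I expect to be the main obstacle. Each fixed-delay clock contributes one rigid coordinate whose fractional part cannot be re-randomized by the continuous variable events; along $\pi$ the single ticking event is reset to $0$ when it fires and its phase is then re-accumulated as the continuous sum of the subsequent firing times, so even this last rigid coordinate acquires a genuine density. With two independent ticking events---the situation of the counterexamples in Section~\ref{sec-results}---their phase \emph{difference} would be frozen (indeed shrinking), the pushforward would concentrate on a lower-dimensional set, and no minorization of the required form could hold. Carrying out the change of variables and the uniform-in-$z$ density bound rigorously, while tracking that the ticking phase is absorbed by exactly one variable duration, is the technical heart of the proof.

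Finally, for condition~1 I would note that the very same density estimate holds from \emph{any} $\delta$-separated configuration $w$, not only those in $R^*$: routing $w$ along a path to $R^*$ of length at most the diameter of $\regiongraph$ and then around $\pi$, the identical computation yields $\kernel^{j}(w,A)\ge\beta'\smallmeasure(A)$ for a bounded $j$, whence $\kernel^{j}(w,C)\ge\beta'\smallmeasure(C)>0$. Combining this with Lemma~\ref{lem:delta-sep}, which reaches a $\delta$-separated configuration from an arbitrary start in $m$ steps with probability at least $q$, gives $\kernel^{k}(z,C)\ge q\cdot\beta'\smallmeasure(C)=:\alpha>0$ for $k:=m+j$ and every $z\in\configs$, which is condition~1. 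Here Lemma~\ref{lem:reach} supplies the qualitative fact that the intermediate regions are reachable, while the quantitative measure bound comes from the density argument above.
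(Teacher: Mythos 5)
Your overall strategy --- take $C$ to be the $\delta$-separated configurations of one region, prove the minorization by following a fixed region-graph cycle and fuzzing the firing times, and get condition~1 from Lemma~\ref{lem:delta-sep} plus routing --- is the same as the paper's. The genuine gap is in your reference measure $\smallmeasure$. You take $\smallmeasure$ to be normalized full-dimensional Lebesgue measure on a box of coordinates of $R^*$, and you justify this by arguing that the single ticking clock's phase, the one rigid coordinate you identify, acquires a density through the random snapshot time of the last (variable) firing. But a \nice GSMP may have many other fixed-delay events --- that is the point of the model --- and their coordinates are rigid too, in a way that one random snapshot time cannot absorb. Concretely, the ticking event $e$ is the least element of the order, so it is allowed to cause another fixed-delay event $h$; if along your cycle $\pi$ the transition fired at a (rigid) ticking time $\tau$ newly schedules $h$, then at the final time $T$ one has $\nu(h)-\nu(e)=j\,u_e$ for an integer $j$, independently of every variable firing time; similarly, all fixed-delay events in a chain rooted at one firing keep pairwise integer offsets. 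Hence $\kernel^{\ell}(z,\cdot)$ is concentrated on a proper affine subspace of the coordinates of $R^*$ and cannot dominate $\beta\cdot\smallmeasure$ for any $\beta>0$ when $\smallmeasure$ is full-dimensional; and you cannot in general choose $\pi$ to avoid this, since $\occur$ may with probability one send every ticking transition to a state that schedules fresh fixed-delay events. This is exactly where the paper's proof does its heavy lifting: its Lemma~\ref{lem:fuzzying} classifies the events of the target configuration by their ``original scheduler,'' pins the coordinates of the ticking-rooted class (allowing them only one common shift $y$ when the last step is variable), and takes $\smallmeasure$ to be $d$- or $(d{+}1)$-dimensional Lebesgue measure on the resulting affine set, where $d$ counts the variable-rooted classes. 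Repairing your argument requires the same classification and a lower-dimensional $\smallmeasure$; as written, your proposal proves the lemma only for GSMP whose unique fixed-delay event is the ticking one.

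A smaller but real problem is condition~1. It demands a single $k$ with $\kernel^{k}(z,C)\geq\alpha$ for \emph{all} $z\in\configs$, whereas routing ``along a path of length at most the diameter'' gives bounds $\kernel^{j(w)}(w,C)\geq\alpha$ with an exponent $j(w)$ depending on $w$'s region; these do not combine into one exponent, since the bound at step $j$ does not transfer to step $k>j$. The paper resolves this using aperiodicity of the region graph (in force in this section by the decomposition footnote): in a finite, strongly connected, aperiodic graph there is one $n$ such that every pair of vertices is joined by a path of length exactly $n$, and it is this uniform $n$ that is combined with Lemmata~\ref{lem:delta-sep} and~\ref{lem:reach}. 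Relatedly, since Lemma~\ref{lem:reach} only guarantees $\delta/3^{n}$-separation upon arrival, the set $C$ must be defined with the degraded constant (the paper sets $\delta=\delta'/3^{n}$), otherwise the mass you produce lands in a superset of your $C$ rather than in $C$ itself.
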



\begin{proof}[Sketch]
%
%
Let $e$ be the ticking event and $R$ some reachable region where $e$ is the event closest to its upper bound. We fix a sufficiently small $\delta > 0$ and choose $C$ to be the set of $\delta$-separated configurations of $R$. We prove the first part of the lemma similarly to Lemmata~\ref{lem:delta-sep} and~\ref{lem:reach}.
As regards the second part, we define the measure $\smallmeasure$ uniformly
on a hypercube $X$ of configurations $(s,\nu)$ that have $\nu(e) = 0$ and
$\nu(f) \in (0,\delta)$, for $f\neq e$. First, assume that $e$ is the only
fixed-delay event. We fix $z=(s',\nu')$ in $R$; let $d = u_e-\nu'(e)>
\delta$ be the time left in $z$ before $e$ occurs. For simplicity, we assume
that each variable-delay events can occur after an arbitrary delay $x \in
(d-\delta, d)$. Precisely, that it can occur in an
$\varepsilon$-neighborhood of $x$ with probability bounded from below by
$\beta \cdot \varepsilon$ where $\beta$ is the minimal density value of all
$\events$.  Note that the variable-delay events can be ``placed'' this way
arbitrarily in $(0,\delta)$. Therefore, when $e$ occurs, it has value $0$
and all variable-delay events can be in interval $(0,\delta)$.  In other
words, we have $\kernel^{\ell}(z,A) \geq \beta \cdot \smallmeasure(A)$ for
any measurable $A \subseteq X$ and for $\ell = |\events|$.

Allowing other fixed-delay events causes some trouble because a fixed-delay event $f \neq e$ cannot be ``placed'' arbitrarily. In the total order $<$, the event $f$ can cause only strictly greater fixed-delay events. The greatest fixed-delay event can cause only variable-delay events that can be finally ``placed'' arbitrarily as described above.\QED
\end{proof}

\myspaceab

\section{Approximations}


In the previous section we have proved that in \nice GSMP, $\dme$ and $\cme$ are almost surely well-defined and for almost all runs they attain only finitely many values $d_1\ldots,d_k$ and $c_1,\ldots,c_k$, respectively. In this section we show how to approximate $d_i$'s and $c_i$'s and the probabilities that $\dme$ and $\cme$ attain these values, respectively. 

\newcommand{\claimApproxReach}{
Let $X$ be a set of all configurations in a BSCC $\mathcal B$, $X_\measured\subseteq X$ the set of configurations with state $\measured$, and $d$ the frequency corresponding to $\mathcal B$. There are computable constants 
$n_1,n_2 \in \Nset$ and $p_1,p_2 > 0$ such that for every $i \in \Nset$ and $z_X\in X$ we have
\[\begin{array}{rll}
   \lvert \probm(\reach(X)) - \kernel^i(z_0,X) \rvert \; &\leq \; (1-p_1)^{\lfloor i/n_1 \rfloor} & \\
   \lvert d - \kernel^i(z_X,X_\measured)\rvert \; & \leq \; (1-p_2)^{\lfloor i/n_2 \rfloor} 
  \end{array}
\]
}

\newcommand{\claimFunctionW}{
On each region, $W$ is continuous, and $E_\pi[W]$ is finite.
}

\begin{theorem}\label{thm:approx}
In a \nice GSMP, let $d_1,\ldots,d_k$ and $c_1,\ldots,c_k$ be the discrete and timed frequencies, respectively, corresponding to BSCCs of the region graph. For all $1\leq i\leq k$, the numbers $d_i$ and $c_i$ as well as the probabilities $\probm(\dme=d_i)$ and $\probm(\cme=c_i)$
can be 
approximated up to any $\varepsilon > 0$.
\end{theorem}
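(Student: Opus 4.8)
The plan is to express every one of the four target quantities as a limit that converges geometrically and whose finite truncations are numerically computable, and then to control the three sources of error (truncation of the limit, numerical quadrature, and — for the timed frequencies — truncation of an unbounded integrand) each to within $\varepsilon/3$. All the required structure is already in place from the proof of Theorem~\ref{thm:main1}: almost every run is trapped in some BSCC $\mathcal B$ of the region graph; inside $\mathcal B$ the chain $\gssmc$ has a unique invariant measure $\pi_{\mathcal B}$ (via the minorization of Lemma~\ref{lem:small}), and $d_{\mathcal B} = E_{\pi_{\mathcal B}}[h]$ and $c_{\mathcal B} = E_{\pi_{\mathcal B}}[\tau]/E_{\pi_{\mathcal B}}[W]$, where $h$ is the indicator of $\measured$ and $\tau = W\cdot h$. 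Writing $X_{\mathcal B}$ for the set of configurations of $\mathcal B$, we have $\probm(\dme = d_i) = \sum_{\mathcal B:\, d_{\mathcal B} = d_i} \probm(\reach(X_{\mathcal B}))$ and analogously for $\cme$, so it suffices to approximate (i) the reachability probabilities $\probm(\reach(X_{\mathcal B}))$, (ii) the values $d_{\mathcal B}$, and (iii) the values $c_{\mathcal B}$.

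For (i) and (ii) I would first establish a geometric-convergence estimate: there are computable $n_1,n_2\in\Nset$ and $p_1,p_2>0$ with $\lvert \probm(\reach(X_{\mathcal B})) - \kernel^i(z_0,X_{\mathcal B})\rvert \le (1-p_1)^{\lfloor i/n_1\rfloor}$ and $\lvert d_{\mathcal B} - \kernel^i(z_X,X_{\mathcal B,\measured})\rvert \le (1-p_2)^{\lfloor i/n_2\rfloor}$ for any $z_X\in X_{\mathcal B}$, where $X_{\mathcal B,\measured}$ is the subset of $X_{\mathcal B}$ with state $\measured$. These bounds follow from the uniform lower bound on reaching a $\delta$-separated configuration (Lemma~\ref{lem:delta-sep}) combined with the minorization step of Lemma~\ref{lem:small}, in the same way geometric ergodicity is obtained in the general state-space theory. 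Given $\varepsilon$, the estimate fixes a step index $i$ for which truncating the limit at $i$ is within $\varepsilon/2$, so it remains only to evaluate the single number $\kernel^i(z,\cdot)$ to precision $\varepsilon/2$.

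Evaluating $\kernel^i(z,A)$ for $A$ a union of regions is a finite iterated integral of products of the continuous, positive densities $f_{e\mid\nu(e)}$ and their tail integrals, exactly as in the definition of $\kernel$; since every integrand is continuous and every domain is either bounded (by the region-graph bound) or has an integrable tail, standard numerical integration with error propagated over the $i$ fixed steps delivers any desired accuracy. For the timed frequencies I would treat $E_{\pi_{\mathcal B}}[\tau]$ and $E_{\pi_{\mathcal B}}[W]$ separately: geometric ergodicity gives $\lvert\int_{\configs}\kernel^i(z_X,\de{y})\,g(y) - E_{\pi_{\mathcal B}}[g]\rvert \to 0$ geometrically for bounded $g$, and each such integral is itself numerically computable as above. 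To pass from bounded $g$ to $\tau$ and $W$, which may be unbounded when an unbounded variable-delay event is scheduled, I would use that $W$ is continuous on each region and $E_{\pi_{\mathcal B}}[W]<\infty$: truncating $W$ at a threshold $M$ leaves a tail whose contribution is controlled by the finite expectation, so both numerator and denominator, and hence the quotient $c_{\mathcal B}$, are approximable.

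The main obstacle is the timed quantity $c_{\mathcal B}$ rather than the discrete one. Two points must be secured. First, that $W$ is genuinely continuous on each region and $\pi_{\mathcal B}$-integrable; this is where the finite-expectation hypothesis on every density is essential, and where one must verify that the scaled densities $f_{e\mid\nu(e)}$ remain well-behaved as $\nu$ ranges over a region. Second, that the tail $\int_{\{W>M\}} W\,\de{\pi_{\mathcal B}}$ can be bounded using only the finite expectation, so that a computable truncation threshold $M=M(\varepsilon)$ exists even though $\pi_{\mathcal B}$ is known only through the limiting kernels. Once these are in place, summing the truncation, step-index, and quadrature errors (each below $\varepsilon/3$) yields the approximation of $c_i$; the discrete values $d_i$ and the probabilities $\probm(\dme=d_i)$ and $\probm(\cme=c_i)$ then follow from the same geometric estimates without the truncation difficulty.
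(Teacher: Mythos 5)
Your proposal follows essentially the same route as the paper's proof: the same decomposition of $\probm(\dme=d_i)$ into reachability probabilities of BSCCs, the same geometric-convergence bounds $\lvert \probm(\reach(X)) - \kernel^i(z_0,X)\rvert \le (1-p_1)^{\lfloor i/n_1\rfloor}$ and $\lvert d - \kernel^i(z_X,X_\measured)\rvert \le (1-p_2)^{\lfloor i/n_2\rfloor}$ derived from the separation and minorization lemmata, and the same handling of $c_i = E_\pi[\tau]/E_\pi[W]$ via truncation of the unbounded part of the state space together with continuity of $W$ on regions and finiteness of $E_\pi[W]$. The two points you flag as needing care (continuity and $\pi$-integrability of $W$, and tail control for the truncation) are exactly what the paper's second Claim establishes, so your outline is correct and matches the paper's argument.
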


\begin{proof}

Let $X_1, \ldots, X_k$ denote the sets of configurations in individual BSCCs and $d_i$ and $c_i$ correspond to $X_i$. 
Since we reach a BSCC almost surely, we have
\[
 \probm(\dme = d_i) = \sum_{j=1}^k \probm(\dme = d_i \mid \reach(X_j)) \cdot \probm(\reach(X_j))=\sum_{j=1}^k \indicator{d_j=d_i} \cdot \probm(\reach(X_j))
\] where the second equality follows from the fact that
almost all runs in the $j$-th BSCC yield the discrete frequency $d_j$.
Therefore, $\probm(\dme = d_i)$ and $d_i$ can be approximated as follows using the methods of~\cite{RR:GSSMC-PS}.

 
\begin{claim}\label{claim:approx-reach}
\claimApproxReach
\end{claim}
Further, we want to approximate $c_i=E_\pi[\tau]/E_\pi[W]$, where $\pi$ is the invariant measure on $X_i$. In other words, we need to approximate $\int_{X_i} 
\tau(x)\pi(dx)$ and $\int_{X_i} 
W(x)\pi(dx)$. 
An $n$-th approximation $w_n$ of $E_\pi[W]$ can be gained by discretizing the part of the state space $\{(s,\nu)\in \Gamma \mid \forall e \in \sched(s): \nu(e) \leq n\}$ into, e.g., $1/n$-large hypercubes, where the invariant measure $\pi$ is approximated using $\kernel^n$.
This approximation converges to $E_\pi[W]$ since $W$ is continuous and $E_\pi[W]$ is finite.
For the details of the following claim, see\appref{sec:app-results}.

%
%
%

\begin{claim}
 \claimFunctionW
\end{claim}

This concludes the proof as 
$\tau$ only differs from $W$ in being identically zero on some
regions; thus, $E_\pi[\tau]$ can be approximated analogously.

\myspacea

%
%
%
%

\end{proof}


\myspacea\myspacea

\section{Conclusions, future work}


We have studied long run average properties of generalized semi-Markov
processes with both fixed-delay and variable-delay events.  We have shown
that two or more (unrestricted) fixed-delay events lead to considerable
complications regarding stability of GSMP. In particular, we have shown that
the frequency of states of a GSMP may not be well-defined and that bottom
strongly connected components of the region graph may not be reachable with
probability one. This leads to counterexamples disproving several results
from literature. On the other hand, for \nice GSMP we have proved
that the frequencies of states are well-defined for almost all
runs. Moreover, we have shown that almost every run has one of finitely many
possible frequencies that can be effectively approximated (together with
their probabilities) up to a given error tolerance.

In addition, the frequency measures can be easily extended into the mean payoff setting. Consider assigning real rewards to states. The mean payoff then corresponds to the frequency weighted by the rewards.

Concerning future work, the main issue is efficiency of algorithms for
computing performance measures for GSMP. We plan to work on both better
analytical methods as well as 
practicable approaches to Monte Carlo
simulation. One may also consider extensions of our positive results to
controlled GSMP and games on GSMP.

\myspacea

\bibliographystyle{splncs03}
\bibliography{str-long,concur}

\appendix

\newpage
\section{Details on counterexamples}
\label{sec:app-counter}

\newcommand{\XS}{\mathit{S}}
\newcommand{\XD}{D}
\newcommand{\Xeps}{\alpha}
\newcommand{\sestnact}{{3}}
\newcommand{\sedmnact}{{3}}
\newcommand{\hl}[1]{L(#1)}

\begin{definition}
A \emph{distance} of two events $e$ and $f$ (in this order) in a configuration $(s,\nu)$ is $\fr(\nu(f)-\nu(e))$ .
\end{definition}

\subsection{When the frequencies $\dme$ and $\cme$ are not well-defined: Proof of Theorem~\ref{thm-two-events-no-limit}}\label{app:ex:not-well-def}
In the following, we prove that in our example $\dme$ and $\cme$ are not well-defined for almost all runs. Namely, that there is a set of runs with positive measure such that for these runs the partial sums oscillate.

After setting the initial distance of events $p$ and $c$, every run stays in the 1-states (labeled with 1) until the distance is lessened in the state \emph{2 C-waiting}. This sojourn in the 1-states is called the first \emph{phase}. Then the run continues with the second phase now in the 2-states until the distance is lessened again and it moves back to 1-states and begins the third phase etc. Each phase consists of repeating several \emph{attempts}, i.e.~running through the cycle of length three. In each attempt the distance gets smaller with probability $d$ (where $d$ is the current distance) and stays the same with probability $1-d$ due to the uniform distribution of $t$. This behavior corresponds to the geometric distribution. The density on the new distance is uniform on the whole $d$. A phase is called \emph{strong} if the newly generated distance is at most half of the old one. Further, we define a \emph{half-life} to be a maximum continuous sequence of phases where exactly the last one is strong. Every run can thus be uniquely decomposed into a sequence of half-lives. The random variable stating the distance at the beginning of the $j$-th phase of the $i$-th half-life is denoted $\XD_{i,j}$. Denoting the number of phases in the $i$-th half-life by $\hl{i}$ we get $\XD_{n-1,\hl{i}}\geq2\XD_{n,1}$. Thus by induction, we have for all $n,i\in\Nset$ and $j\leq\hl{n-i}$,
\begin{equation}\label{eq:decka}
 \XD_{n-i,j}\geq 2^i\cdot\XD_{n,1}
\end{equation} 
Further, let $\XS_{i,j}$ be the number of attempts in the $j$-th phase of the $i$-th half-life, i.e.~a \emph{length} of this phase. We can now prove the following lemma. Roughly speaking, there are runs (of overall positive measure) where some phase is longer than the overall length of all phases up to that point. Note that the precise statement of the lemma implies moreover that this happens even infinitely often on runs of overall positive measure.

\begin{lemma}
There are $\Xeps>0$ and $m>0$, such that for every $n>1$ there is a set $\setofruns_n$ of measure at least $m$ of runs satisfying $$\XS_{n,1}\geq \Xeps \sum_{\substack{i=1..n-1\\j=1..\hl{i}}}\XS_{i,j}$$
\end{lemma}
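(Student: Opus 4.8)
The plan is to model the succession of phases as an explicitly constructed stochastic process and then to compare the length $\XS_{n,1}$ of the opening phase of the $n$-th half-life against the total prior length, which I abbreviate $T_n := \sum_{i=1}^{n-1}\sum_{j=1}^{\hl i}\XS_{i,j}$. First I would record the conditional laws of the relevant variables. Conditioned on the whole sequence of distances $\{\XD_{i,j}\}$, each phase length $\XS_{i,j}$ is an independent geometric variable with success parameter $\XD_{i,j}$: within a phase the distance is frozen, and each attempt lessens it with probability equal to the current distance. Hence $E[\XS_{i,j}\mid\{\XD\}]=1/\XD_{i,j}$, and, crucially, $\XS_{n,1}$ and $T_n$ are conditionally independent given $\{\XD\}$. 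Moreover, since the freshly generated distance is uniform on the old one, each phase is strong with probability $\tfrac12$ independently of everything else, so the half-life lengths $\hl i$ are i.i.d.\ with $\probm(\hl i=\ell)=2^{-\ell}$ and $E[\hl i]=2$.

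Next I would exploit \eqref{eq:decka} to control the prior total. Writing $k=n-i$ in \eqref{eq:decka} gives $\XD_{k,j}\ge 2^{\,n-k}\XD_{n,1}$ for every $j\le\hl k$ and $k\le n-1$, so that $1/\XD_{k,j}\le 2^{-(n-k)}/\XD_{n,1}$. Summing the conditional expectations yields $E[T_n\mid\{\XD\}]\le Z_n/\XD_{n,1}$, where $Z_n:=\sum_{k=1}^{n-1}\hl k\,2^{-(n-k)}$ collects the half-life lengths against geometrically decaying weights. By linearity these weights tame the $\hl k$: one has $E[Z_n]=2\sum_{m=1}^{n-1}2^{-m}<2$ uniformly in $n$. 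A first Markov inequality then gives $\probm(Z_n<K)\ge 1-2/K$, and on the $\sigma(\{\XD\})$-measurable event $\{Z_n<K\}$ a second, conditional Markov inequality gives $\probm\!\big(T_n<2K/\XD_{n,1}\mid\{\XD\}\big)\ge\tfrac12$.

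It remains to show the opening phase is long with constant probability. Since for $n>1$ at least one strong phase has already occurred, the distance has at least halved a value below $1$, so $\XD_{n,1}<\tfrac12$; the geometric tail then gives $\probm\!\big(\XS_{n,1}\ge t/\XD_{n,1}\mid\{\XD\}\big)\ge(1-\XD_{n,1})^{t/\XD_{n,1}}\ge e^{-2t}$, a bound independent of both $\XD_{n,1}$ and $n$. Now I would fix $K$, choose $\Xeps$ small, and observe that on the intersection $\{T_n<2K/\XD_{n,1}\}\cap\{\XS_{n,1}\ge 2\Xeps K/\XD_{n,1}\}$ we have $\XS_{n,1}\ge\Xeps T_n$. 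By the conditional independence of $\XS_{n,1}$ and $T_n$, the conditional probability of this intersection is at least $\tfrac12 e^{-4\Xeps K}$ on $\{Z_n<K\}$; integrating over $\{Z_n<K\}$ yields $\probm(\XS_{n,1}\ge\Xeps T_n)\ge(1-2/K)\tfrac12 e^{-4\Xeps K}=:m>0$, uniformly in $n$. Taking $\setofruns_n:=\{\XS_{n,1}\ge\Xeps T_n\}$ completes the lemma.

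The main obstacle is precisely that $\XS_{n,1}$ and $T_n$ are of the \emph{same} order $1/\XD_{n,1}$, so the opening phase cannot deterministically dominate the prior sum; the resolution is to use the conditional independence given $\{\XD\}$ together with the correct order of quantifiers — $T_n$ concentrates via Markov around $Z_n/\XD_{n,1}$ with $Z_n$ of bounded mean, while the geometric $\XS_{n,1}$ overshoots any fixed multiple of its mean $1/\XD_{n,1}$ with probability bounded away from $0$, and these two estimates combine only because $\Xeps$ is chosen \emph{after} the Markov constant $K$. Finally, since each $\setofruns_n$ has measure at least $m$, the reverse Fatou inequality $\probm(\limsup_n\setofruns_n)\ge\limsup_n\probm(\setofruns_n)\ge m$ shows that on a set of positive measure some phase beats the sum of all preceding phases infinitely often, which forces the partial sums defining $\dme$ and $\cme$ to oscillate and thereby yields Theorem~\ref{thm-two-events-no-limit}.
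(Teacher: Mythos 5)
Your proof is correct, and it reaches the lemma by a genuinely different route than the paper's own proof. Both arguments share the same skeleton: via the key inequality $D_{n-i,j}\ge 2^i D_{n,1}$, the prior total $T_n=\sum_{i<n}\sum_{j\le L(i)}S_{i,j}$ and the record phase $S_{n,1}$ both live at scale $1/D_{n,1}$, and one must show that, with probability bounded below uniformly in $n$, the former stays below a constant multiple of that scale while the latter exceeds one. The difference lies in how the bound on $T_n$ is obtained and how the two events are combined. The paper pins down every prior quantity individually, demanding $L(i)\le (n-i)+3$ for each earlier half-life and $S_{i,j}\le 3(n-i)/D_{i,j}$ for each earlier phase; it pays for these demands with a union bound (two convergent series), bounds $T_n$ deterministically on the surviving event, and intersects the conditions by subtracting failure probabilities, so it never needs any independence and it produces explicit constants ($\alpha=1/18$ and a measure bound $1/2-1/8-1/4=1/8$). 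You aggregate instead: conditionally on the distance sequence, $E[T_n\mid\{D\}]\le Z_n/D_{n,1}$ with $Z_n=\sum_{k<n}L(k)\,2^{-(n-k)}$ of mean less than $2$ uniformly in $n$, so one Markov inequality (for $Z_n$) plus one conditional Markov inequality (for $T_n$) replace the paper's per-phase thresholds and series computations. The price is the appeal to conditional independence of $S_{n,1}$ and $T_n$ given the distances, a step the paper's subtractive scheme avoids; it is legitimate here because, given the distance sequence, the phase lengths are independent geometrics, and the strong/weak pattern — hence $Z_n$ and the event $\{Z_n<K\}$ — is a function of the distances alone, so your conditioning and the product of conditional probabilities are sound. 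What your route buys: a shorter argument that uses only first moments, and a slightly stronger conclusion — your bound $m=(1-2/K)\tfrac12 e^{-4\alpha K}$ is positive for \emph{every} $\alpha>0$, whereas the paper certifies a single fixed $\alpha$. Your closing reverse-Fatou remark reproduces exactly the paper's subsequent step (a positive-measure set of runs lying in infinitely many $\mathcal{R}_n$), so the application to Theorem~\ref{thm-two-events-no-limit} goes through unchanged.
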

\begin{proof}
We set $\Xeps=2/(\sedmnact\cdot(6+2\cdot\sestnact))=1/18$ and $m=1/4$ and let $n>1$ be arbitrary. We define the set $\setofruns_n$ to be the set of all runs $\sigma$ such that the following conditions hold:
\begin{enumerate}
 \item $\XS_{n,1}>1/(2\XD_{n,1})$, \\
(the length of the ``last'' phase is above its expecation),
 \item for all $1\leq i<n$, $\hl{i}\leq (n-i)+\sestnact$, \\
(previous half-lives have no more phases than $n+2,n+1, \ldots, 5, 4$, respectively),
 \item for all $1\leq i<n$ and $1\leq j\leq \hl{i}$, $\XS_{i,j}\leq\sedmnact (n-i)/D_{i,j}$,\\
(all phases in previous half-lives are short w.r.t their expectations).
\end{enumerate}

Denote $\XD:=\XD_{n,1}$  We firstly prove that $\XS_{n,1}\geq \Xeps \sum_{i=1..n-1,j=1..\hl{i}}\XS_{i,j}$ for all runs in $\setofruns_n$. Due to the inequality~(\ref{eq:decka}) and requirements 2.\ and 3., we can bound the overall length of all previous phases by 
$$
\sum_{\substack{i=1..n-1\\j=1..\hl{i}}}\XS_{i,j}
\leq
\sum_{i=1}^{n-1}\frac{(i+\sestnact)\cdot \sedmnact i}{2^i\cdot \XD}
\leq
\sum_{i=1}^{\infty}\frac{(i+\sestnact)\cdot \sedmnact i}{2^i\cdot \XD}
=
\frac{\sedmnact(6+2\cdot\sestnact)}{\XD}=\frac{1}{2\Xeps\XD}$$ and conclude by the requirement 1.

It remains to prove that measure of $\setofruns_n$ is at least $m$. We investigate the measures of the runs described by requirements 1.--3. Firstly, the probability that $\XS_{n,1}>\frac{1}{2\XD_{n,1}}$ is $(1-\XD_{n,1})^{1/2\XD_{n,1}}$, which approaches $1/\sqrt{e}$ as $n$ approaches infinity and is thus greater than $1/2$ for $\XD_{n,1}\leq 1/2$, i.e.~for $n\geq2$. Out of this set of runs of measure $1/2$ we need to cut off all runs that do not satisfy requirements 2.~or 3. As for 2., the probability of $i$-th half-life failing to satisfy 2.~is $(1/2)^{(n-i)+\sestnact}$ corresponding to at least $(n-i)+\sestnact$ successive non-strong phases. Therefore, 2.~cuts off  $\sum_{i=1}^{n-1}1/2^{(n-i)+\sestnact} = \sum_{i=1}^{n-1}1/2^{i+\sestnact} \leq \sum_{i=1}^\infty1/2^{i+\sestnact}=1/2^\sestnact$. 
From the remaining runs we need to cut off all runs violating 3. 
Since the probability of each $\XS_{i,j}$ failing is $(1-\XD_{i,j})^{\sedmnact (n-i)/\XD_{i,j}}$, 
the overall probability of all violating runs is due to~\ref{eq:decka} at most 
\begin{align*}
\sum_{i=1}^{n-1} \sum_{j=1}^{L(i)} (1-\XD_{i,j})^{\sedmnact (n-i)/\XD_{i,j}}
= & \;
\sum_{i=1}^{n-1} \sum_{j=1}^{L(n-i)} (1-\XD_{n-i,j})^{\sedmnact i/\XD_{n-i,j}}
\leq 
\sum_{i=1}^{n-1} \sum_{j=1}^{L(n-i)} (1-2^i\XD)^{\sedmnact i/2^i\XD} \\
\leq & \;
\sum_{i=1}^{n-1} (i+\sestnact) (1-2^i\XD)^{\sedmnact i/2^i\XD}
\leq
\sum_{i=1}^\infty(i+\sestnact)(1/e)^{\sedmnact i} \\
= & \;
\frac{4 e^3-3}{(e^3-1)^2}
< 
1/4
\end{align*}
Altogether the measure of $\setofruns_n$ is at least $m=1/2-1/8-1/4=1/8$.\QED
\end{proof}

Due to the previous lemma, moreover, there is a set $\setofruns$ of runs of positive measure such that each run of $\setofruns$ is contained in infinitely many $\setofruns_n$'s. 

Let us measure the frequency of 1-states (we slightly abuse the notation and denote by $\dme(\sigma)$ and $\cme(\sigma)$ the sum of frequencies of all 1-states instead of one single state $\measured$). We prove that neither $\dme(\sigma)$ nor $\cme(\sigma)$ is well-defined on any $\sigma\in\setofruns$. Since attempts last for one time unit, non-existence of $\dme(\sigma)$ implies non-existence of $\cme(\sigma)$. Thus, assume for a contradiction that $\dme(\sigma)$ is well-defined. Denote $s_i$ the number of attempts in the $i$-th phase. Because 1-states are visited exactly in odd phases, we have $$\dme(\sigma)=\lim_{n\to\infty}\frac{\sum_{i=1}^n s_i\cdot odd(i)}{\sum_{i=1}^n s_i}$$ where $odd(i)=1$ if $i$ is odd and 0 otherwise. By the definition of limit, for every $\varepsilon>0$ there is $n_0$ such that for all $n>n_0$ 
\begin{equation}\label{eq:rozdil}
 \left|\frac{\sum_{i=1}^n s_i\cdot odd(i)}{\sum_{i=1}^n s_i}-\frac{\sum_{i=1}^{n-1} s_i\cdot odd(i)}{\sum_{i=1}^{n-1} s_i}\right|<\varepsilon
\end{equation}  Due to the lemma, $s_n\geq \Xeps \sum_{i=1..n-1}s_i$ happens for infinitely many both odd and even phases $n$ on $\sigma\in\setofruns$. 
Now let $\dme(\sigma)\leq 1/2$, the other case is handled symmetrically. Let $\varepsilon$ be such that $\Xeps\geq\frac{\varepsilon}{1-2\varepsilon-\dme(\sigma)}$, and we choose an odd $n>n_0$ satisfying $s_n\geq \Xeps \sum_{i=1}^{n-1}s_i \geq 
\frac{\varepsilon}{1-2\varepsilon-\dme(\sigma)} \sum_{i=1}^{n-1}s_i$. Denoting $A = \sum_{i=1}^{n-1} s_i$ and $O = \sum_{i=1}^{n-1} s_i\cdot odd(i)$ we get from~(\ref{eq:rozdil}) that
\begin{align*}
\frac{O + s_n}{A + s_n} - \frac{O}{A} 
&\;\; \geq \;\;
\frac{O + \frac{\varepsilon}{1-2\varepsilon-\dme(\sigma)} A}{A + \frac{\varepsilon}{1-2\varepsilon-\dme(\sigma)} A} - \frac{O}{A}
\;\; \stackrel{(\ast)}{=} \;\;
\frac{\varepsilon}{1-\dme(\sigma) - \varepsilon} \cdot \left( 1 - \frac{O}{A} \right) \\
&\; \stackrel{(\ast\ast)}{\geq} \;
\frac{\varepsilon}{1-\dme(\sigma)-\varepsilon} \cdot (1-\dme(\sigma)-\varepsilon) 
\;\; = \;\;
\varepsilon
\end{align*}
which is a contradiction with~(\ref{eq:rozdil}).
Notice that we omitted the absolute value from~(\ref{eq:rozdil}) because for an odd $n$ the term is non-negative. The equality $(\ast)$ is a straightforward manipulation. In $(\ast\ast)$ we use, similarly to (\ref{eq:rozdil}), that $|\frac{O}{A} - \dme(\sigma)| < \varepsilon$.
%

\subsection{Counterexamples: Proof of Claim}\label{app:ex:bscc}
\newcommand{\osmnact}{2}

In the following, we prove that the probability to reach the state \emph{\restart} is strictly less than $1$.

Similarly as in the proof of Theorem~\ref{thm-two-events-no-limit}, we introduce phases and half-lives and proceed with similar but somewhat simpler arguments. Let $d$ be the distance of events $p$ and $c$. Note that $1-d$ is the maximum length of transportation so far. The initial distance is generated in the state \emph{C-waiting} with a uniform distribution on $(0,1)$. After that, the distance gets smaller and smaller over the time (if we ignore the states where the distance is not defined) whenever we enter the state \emph{C-waiting}. Each sequence between two successive visits of \emph{C-waiting} on a run is called a \emph{phase} of this run. After each phase the current distance is lessened. The density on the new distance is uniform on the whole $d$. A phase is called \emph{strong} if the newly generated distance is at most half of the old one. Further, we define a \emph{half-life} to be a maximum continuous sequence of phases where exactly the last one is strong. Every run can thus be uniquely decomposed into a sequence of half-lives (with the last segment being possibly infinite if \emph{C-waiting} is never reached again). The random variable stating the distance at the beginning of the $i$-th half-life is denoted by $\XD_{i}$. By definition, $\XD_{i}\leq\XD_{i-1}/2$ and by induction, for every run with at least $i$ half-lives
\begin{equation}\label{eq:decka2}
 \XD_{i}\leq 1/2^i\,.
\end{equation} 
Denoting the number of phases in the $i$-th half-life by $\hl{i}$, we can prove the following lemma. 
\begin{lemma}
There is $m>0$ such that for every $n>1$ the set $\setofruns_n$ of runs $\sigma$ satisfying
\begin{enumerate}
 \item $\sigma$ does not visit \emph{\restart} during the first $n$ half-lives, and
 \item for every $1\leq i\leq n$ not exceeding the number of half-lives of $\sigma$, $\hl{i}(\sigma)\leq \osmnact\cdot i$
\end{enumerate}
has measure at least $m$.
\end{lemma}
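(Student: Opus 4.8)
The plan is to lower-bound $\probm(\setofruns_n)$ by controlling the two ways a run can fail to lie in $\setofruns_n$: by reaching \restart{} within the first $n$ half-lives, or by having some half-life $i\le n$ with more than $\osmnact\cdot i$ phases. Writing $\mathcal{B}_1$ and $\mathcal{B}_2$ for these two ``bad'' events, a union bound gives $\probm(\setofruns_n)\ge 1-\probm(\mathcal{B}_1)-\probm(\mathcal{B}_2)$, so it suffices to bound $\probm(\mathcal{B}_1)$ and $\probm(\mathcal{B}_2)$ by constants whose sum is strictly below $1$ \emph{uniformly in} $n$; the leftover slack will be the desired $m$.

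First I would dispose of $\mathcal{B}_2$. As observed just before the lemma, after each phase the new distance is uniform on $(0,d)$, so a phase is \emph{strong} (new distance at most $d/2$) with probability exactly $1/2$, independently of the current distance and of the history. Hence the strong/non-strong outcomes form an i.i.d.\ Bernoulli$(1/2)$ sequence, each $\hl{i}$ is geometrically distributed, and $\probm(\hl{i}>\osmnact\cdot i)$ is the probability that the first $2i$ phases of the $i$-th half-life are all non-strong, namely $(1/2)^{2i}$. Summing, $\probm(\mathcal{B}_2)\le \sum_{i=1}^{n}(1/4)^i\le \sum_{i=1}^{\infty}(1/4)^i = 1/3$, uniformly in $n$.

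The heart of the argument is $\probm(\mathcal{B}_1)$, and for this I would first establish the per-phase escape probability. At a visit to \emph{C-waiting} with distance $d$, the residual time of the ongoing record transport (event $t$) is uniform on $(0,d)$, being the amount by which the record transport overshoots the previous maximum, while the auxiliary event $t'$ is freshly scheduled and uniform on $(0,1)$, and the producer event $p$ merely reschedules itself without leaving \emph{C-waiting}. Thus the run escapes to \restart{} in this phase exactly when $t'$ fires before $t$, which has probability $\int_0^d \tfrac{1}{d}\,r\,\de{r}=d/2$. Indexing phases over the whole run by $j$ and letting $d_j$ be the distance entering the $j$-th phase, the escape probability at phase $j$ is at most $d_j/2$, so a union bound over phases gives $\probm(\mathcal{B}_1)\le \tfrac12\sum_j \mathbb{E}[d_j]$. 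Since conditioning on not having escaped does not alter the law $d_{j+1}\sim\mathrm{Unif}(0,d_j)$ of the continuing run, we get $\mathbb{E}[d_{j+1}]=\tfrac12\mathbb{E}[d_j]$ and hence $\mathbb{E}[d_j]\le 2^{-j}$ (mirroring the deterministic decay~(\ref{eq:decka2})), yielding $\probm(\mathcal{B}_1)\le \tfrac12\sum_{j\ge 1}2^{-j}=1/2$.

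Combining the two estimates, $\probm(\setofruns_n)\ge 1-1/3-1/2=1/6$ for every $n>1$, so the lemma holds with $m=1/6$. The main obstacle is the per-phase escape computation: one must verify from the GSMP dynamics that entering \emph{C-waiting} with distance $d$ indeed leaves a residual transport time uniform on $(0,d)$ and that the firing of $p$ keeps the process in \emph{C-waiting} with $t,t'$ inherited, so that the escape event reduces to the clean race $t'<t$. This is also the delicate point where one must check that restricting to runs that have not yet escaped does not distort the distribution of the distance, which is what makes the per-phase expectation bound $\mathbb{E}[d_j]\le 2^{-j}$ legitimate.
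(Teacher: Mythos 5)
The decisive error is your bound $\probm(\mathcal{B}_1)\le 1/2$, and it comes from an off-by-one in which distance governs the race in a phase. When the process enters \emph{C-waiting}, the record transport $t$ is an \emph{inherited} event whose residual time $R$ is uniform on $(0,d_{\mathrm{old}})$, where $d_{\mathrm{old}}$ is the distance of the \emph{previous} phase; the new distance is only realized when $t$ lands, namely $d_{\mathrm{new}}=d_{\mathrm{old}}-R$, so the same random variable $R$ decides both the race against $t'$ and the next distance. In particular, at the very first visit to \emph{C-waiting} both $t$ and $t'$ are freshly scheduled uniforms on $(0,1)$, so the escape probability there is exactly $1/2$; hence $\probm(\mathcal{B}_1)>1/2$ and your claimed bound is already false. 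With the correct indexing your union bound reads $\sum_{j\ge 1}\tfrac12 E[d_{j-1}]=\tfrac12\left(1+\tfrac12+\tfrac14+\cdots\right)=1$: it is exactly vacuous, and restricted to the first $n$ half-lives it still tends to $1$ as $n\to\infty$, so no union bound over phases can produce a constant $m>0$ uniform in $n$. This is not an artifact of sloppy constants: the expected number of lost races equals $1$, while the event of ever losing a race has probability strictly less than $1$; an additive bound cannot see the difference, only a multiplicative one can.

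The secondary error is the claim that conditioning on not having escaped does not alter the law $d_{j+1}\sim\mathrm{Unif}(0,d_j)$. Survival in a phase is the event $t'>R$, which biases $R$ downward and hence the new distance $d_{\mathrm{old}}-R$ upward; this distortion is in the unfavorable direction for both of your estimates (it slows the decay of the distances and lengthens the half-lives), so it cannot simply be waved away, though for $\probm(\mathcal{B}_2)$ it can be repaired by coupling with the unconditioned process. The paper's proof is structured precisely to avoid both problems: it conditions on the entire trajectory of distances (generated by the $t$-variables, hence independent of the fresh $t'$-variables), bounds the escape probability of each phase of the $i$-th half-life by the current distance, which is at most $D_i\le 2^{-i}$, and then \emph{multiplies} the conditional survival probabilities, obtaining $\probm\bigl(\text{no escape}\mid\text{trajectory with } L(i)\le 2i \text{ for all } i\bigr)\ge\prod_{i=1}^{\infty}(1-2^{-i})^{2i}=:m'$, and finally proves $m'>0$ by showing $\sum_i i\ln\bigl(2^i/(2^i-1)\bigr)<\infty$. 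The lemma's constant is then the product of $\probm(\mathcal{B}_2^c)\ge 2/3$ with $m'$, not the leftover of a union bound; the multiplicative conditioning step and the positivity of the infinite product are exactly the ideas your proposal is missing, and without them the statement cannot be recovered.
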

This lemma concludes the proof, as there is a set of runs of measure at least $m$ that never reach the state \emph{\restart}. We now prove the lemma.

Firstly, for every $n$ we bound the measure of runs satisfying the second condition. The probability that $\osmnact i$ consecutive phases are not strong, i.e.~$\hl{i}>\osmnact i$, is $1/2^{\osmnact i}$ as $t$ is distributed uniformly. Therefore, the probability that there is $i\leq n$ with $\hl{i}>\osmnact i$ is less than $\sum_{i=1}^n 1/2^{\osmnact i}$. This probability is thus for all $n\in\Nset$ less than $\sum_{i=1}^\infty 1/2^{\osmnact i}=1/3$. Hence, for each $n$ at least $2/3$ of runs satisfy the second condition. 

Secondly, we prove that at least $m'$ of runs satifying the second condition also satisfy the first condition. This concludes the proof of the lemma as $m'$ is independent of $n$ (a precise computation reveals that $m'>0.009$).

Recall that $D_i\leq 1/2^i$ and we assume that $\hl{i}\leq\osmnact i$. Therefore, the probability that \emph{\restart} is not reached during the $i$-th half-life is at least $(1-1/2^i)^{\osmnact i}$ as $t'$ is distributed uniformly and the distance can only get smaller during the half-life. Hence, the probability that in none of the first $n$ half-lives \emph{\restart} is reached is at least $$\prod_{i=1}^n (1-1/2^i)^{\osmnact i}$$ Thus, for every $n$, the probability is greater than $\prod_{i=1}^\infty (1-1/2^i)^{\osmnact i}=:m'$. It remains to show that $m'>0$. This is equivalent to $\sum_{i=1}^\infty \ln(1-1/2^i)^{\osmnact i}>-\infty$, which in turn can be rewritten as 
$$
\osmnact \sum_{i=1}^\infty i\ln\left(\frac{2^i}{2^i-1}\right)
\;\; < \;\;
\infty
$$
Since $\sum_{i=1}^\infty 1/i^2$ converges, it is sufficient to prove that
$$
\ln \left( \frac{2^i}{2^i-1} \right)
\; \in \;
\mathcal O(1/i^3)\,.
$$
We get the result by rewriting the term in the form of an approximation of the derivative of $\ln$ in $2^i-1$ which is smaller than the derivative of $\ln$ in $2^i-1$ because $\ln$ is concave
$$
\ln \left( \frac{2^i}{2^i-1} \right)
\;\; = \;\;
\frac{\ln(2^i)-\ln(2^i-1)}{1}
\;\; \leq \;\;
\ln'(2^i-1) 
\;\;= \;\; 
\frac{1}{2^i-1}
\; \in \;
\mathcal O(1/i^3)\,.
$$\QED

\section{Proofs of Section~\ref{sec:one-results}}
%

In this section, by saying \emph{value} of an event $e$, we mean the fractional part $\fr(\nu(e))$ when the valuation $\nu$ is clear from context.
Furthermore, by $M$ we denote the sum of $u_e$ of all fixed-delay events.

\subsection{Correctness of the region graph construction}
\label{app:region-graph}

The correctness of the region graph construction is based on the fact that configurations in one region can qualitatively reach the same regions in one step.

\begin{lemma}\label{lem:region-graph}
Let $z \region z'$ be configurations and $R$ be a region. We have $\kernel(z,R) > 0$ iff $\kernel(z',R) > 0$.
\end{lemma}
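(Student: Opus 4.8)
The plan is to prove the statement by showing that the region relation is a \emph{time-abstract bisimulation} for the one-step dynamics, so that the set of regions reachable with positive probability in one step is a region invariant. By symmetry it suffices to prove that $\kernel(z,R) > 0$ implies $\kernel(z',R) > 0$ whenever $z \region z'$; write $z = (s,\nu)$ and $z' = (s,\nu')$.

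First I would check that the combinatorial data entering the definition of $\kernel$ agrees for $z$ and $z'$. Since $z \region z'$ forces a common state $s$, the sets $V$ and $F$ of scheduled variable- and fixed-delay events coincide. For a fixed-delay event $f$ the remaining time is $u_f - \nu(f)$, and because $u_f \in \Nset$ is a constant of the GSMP while the region fixes $\intpart(\nu(f))$ and the ordering of the fractional parts $\fr(\nu(f))$, the ordering of the remaining times $\{u_f - \nu(f) \mid f \in F\}$ is determined by the region of $z$. Hence the set $F'$ of fixed-delay events with minimal remaining time is the same for $z$ and $z'$.

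The heart of the argument is the observation that, as a delay $t \geq 0$ elapses, the configuration $(s,\nu)$ shifted by $t$ traverses a sequence of regions that depends only on the region of $(s,\nu)$; this is the standard time-successor property of the region construction, following from the fact that an order-isomorphism of fractional parts with matching integral parts is preserved under the common translation obtained by letting the fractional parts cross integer boundaries in the same order. A transition out of $z$ consists of some event set $E$ (either a single $e \in V$ or the fixed set $F'$) occurring at a time $t$, followed by a reset: inherited events move to $\nu(e)+t$, new events to $0$, and $\nu(\last)$ to $t$. Whether the resulting configuration lies in a given region $R$ depends only on which time-successor region of $(s,\nu)$ the shifted valuation belongs to, together with $s'$ and $E$, since the reset acts identically on region-equivalent valuations. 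I would therefore match each delay $t$ from $z$ with a delay $t'$ from $z'$ landing in the corresponding time-successor region: for the fixed winner $F'$ the occurrence time is the deterministic minimal remaining time, whose time-successor region is region-invariant; for a variable winner $e$ the admissible $t$ form a union of the open intervals corresponding to time-successor regions, which the correspondence maps to positive-length intervals for $z'$.

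It then remains to combine this with positivity of the probability weights. For the fixed-winner term $\hit(F',u) \cdot \win(F',u)$, positivity of $\win(F',u)$ means every variable event survives past $u$, a region-determined condition since the densities are positive on their whole support, while $\occur(s,F')(s')>0$ depends only on $s$, $F'$, $s'$; all of these transfer to $z'$. For a variable winner $e$, positivity of $\kernel(z,R)$ means a positive-measure set of times $t$ both keeps $e$ winning and sends the successor into $R$; the time-abstract correspondence carries this to a positive-measure set of times for $z'$ because region boundaries are crossed in the same order and the winning densities are strictly positive on the relevant intervals. The main obstacle I anticipate is precisely this last step: making rigorous that the correspondence between elapsed delays preserves both the reached time-successor region and positive Lebesgue measure, and verifying that the remaining times of fixed events and the survival conditions for variable events are genuine region invariants under the slightly asymmetric handling of the $>B$ case in the definition of $\region$. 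I would isolate the time-successor property as a separate sublemma and appeal to the continuity and strict positivity of the densities $f_e$ on their supports for the measure-positivity part.
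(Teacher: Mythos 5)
Your proposal is correct and takes essentially the same route as the paper's proof: the same case split between a fixed-delay winner and a variable-delay winner, with positivity transferred to $z'$ via the region relation (the matching order of fractional parts yields a corresponding interval of occurrence times between adjacent event values) and via strict positivity of the densities on their supports, so that the successor indicator, the set $F'$, and the survival integrals are all region-invariant. The paper merely phrases this as a proof by contradiction and argues the interval correspondence directly instead of isolating a time-successor sublemma, but the substance is identical.
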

\begin{proof}
For the sake of contradiction, let us fix a region $R$ and a pair of configurations  $z \region z'$ 
such that $\kernel(z,R) > 0$ and $\kernel(z',R) = 0$. Let $z = (s,\nu)$ and $z' = (s,\nu')$.

First, let us deal with the fixed-delay events. Let us assume that the part of $\kernel(z,R)$ contributed by the variable-delay events $V$ is zero, i.e. 
$\sum_{e\in V} \int_0^\infty \hit(\{e\},t) \cdot \win(\{e\},t)\; \de{t} = 0$. Then the set $E$ of fixed-delay events scheduled with the minimal remaining time in $z$ must be non-empty, i.e. some $e \in E$. We have
\begin{align*}
 P(z,R) 
\; = & \;
\occur(s,E)(s') \cdot \indicator{\bar{\nu} \in R} \cdot \prod_{c\in V} \int_{\nu(e)}^\infty f_{c \mid \nu(c)}(y) \; \de{y} > 0 \\
 P(z',R)
\; = & \;
\occur(s,E)(s') \cdot \indicator{\bar{\nu}' \in R} \cdot \prod_{c\in V} \int_{\nu(e)}^\infty f_{c \mid \nu'(c)}(y) \; \de{y} = 0
\end{align*}
where $s'$ is the control state of the region $R$ and $\bar{\nu}$ and $\bar{\nu}'$ are the valuations after the transitions from $z$ and $z'$, respectively.
It is easy to see that from $z \region z'$ we get that
$\bar{\nu} \in R$ iff $\bar{\nu}' \in R$. Hence, $\kernel(z,R)$ and $\kernel(z',R)$ can only differ
in the big product. Let us fix any $c \in V$. 
We show that $\int_{\nu(e)}^\infty f_{c \mid \nu'(c)}(y) \; \de{y}$ 
is positive. Recall that the density function $f_c$ can qualitatively change only on
integral values. Both $z$ and $z'$ have the same order of events' values. Hence, the integral is positive for $\nu'$ iff
it is positive for $\nu$. We get $\kernel(z',R) > 0$ which is a contradiction.

On the other hand, let us assume that there is a
variable-delay event $e\in V$ such that 
$$\int_0^\infty \occur(s,\{e\})(s') \cdot \indicator{\nu_t \in R} \cdot 
f_{e\mid\nu(e)}(t) \cdot \prod_{c\in V \setminus \{e\}} \int_t^\infty
      f_{c \mid \nu(c)}(y) \; \de{y} \; \de{t}
 > 0$$
where $\nu_t$ is the valuation after the transition from $z$ with waiting time $t$.
There must be an interval $I$ such that for every $t \in I$ we have that $f_{e \mid \nu(e)}(t)$ is positive, $\indicator{\nu_t \in R} = 1$, and $\int_t^\infty
      f_{c \mid \nu(c)}(y) \; \de{y} > 0$ for any $c \in V \setminus \{e\}$.
From the definition of the region relation, this interval $I$ corresponds to an interval
between two adjacent events in $\nu$. Since $z \region z'$, there must be also an interval $I'$ such that for every $t \in I'$ we have that
$f_{e \mid \nu'(e)}(t)$ is positive, $\indicator{\nu'_t \in R} = 1$, and $\int_t^\infty
      f_{c \mid \nu'(c)}(y) \; \de{y} > 0$ for any $c \in V \setminus \{e\}$.
Hence, $\kernel(z',R) > 0$, contradiction.\QED
\end{proof}

\subsection{Proof of Proposition~\ref{prop:reach-bscc}}
\label{app:reach}

\begin{reflemma}{lem:delta-sep}
\LemDeltaSep
\end{reflemma} 

\begin{proof}

We divide the $[0,1]$ line segment into $3 \cdot |\events| + 1$ slots of equal length $\delta$. 
Each \emph{value} of a scheduled event lies in some slot. We show how to reach a configuration where the values are separated by empty slots. 

As the time flows, the values shift along the slots. When an event occurs, values of all the newly scheduled events are placed to $0$. The variable-delay events can be easily separated if we guarantee that variable-delay events occur in an interval of time when the first and the last slots of the line segment are empty. 

We let the already scheduled variable-delay events occur arbitrarily.
For each newly scheduled variable-delay event we place a token at the end of an empty slot with its left and right neighbour slots empty as well (i.e. there is no clock's value nor any other token in these three slots). Such slot must always exist since there are more slots that $3 \cdot |\events|$. As the time flows we move the tokens along with the events' values. Whenever a token reaches $1$ on the $[0,1]$ line segment, we do the following. If the valuation of its associated event is not between its lower and upper bound, we move the token to $0$ and wait one more time unit. Otherwise, we let the associated event occur from now up to time $\delta$. Indeed, for any moment in this interval, the first and the last slots of the line segment are empty. The probability that all variable-delay events occur in these prescribed intervals is bounded from below because events' densities are bounded from below. 

The fixed-delay events cause more trouble because they occur at a fixed moment; possibly in an 
occupied slot. If a fixed-delay event always schedules itself (or there is a cycle of fixed-delay events that schedule each other), its value can never be separated from another such fixed-delay event. 
Therefore, we have limited ourselves to at most one ticking event $e$. Observe that every other event has its lifetime -- the length of the chain of fixed-delay events that schedule each other. The lifetime of any fixed-delay event is obviously bounded by $M$ which is the sum of delay of all fixed-delay events in the system. After time $M$, all the old non-ticking events ``die'', all the newly scheduled non-ticking events are separated because they are initially scheduled by a variable-delay event. Therefore, we let the variable-delay events occur as explained above for $m$ steps such that it takes more than $M$ time units in total. We set $m = \lceil M/\delta \rceil$ since each step takes at least $\delta$ time.\QED
\end{proof}

\begin{reflemma}{lem:reach}
\LemReach
Furthermore, $\kernel^k(z,X) > p$ where $X \subseteq R'$ is the set of $(\delta/3^k)$-separated configurations.
\end{reflemma}
\begin{proof}
Let $z \in R_0$, $k \in \Nset$, and $R_0, R_1, \ldots, R_k$ be a path in the region graph to the region $R = R_k$. We can follow this path so that in each step we lose two thirds of the separation. At last, we reach a $(\delta/3^k)$-separated configuration in the target region $R_k$. We get the overall bound on probabilities from bounds on every step. 

In each step either a variable-delay event or a set of fixed-delay events occur. Let $\delta'$ be the separation in the current step.
To follow the region path, a specified event must occur in an interval between two specified values which are $\delta'$-separated. A fixed-delay event occurs in this interval for sure because it has been scheduled this way. For a variable-delay event, we divide this interval into thirds and let the event occur in the middle subinterval. This happens with a probability bounded from below because events' densities are bounded from below. Furthermore, to follow the path in the region graph, no other event can occurs sooner. Every other event has at least $\delta'/3$ to its upper bound; the probability that it does \emph{not} occur is again bounded from below.\QED
\end{proof}
 
\subsection{Proof of Proposition~\ref{prop:bscc-not-sync}}
\label{app:frequencies}

\begin{refproposition}{prop:bscc-not-sync}
\PropBsccNotSync
\end{refproposition}

\begin{proof}

First, we show using the following lemma that $\gssmc$ has a unique invariant measure and that the Strong Law of Large Numbers
holds for $\gssmc$. We prove the lemma later in this subsection.

\begin{reflemma}{lem:small}
 \LemSmall
\end{reflemma}

\noindent
A direct corollary of Lemma~\ref{lem:small} is that the set of configurations is \emph{small}.

\begin{definition}\label{def:small}
  Let $n\in \Nset$, $\varepsilon>0$, and $\smallmeasure$ be a probability measure on $(\configs,\configsfield)$.
  The set $\configs$ is $(n,\varepsilon,\smallmeasure)$-\emph{small} if for all $z\in \configs$ and $A\in \configsfield$
  we have that $\kernel^m(z,A) \geq \varepsilon \cdot \nu(A)$.
\end{definition}

\noindent
Indeed, we can set $n = k + \ell$ and $\varepsilon = \alpha + \beta$ and we get the condition of the definition. 

\begin{corollary}\label{cor:small}
 There is $n \in \Nset$, $\varepsilon > 0$, and $\smallmeasure$ such that $\configs$ is $(n,\varepsilon,\smallmeasure)$-small.
\end{corollary}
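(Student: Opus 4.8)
The plan is to derive the corollary directly from the two clauses of Lemma~\ref{lem:small} by a single application of the Chapman--Kolmogorov equation. Clause~1 provides a fixed horizon $k$ together with a uniform lower bound $\alpha$ on the probability of landing in the set $C$, regardless of the starting configuration; clause~2 provides a horizon $\ell$, a constant $\beta$, and a probability measure $\smallmeasure$ such that the $\ell$-step kernel out of any point of $C$ dominates $\beta\,\smallmeasure$. The idea is to first steer an arbitrary configuration into $C$ (via clause~1) and then exploit the minorization available from inside $C$ (via clause~2), so that the $(k{+}\ell)$-step kernel minorizes $\smallmeasure$ uniformly from every configuration.

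Concretely, I would set $n = k + \ell$ and take $\smallmeasure$ to be the very measure supplied by clause~2. Fix any $z \in \configs$ and any $A \in \configsfield$. Writing $\kernel^{n}(z,A) = \int_{\configs} \kernel^{k}(z,\de{y})\,\kernel^{\ell}(y,A)$, I restrict the integral to $C$ --- permissible because the integrand is non-negative --- bound $\kernel^{\ell}(y,A) \geq \beta\,\smallmeasure(A)$ for $y \in C$ by clause~2, and factor out the constant $\beta\,\smallmeasure(A)$. The remaining integral is $\int_{C} \kernel^{k}(z,\de{y}) = \kernel^{k}(z,C) \geq \alpha$ by clause~1, yielding $\kernel^{n}(z,A) \geq \alpha\beta\,\smallmeasure(A)$. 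Hence $\configs$ is $(n,\varepsilon,\smallmeasure)$-small with $n = k + \ell$ and $\varepsilon = \alpha\beta$, matching Definition~\ref{def:small}.

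I expect no genuine obstacle here: the entire weight of the result sits in Lemma~\ref{lem:small}, and the corollary merely repackages its two clauses into the single minorization demanded by Definition~\ref{def:small}. The only points needing a little care are that $C$ is measurable and that $y \mapsto \kernel^{\ell}(y,A)$ is a measurable function, so that restricting the domain of integration to $C$ is legitimate; and that the two constants combine \emph{multiplicatively} rather than additively, so the correct value is $\varepsilon = \alpha\beta$. This smallness property is precisely the hypothesis under which the standard general state space machinery of~\cite{MT:book} yields the unique invariant measure and the strong law of large numbers invoked in Proposition~\ref{prop:bscc-not-sync}.
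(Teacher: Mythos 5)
Your proof is correct and takes essentially the same route as the paper: both compose clause~1 and clause~2 of Lemma~\ref{lem:small} over the intermediate set $C$ via the Chapman--Kolmogorov decomposition, with $n = k + \ell$. In fact yours is the more careful version, since the paper writes $\varepsilon = \alpha + \beta$ where the constants must combine multiplicatively, and your $\varepsilon = \alpha\beta$ is the correct value.
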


From the fact that the whole state space of a Markov chain is small, 
we get the desired statement using standard results on Markov chains on general state space.
We get that $\gssmc$ has a unique invariant measure $\pi$ and that the SLLN holds for $\gssmc$, see~\cite[Theorem 3.6]{BKKKR-HSSC2011}. 

From the SLLN, we directly get that $\dme = E_\pi[\delta]$. Now we show that
$\cme = \frac{E_\pi[\tau]}{E_\pi[W]}$.
\todo{mohl by ses prosim na toto podivat, ze bychom to prodiskutovali?}
Let us consider a run $(s_0,\nu_0)\;(s_1,\nu_1) \cdots$. By $t_i$ we denote 
$\nu_{i+1}(\last)$ -- the time spent in the $i$-th state. We have
\begin{align*}
 \cme(\sigma) &= \lim_{n\rightarrow \infty} \frac{\sum_{i=0}^n \delta(s_i) \cdot t_i}{\sum_{i=0}^n t_i}  
               = \lim_{n\rightarrow \infty} \frac{\sum_{i=0}^n \delta(s_i) \cdot t_i}{n} \cdot \frac{n}{\sum_{i=0}^n t_i}
               = \frac{ \lim_{n\rightarrow \infty} (\sum_{i=0}^n \delta(s_i) \cdot t_i)/n}{\lim_{n\to\infty} (\sum_{i=0}^n t_i)/n} \\
              &= \frac{E_\pi[\tau]}{E_\pi[W]}
\end{align*}
The fact that $\cme(\sigma)$ is well-defined follows from the end which justifies the manipulations with the limits.
It remains to explain the last equality. First, let is divide the space of configurations into a grid $C_\delta$.
Each $\square \in C_\delta$ is a hypercube of configurations of unit length $\delta$. 
By $z_i$, we denote the $i$-th configuration of the run. We obtain
\begin{align*}
 \lim_{n\to\infty} \frac{ \sum_{i=0}^n t_i}{n} 
             &= \lim_{n\to\infty} \sum_{\square \in C_\delta} \frac{ \sum_{i=0}^n \indicator{z_i \in \square} \cdot t_i}{n} \\
             &= \sum_{\square \in C_\delta} \lim_{n\to\infty} \frac{ \sum_{i=0}^n \indicator{z_i \in \square} \cdot t_i}{\sum_{i=0}^n \indicator{z_i \in \square}} \cdot \lim_{n\to\infty} \frac{\sum_{i=0}^n \indicator{z_i \in \square}}{n}
              = (\ast)
\end{align*}
The second limit equals by the SSLN to $\pi(\square)$. 
By taking $\delta \to 0$ we get that $(\ast) = E_\pi[W]$.

By similar arguments we also get that 
\begin{align*}
 \lim_{n\to\infty} \frac{ \sum_{i=0}^n \delta(s_i) \cdot t_i}{n} = E_\pi[\tau]
\end{align*}
\QED
\end{proof}

\noindent
For the proof of Lemma~\ref{lem:small} we introduce 
several definitions
and two auxiliary lemmata. 

\begin{definition}
A path $(s_0,\nu_0) \cdots (s_n,\nu_n)$ is $\delta$-wide if for every $0 \leq i \leq n$ the configuration $(s_i,\nu_i)$ is $\delta$-separated and for every $0 \leq i < n$ any every bounded variable-delay event $e \in \sched(s_i)$ we have $\nu_i(e) + \nu_{i+1}(\last) < u_e - \delta$, i.e. no variable-delay event gets $\delta$ close to its upper bound.

We say that a path $(s_0,\nu_0) \cdots (s_n,\nu_n)$ has a trace $\bar{s}_0 E_1 \bar{s}_1 E_1 \cdots E_n \bar{s}_n$ if $\bar{s}_i = s_i$ for every $0 \leq i \leq n$ and for every $0 < i \leq n$ we can get from $(s_{i-1},\nu_{i-1})$ to $(s_i,\nu_i)$ via
occurrence of the set of events $E_i$ after time $\nu_i(\last)$.

A path $(s_0,\nu_0) \cdots (s_n,\nu_n)$ has a \emph{total time} $t$ if $t = \sum_{i=1}^{n} \nu_i(\last)$. 
%
%

%
%
\end{definition}

The idea is that a $\delta$-wide path can be approximately followed with positive probability. Furthermore, as formalized by the next lemma, if we have different $\delta$-wide paths to the same configuration $z^\ast$ that have the same length and the same trace, we have similar $n$-step behavior (on a set of states specified by some measure $\smallmeasure$). 

\begin{lemma}\label{lem:fuzzying}
 For any $\delta > 0$, any $n \in\Nset$, any configuration $(s_n,\nu_n)$, and any
trace $s_0 E_1 \cdots E_n s_n$ there is a probability measure $\smallmeasure$ and $\beta>0$ such that the following holds. For every $\delta$-wide path $(s_0,\nu_0) \cdots (s_n,\nu_n)$ with trace $T = s_0 E_1 \cdots E_n s_n$ and total time $t \geq M$ and for every $Y \in \configsfield$ we have $\kernel^n((s_0,\nu_0),Y) \geq \beta \cdot \smallmeasure(Y)$.
\end{lemma}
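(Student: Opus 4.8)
The plan is to bound $\kernel^n((s_0,\nu_0),\cdot)$ from below by the contribution of the runs that follow the prescribed trace $T$ with the occurrence time of each step kept close to its value along the given nominal path, and then to recognise this lower bound as $\beta$ times a pushforward measure $\smallmeasure$ that does not depend on the starting configuration. First I would note that, following $T$, each step is either a single variable-delay occurrence, contributing a factor $\win(\{e\},\cdot)$, or a set of fixed-delay events, contributing the probability that no variable-delay event occurs first. Since the path is $\delta$-wide, every bounded variable-delay event stays at distance more than $\delta$ from its upper bound, and $\delta$-separation keeps all values pairwise apart; hence, exactly as in the proof of Lemma~\ref{lem:reach}, each such factor is bounded below by a positive constant depending only on $\delta$, on $n$, and on the minimal density value $\beta_0$ of the variable-delay events. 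Writing $s_j$ for the deviation of the $j$-th occurrence time from its nominal value and letting $\Psi$ send a deviation vector to the resulting final configuration, I obtain
\[
\kernel^n((s_0,\nu_0),Y) \;\geq\; \beta \int_{\mathcal{B}} \indicator{\Psi(s)\in Y}\, \de{s},
\]
where $\mathcal{B}$ is a small box of admissible deviations and $\beta$ collects the density and ``win'' lower bounds together with the normalising factor $|\mathcal{B}|$. Setting $\smallmeasure(Y):=|\mathcal{B}|^{-1}\int_{\mathcal{B}}\indicator{\Psi(s)\in Y}\,\de{s}$ yields a probability measure, and it then remains to check that $\Psi$ --- and therefore $\smallmeasure$ --- is the same for every starting configuration.

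This is where the hypothesis $t\geq M$ is used decisively. As $M$ is the sum of the delays of all fixed-delay events, any fixed-delay event $g$ present in $\nu_n$ satisfies $\nu_n(g)<u_g\leq M\leq t$, so $g$ cannot have been inherited from $s_0$ and must have been (re)scheduled at some step $i\geq 1$; consequently $\nu_n(g)$ is the cumulative time elapsed since step $i$, a function of the deviations $s$ alone and not of $\nu_0$. For a variable-delay event inherited from the start one has $\nu_n(g)=\nu_0(g)+t$ with $t$ the nominal total time, so that along the perturbed runs its value deviates from $\nu_n(g)$ by the common quantity $\sum_j s_j$; for a variable-delay event scheduled along the path the value is again a partial sum of occurrence-time deviations. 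Thus every coordinate of $\Psi(s)$ equals the fixed target value $\nu_n(\cdot)$ plus one and the same affine function of $s$, while each deterministic fixed-delay occurrence imposes one and the same linear constraint $\sum_{l=i+1}^{j}s_l=0$ cutting out the admissible directions. Since the base point $\nu_n$, the affine map, and the constraints are all independent of $\nu_0$, the image $\Psi(\mathcal{B})$ and the measure $\smallmeasure$ coincide for all starting configurations, which proves the lemma.

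I expect the independence of $\smallmeasure$ from $(s_0,\nu_0)$ to be the main obstacle, and it has two delicate points. First, the deterministic fixed-delay occurrences are interleaved with the free variable-delay occurrences through the scheduling chains prescribed by $T$, so one must verify that, after imposing the resulting linear constraints, a positive-measure set of deviation directions genuinely remains to smear over; here one relies on the presence of at least one perturbable variable-delay occurrence on the path, without which $\smallmeasure$ would collapse to a Dirac measure and the bound $\kernel^n((s_0,\nu_0),Y)\geq\beta\,\smallmeasure(Y)$ would fail for a non-atomic kernel. Second, one must choose the box $\mathcal{B}$ uniformly in $\nu_0$: the $\delta$-wideness of every admissible path supplies exactly the slack needed to keep all perturbed occurrence times within the region pattern dictated by $T$ and away from the event upper bounds, so that the density lower bounds and the definition of $\Psi$ remain simultaneously valid for all $(s_0,\nu_0)$. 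The remaining estimates --- continuity and positivity of the variable-delay densities on compact subintervals, and the explicit ``win'' lower bounds --- are routine and follow the pattern already established in Lemmata~\ref{lem:delta-sep} and~\ref{lem:reach}.
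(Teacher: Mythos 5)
Your proposal is correct in substance and shares the paper's two key ingredients: lower-bounding $\kernel^n((s_0,\nu_0),\cdot)$ by the trace-following perturbations of the nominal path (with densities and ``win'' factors bounded below via $\delta$-wideness, as in Lemma~\ref{lem:reach}), and the observation that in deviation coordinates the endpoint map is affine with coefficients determined by the trace alone, the hypothesis $t\geq M$ ensuring that fixed-delay events surviving into $\nu_n$ were scheduled en route rather than inherited from $(s_0,\nu_0)$. Where you genuinely differ is in how the measure $\smallmeasure$ is produced. The paper builds it intrinsically: it classifies the events of $\sched(s_n)\cup\{\last\}$ by which variable-delay or ticking occurrence ``originally scheduled'' them, identifies the $d$ (or $d{+}1$) distinct fractional values that can be smeared, takes $\smallmeasure$ to be normalized Lebesgue measure on a $(\delta/4)$-neighborhood of $z_n$ in that affine subspace, and then proves the kernel bound hypercube by hypercube through a two-phase alteration of the path (first redirect the waiting times to hit the hypercube's centre exactly, then smear them within intervals), with compensation schemes and a case split on whether the last step is fixed or variable so that exactly $d$ of the interval widths scale with the hypercube size $\gamma$. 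You instead define $\smallmeasure$ as the pushforward of the uniform measure on the deviation box under $\Psi$, which makes the paper's dimension bookkeeping and its entire case analysis automatic --- the image measure has whatever dimension the constraints $\sum_{l=i+1}^{j}s_l=0$ leave free --- and shifts the whole burden onto path-independence of $\Psi$, of the constraints, and of the box; your verification of this (including inherited variable-delay events, whose final values deviate by the common quantity $\sum_j s_j$) is sound, and is exactly where $t\geq M$ enters, just as in the paper. One correction: your claim that the argument needs at least one perturbable variable-delay occurrence, lest the bound fail, is mistaken --- if every step is a fixed-delay occurrence, the trace-following contribution to $\kernel^n((s_0,\nu_0),\cdot)$ is itself an atom at the then-deterministic endpoint $(s_n,\nu_n)$, so the Dirac pushforward satisfies the inequality; no non-degeneracy hypothesis is needed. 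A shared (not new) caveat is that choosing the box uniformly over all $\delta$-wide paths tacitly requires the nominal waiting times to stay away from zero, which $\delta$-wideness alone does not quite guarantee; the paper's smearing intervals have the same issue, and it is harmless in the only application, where the paths produced by Lemma~\ref{lem:wide-path} take at least $\delta'$ time per step.
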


\newcommand{\kusdelty}{4}

\begin{proof}

Recall that $B = \max( \{\ell_e,u_e \mid e \in \events \} \setminus \infty )$.
Notice that the assumptions on the events' densities imply that all delays' densities are bounded by some $\densb>0$ in the following sense. For every $e \in \events$ 
and for all $x\in[0,B]$, $d(x)>\densb$ or equals $0$. Similarly, $\int_{B}^\infty d(x)dx>\densb$ or equals $0$.

We will find a set of configurations $Z$ ``around'' the state $z_n = (s_n,\nu_n)$ and define the
probability measure $\smallmeasure$ on this set $Z$ such that 
$\smallmeasure(Z) = 1$.
Then we show for each measurable $Y \subseteq Z$ the desired property.

%
Intuitively, configurations around $z_n$ are of the form $(s_n,\nu')$ where each $\nu'(e)$ is either exactly $\nu(e)$ or in a small interval around $\nu_n(e)$. 
We now discuss which case applies to which event $e$ for a fixed trace $T$. All the following notions are defined with respect to $T$. We say that the ticking event $g$ is \emph{active until the $i$-th step} if $g \in \sched(s_0) \cap \cdots \cap \sched(s_{i-1})$. We say that an event $e \in \sched(s_n) \cup E_n \cup \{\last\}$ is \emph{originally scheduled in the $i$-th step by $f$} if
\begin{itemize}
 \item either $f=g$ and $g$ is active until the $i$-th step or $f$ is a variable-delay event; and
 \item there is $k \geq 1$ and a chain of events $e_1 \in E_{c_1}, \ldots, e_k \in E_{c_k}$ such that $e_1 = f$, $c_1 = i$, all $e_2,\ldots,e_{k}$ are fixed-delay events, occurence of each $E_{c_i}$ newly schedules $e_{i+1}$, occurence of $E_{c_k}$ newly schedules $e$, and $e \in \sched(s_{c_k}) \cap \cdots \cap \sched(s_{n-2}) \cap (\sched(s_{n-1}) \cup \{\last\})$. 
\end{itemize}

Recall that the special valuation symbol $\last$ denoting the lenght of the last step is also part of the state space. Notice that in the previous definition, we treat $\last$ as an event that is scheduled only in the state $s_{n-1}$. We say that \emph{the last step is variable} if $E_n$ is either a singleton of a variable-delay event or all the events in $E_n$ are originally scheduled by a variable-delay event. Otherwise, we say that \emph{the last step is fixed}.

Intuitively, we cannot alter the value of an event $e$ on the trace $T$ (i.e., $\nu'(e) = \nu(e)$) if the last step is fixed and $e$ is originally scheduled by the ticking event. In all other cases, the value of $e$ can be altered such that $\nu'(e)$ lies in a small interval around $\nu_n(e)$. The rest of the proof is divided in two cases.

\subsubsection{The last step is fixed}

Let us divide the events $e \in \sched(s_n) \cup \{\last\}$ into three sets as follows
\begin{align*}
 e \in A &&& \text{if $e$ is originally scheduled by a variable-delay event and $\fr(\nu_n(e)) \neq 0$;} \\
 e \in B &&& \text{if $e$ is originally scheduled by a variable-delay event and $\fr(\nu_n(e)) = 0$;} \\
 e \in C &&& \text{if $e$ is originally scheduled by the ticking event.}
\end{align*}

Let $a_1, \ldots, a_d$ be the disctinct fractional values of the events $A$ in the valuation $\nu_n$ ordered increasingly by the step in which the corresponding events were originally scheduled. This definition is correct because two events with the same fractional value must be originally scheduled by the same event in the same step. Furthermore, let $F_1,\ldots,F_d$ be the corresponding sets of events, i.e. $\fr(\nu_n(e_i)) = a_i$ for any $e_i\in F_i$.  
We call a configuration $z \region z_n$ such that all events $e \in (B\cup C)$ have the same value in $z$ and $z_n$ a \emph{target} configuration and treat it as a $d$-dimensional vector describing the distinct values for the sets $F_1,\ldots,F_d$. A $\delta$-neighborhood of a target configuration $z$ is
the set of configuration $\{ z + C \mid C \in (-\delta,\delta)^d\}$.
Observe that the $\delta$-neighborhood is a $d$-dimensional space.
We set $Z$ to be the $(\delta/\kusdelty)$-neighborhood of $z_n$ (
the reason for dividing $\delta$ by $\kusdelty$ is technical and will
become clear in the course of this proof).
Let $\smallmeasure_d$ denote the standard
Lebesgue measure on the $d$-dimensional affine space and set 
$\smallmeasure(Y) := \smallmeasure_d(Y) / \smallmeasure_d(Z)$ for any
any measurable $Y \subseteq Z$.

In order to prove the probability bound for any measurable $Y \subseteq Z$,
it suffices to prove it for the generators of $Z$, i.e. for $d$-dimensional
hypercubes centered around some state in $Z$. Let us fix an arbitrary
$z \in Z$ and $\gamma < \delta/\kusdelty$. We set $Y$ to be the 
$\gamma$-neighborhood of $z$. In the rest of the proof we will show
how to reach the set $Y$ from the initial state $(s_0,\nu_0)$ in $n$ steps with
high enough probability.

We show it by altering the original $\delta$-wide path $\sigma = (s_0,\nu_0) \cdots (s_n,\nu_n)$.
Let $t_1,\ldots,t_n$ be the waiting times such that
$t_i = \nu_i(\last)$.
In the first phase, we reach the fixed $z$ instead of the configuration $z_n$. 
We find waiting times $t'_1,\ldots,t'_n$ that induce 
a path $\sigma' = (s_0,\nu_0) \; (s_1,\nu'_1) \; \ldots (s_n,\nu'_n)$ with trace $T$ such that $(s_n,\nu'_n) = z$ and
$t'_i = \nu'_i(\last)$.
In the second phase, we define using $\sigma'$ a set of paths to $Y$. 
We allow for intervals $I_1,\ldots,I_n$ such that
for any choice $\bar{t}_1 \in I_1,\ldots,\bar{t}_n\in I_n$ we get a path 
$\bar{\sigma} = (s_0,\nu_0) \; (s_1,\bar{\nu}_1) \; \ldots (s_n,\bar{\nu}_n)$ such that 
$(s_n,\bar{\nu}_n) \in Y$ and $\bar{t}_i = \bar{\nu}_i(\last)$. 
From the size of the intervals for variable-delay events and from the bound on densities $\densb$ we get 
the overall bound on probabilities. Let us start with the first step.

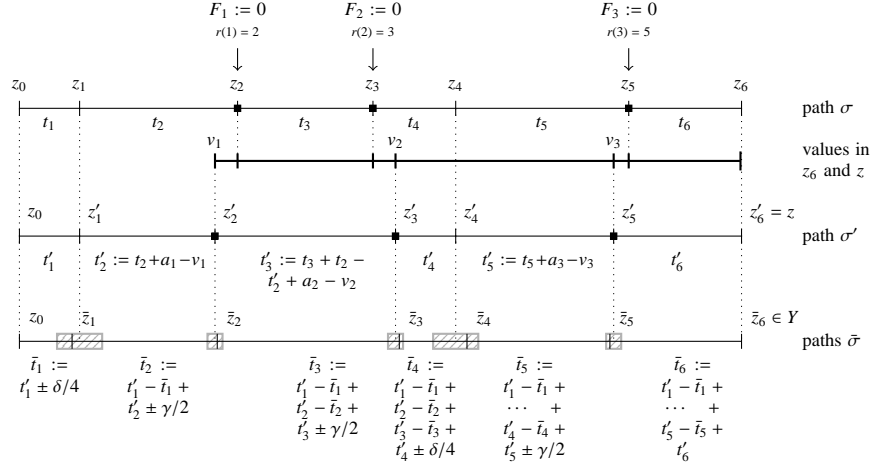
\begin{figure}[t]
\begin{center}
\newcommand{\del}{\delta/4}
\newcommand{\gam}{\gamma/2}

\begin{tikzpicture}[
state/.style={draw,shape=circle,inner sep=0.2em},
axes/.style={-}]


\draw [axes] (0.4,0) -- (10,0);
\foreach \i/\x in {0/0.4, 1/1.2, 2/3.3, 3/5.1, 4/6.2, 5/8.5, 6/10}{
	\draw (\x,-0.1) -- (\x,0.1);
	\node [font=\scriptsize] at (\x,0.3) {$z_\i$};
};

\foreach \i/\x in {1/0.8, 2/2.25, 3/4.2, 4/5.65, 5/7.35, 6/9.25}{
	\node [font=\scriptsize] at (\x,-0.2) {$t_\i$};
};

\foreach \i/\x/\r in {1/3.3/2, 2/5.1/3, 3/8.5/5}{
	\node [font=\scriptsize] at (\x,+1.3) {$F_\i:=0$};
    \node [font=\tiny] at (\x,+1) {$r(\i)=\r$};
    \node [draw,rectangle,fill=black,inner sep=1.2] at (\x,0) {};
    \draw [->] (\x,0.8) -- (\x,0.5);
};


\draw [-|,thick] (3,-0.7) -- (10,-0.7);
\foreach \l/\x/\y in {/3.3/-0.4, v_1/3/-1, /5.1/-0.4, v_2/5.4/-1, /8.5/-0.4, v_3/8.3/-1}{
    \draw [thick] (\x,-.8) -- (\x,-.6);
	\node [font=\scriptsize] at (\x,-0.45) {$\l$};
};

\foreach \x in {3.3, 5.1, 8.5}{
    \draw [dotted] (\x,-.0) -- (\x,-.7);
};


\node [font=\scriptsize,text width=1cm,anchor=west] at (10.7,0) {path $\sigma$};
\node [font=\scriptsize,text width=1cm,anchor=west] at (10.7,-0.7) {values in $z_6$ and $z$};
\node [font=\scriptsize,text width=1cm,anchor=west] at (10.7,-1.7) {path $\sigma'$};
\node [font=\scriptsize,text width=1cm,anchor=west] at (10.7,-3.1) {paths $\bar{\sigma}$};


\draw [axes] (0.4,-1.7) -- (10,-1.7);
\foreach \i/\x in {z_0/0.4, z'_1/1.2, z'_2/3, z'_3/5.4, z'_4/6.2, z'_5/8.3, z'_6=z/10}{
	\draw (\x,-1.6) -- (\x,-1.8);
	\node [font=\scriptsize,anchor=west] at ($(\x,-1.4)+(0,0)$) {$\i$};
};

\foreach \x in {3, 5.4, 8.3}{
    \node [draw,rectangle,fill=black,inner sep=1.2] at (\x,-1.7) {};
};

\foreach \x in {0.4, 1.2, 6.2, 10}{
	\draw [dotted] (\x,-0) -- (\x,-1.8);
};

\foreach \x in {3,5.4,8.3}{
	\draw [dotted] (\x,-0.7) -- (\x,-1.8);
};

\foreach \i/\x/\j in {
1/0.8/,
2/2.15/:=t_2+a_1-v_1,
3/4.3/:=t_3+t_2-t'_2+a_2-v_2, 
4/5.85/,
5/7.3/:=t_5+a_3-v_3, 
6/9.15/}{
	\node [font=\scriptsize,text width=1.5cm,anchor=north,text centered] at (\x,-1.8) {$t'_\i\j$};
};


\foreach \x in {1.2,6.2}{
    \draw [color=black!30,thick]
      ($(\x,-3.1)+(-3mm,-1mm)$) rectangle +(6mm,2mm);
	\pattern [pattern=north east lines,pattern color=black!30]
      ($(\x,-3.1)+(-3mm,-1mm)$) rectangle +(6mm,2mm);
};

\foreach \x in {3, 5.4, 8.3}{
    \draw [color=black!30,thick]
      ($(\x,-3.1)+(-1mm,-1mm)$) rectangle +(2mm,2mm);
	\pattern [pattern=north east lines,pattern color=black!30]
      ($(\x,-3.1)+(-1mm,-1mm)$) rectangle +(2mm,2mm);
};

\draw [axes] (0.4,-3.1) -- (10,-3.1);

\foreach \x/\s in {0.4/z_0, 1.1/\bar{z}_1, 3.03/\bar{z}_2, 5.45/\bar{z}_3, 6.35/\bar{z}_4, 8.25/\bar{z}_5, 10/\bar{z}_6\in Y}{
	\draw (\x,-3.0) -- (\x,-3.2);
    \node [font=\scriptsize,anchor=west] at ($(\x,-2.8)+(0.02,0)$) {$\s$};
};

\foreach \i/\x/\j in {
1/0.8/:=t'_1\pm\del,
2/2.25/:=t'_1-\bar{t}_1+t'_2\pm\gam,
3/4.5/:=t'_1-\bar{t}_1 + t'_2-\bar{t}_2 + t'_3\pm\gam, 
4/5.8/:=t'_1-\bar{t}_1 + t'_2-\bar{t}_2 + t'_3-\bar{t}_3+t'_4\pm\del,
5/7.25/:=t'_1-\bar{t}_1 + \quad \cdots \quad +t'_4-\bar{t}_4+t'_5\pm\gam, 
6/9.35/:=t'_1-\bar{t}_1 + \quad \cdots \quad +t'_5-\bar{t}_5+t'_6\quad}{
	\node [font=\scriptsize,text width=.9cm,anchor=north,text centered] at (\x,-3.2) {$\bar{t}_\i\j$};
};

\foreach \x in {0.4, 1.2, 3, 5.4,6.2,8.3,10}{
	\draw [dotted] (\x,-1.7) -- (\x,-3.1);
};

\end{tikzpicture}
\end{center}
\caption{Illustration of paths leading to the set $Y$. 
The original path $\sigma$ is in the first phase altered to reach the target
state $z$ (its values $v_1$,$v_2$, and $v_3$ are depicted between 
$\sigma$ and $\sigma'$). 
In the second phase, a set of paths that reach $Y$ is constructed
by allowing imprecision in the waiting times -- the transition times are
randomly chosen inside the hatched areas. Notice that
at most $d$ smaller intervals of size $\gamma/2$ can be used to get constant
probability bound with respect to the size of the $d$-dimensional hypercube $Y$. Transitions with fixed-delay are omitted from the illustration (except for the last transition).
%
%
}
\label{fig:fuzzying}
\end{figure}

%
%
%
%
%

Let $v_1,\ldots,v_d$ be the distinct values of the target configuration $z$.
Recall that $|v_i-a_i| < \delta/\kusdelty$ for each $i$.
Let $r(1),\ldots,r(d)$ be the indices such that all events in $F_i$ are originally scheduled in the step $r(i)$. Notice that each $E_{r(i)}$ is a singleton of a variable-delay event.
As illustrated in Figure~\ref{fig:fuzzying}, we set for each $1 \leq i \leq m$
\begin{align*}
 t'_{i} = \begin{cases}
	\ell_e - \nu_{i-1}(e)
	    & \mbox{if } e \in E_i \mbox{ is fixed-delay,}\\
        t_i + \sum_{k=1}^{i-1} (t_k-t'_k) + a_j - v_j
	    & \mbox{if } i = r(j) \mbox{ for } 1 \leq j \leq d \mbox{,} \\
	t_i + \sum_{k=1}^{i-1} (t_k-t'_k)
	    & \mbox{otherwise.}
       \end{cases}
\end{align*}
Intuitively, we adjust the variable-delays in the steps preceding the original scheduling of sets $F_1,\ldots,F_d$ whereas the remaining variable-delay steps are kept in sync with the original path $\sigma$. 
The absolute time of any transition in $\sigma'$ 
(i.e. the position of a line depicting a configuration in Figure~\ref{fig:fuzzying}) 
is not shifted by more than $\delta/4$ since 
$|v_i - a_i| < \delta/\kusdelty$ for any $i$. 
Thus, the difference of any two absolute times is not changed by more than
$\delta/2$. This difference bounds the difference of $|\nu_i(e) - \nu'_i(e)|$ for
any $i$ and $e \in \events$.
Hence, $\sigma'$ is $(\delta/2)$-wide because $\sigma$ is $\delta$-wide.
Furthermore, $\sigma'$ goes through the same regions as $\sigma$ and
performs the same sequence of events scheduling. 
Building on that, the desired property $z'_n = z$ is easy to show.

Next we allow imprecision in the waiting times of $\sigma'$ so that we get a
set of paths of measure linear in $\gamma^{\,d}$. In each step we compensate for the imprecision of the previous 
step. Formally, let $T_i$ denote 
$t'_i + \sum_{k=1}^{i-1} (t'_k - \bar{t}_k)$. For each $1 \leq i \leq m$ we contraint
\begin{align*}
 \bar{t}_{i} \in \begin{cases}
	[T_i,T_i]
	    & \mbox{if } E_i \mbox{ are fixed-delay events,}\\
        (T_i - \frac{\gamma}{2}, \; T_i + \frac{\gamma}{2})
	    & \mbox{if } i = r(j) \mbox{ for } 1 \leq j \leq d\mbox{,} \\
	(T_i - \frac{\delta}{\kusdelty}, \; T_i + \frac{\delta}{\kusdelty})
	    & \mbox{otherwise.}
       \end{cases}
\end{align*}

The difference to $\sigma'$ of any two absolute times is not changed 
by more than $\delta/2$ because the imprecision of any step 
is bounded by $\delta/\kusdelty$. Because $\sigma'$ is $(\delta/2)$-wide, any path
$\bar{\sigma}$ goes through the same 
regions as $\sigma'$. The difference of the value of events in any $F_i$ in 
the state $\bar{z}_n$ from the state $z$ is at most $\gamma/2$ because 
it is only influenced by the imprecision of the step preceding its original scheduling. Hence, $\bar{z}_n \in Y$.

By $v$ we denote the number of variable-delay singletons among $E_1,\ldots,E_n$. From the definition of $\kernel$, it is easy to prove by that
\begin{align*}
\kernel^n(z_0,Y) 
 \quad \geq \quad
\pmin^n \cdot (\densb \cdot \gamma)^d \cdot  (\densb \cdot \delta/2)^{v-d}
 \quad \geq \quad
(\pmin \cdot \densb / 2)^n \cdot \gamma^d \cdot \delta^{n-d}
\end{align*}

Since $\smallmeasure_d(Y) = (2 \cdot \gamma)^d$ 
and $\smallmeasure_d(Z) = (2\cdot \delta/\kusdelty)^d$, we have
$\smallmeasure(Y) = \smallmeasure_d(Y) / \smallmeasure_d(Z) =
(4\gamma/\delta)^d$. We get
$\kernel^n(z_0,Y) \; \geq \; \smallmeasure(Y) \cdot 
(\delta \cdot \pmin \cdot \densb / 8)^n$ and conclude the proof of this case
by setting $\varepsilon = (\delta \cdot \pmin \cdot \densb / 8)^n$.

\subsubsection{The last step is variable}

The rest of the proof proceeds in a similar fashion as previously, we reuse the same notions and the same notation. We only redefine the differences: the neighbourhood and the way the paths are altered.  

We call $(s,\nu) \region z_n$ a \emph{target} configuration if there is $y \in \Rset$ such that for all events $e \in C$ we have $\nu(e) - \nu_n(e) = y$ and for all events $e \in B$ we have $\nu(e) = \nu_n(e)$. We set $g=d+1$ if $C$ is non-empty, and $g=d$, otherwise. We treat a target configuration as a $g$-dimensional vector describing the distinct values for the sets $F_1,\ldots,F_d$ and the value $y$, if necessary. Again, a $\delta$-neighborhood of a target configuration $z$ is
the set of configuration $\{ z + C \mid C \in (-\delta,\delta)^{g}\}$.
We set $Z$ to be the $(\delta/\kusdelty)$-neighborhood of $z_n$ and set 
$\smallmeasure(Y) := \smallmeasure_{g}(Y) / \smallmeasure_{g}(Z)$ for any
any measurable $Y \subseteq Z$. We fix $Y$ to be a $\gamma$-neighborhood of a fixed $z \in Z$.

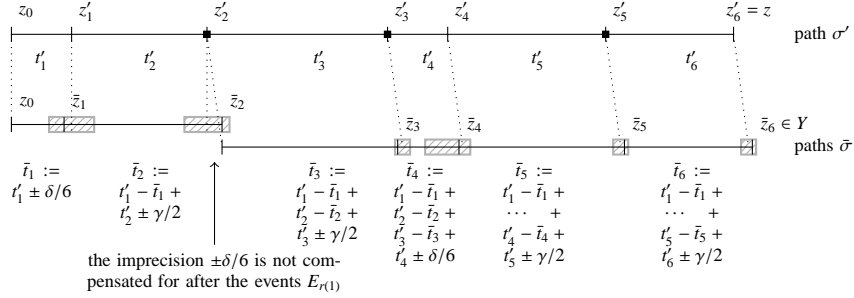
\begin{figure}[t]
\begin{center}
\newcommand{\del}{\delta/6}
\newcommand{\gam}{\gamma/2}

\begin{tikzpicture}[
state/.style={draw,shape=circle,inner sep=0.2em},
axes/.style={-}]


\node [font=\scriptsize,text width=1cm,anchor=west] at (10.7,-1.7) {path $\sigma'$};
\node [font=\scriptsize,text width=1cm,anchor=west] at (10.7,-3.2) {paths $\bar{\sigma}$};

\draw [->] (3.1,-4.5) -- (3.1,-3.4);
\node [font=\scriptsize,text width=5cm,anchor=north,text centered] at (3.1,-4.5) {
the imprecision $\pm\del$ is not compensated for after the events $E_{r(1)}$};


\draw [axes] (0.4,-1.7) -- (10,-1.7);
\foreach \i/\x in {z_0/0.4, z'_1/1.2, z'_2/3, z'_3/5.4, z'_4/6.2, z'_5/8.3, z'_6=z/10}{
	\draw (\x,-1.6) -- (\x,-1.8);
	\node [font=\scriptsize] at ($(\x,-1.4)+(0.2,0)$) {$\i$};
};

\foreach \x in {3, 5.4, 8.3}{
    \node [draw,rectangle,fill=black,inner sep=1.2] at (\x,-1.7) {};
};

\foreach \i/\x/\j in {
1/0.8/,
2/2.25/,
3/4.5/, 
4/5.95/,
5/7.4/, 
6/9.45/}{
	\node [font=\scriptsize,text width=1.5cm,anchor=north,text centered] at (\x,-1.8) {$t'_\i\j$};
};


\foreach \x in {1.2, 3}{
    \draw [color=black!30,thick]
      ($(\x,-2.9)+(-3mm,-1mm)$) rectangle +(6mm,2mm);
	\pattern [pattern=north east lines,pattern color=black!30]
      ($(\x,-2.9)+(-3mm,-1mm)$) rectangle +(6mm,2mm);
};

\foreach \x in {6.2}{
    \draw [color=black!30,thick]
      ($(\x,-3.2)+(-3mm,-1mm)$) rectangle +(6mm,2mm);
	\pattern [pattern=north east lines,pattern color=black!30]
      ($(\x,-3.2)+(-3mm,-1mm)$) rectangle +(6mm,2mm);
};

\foreach \x in {5.6, 8.5, 10.2}{
    \draw [color=black!30,thick]
      ($(\x,-3.2)+(-1mm,-1mm)$) rectangle +(2mm,2mm);
	\pattern [pattern=north east lines,pattern color=black!30]
      ($(\x,-3.2)+(-1mm,-1mm)$) rectangle +(2mm,2mm);
};

\draw [axes] (0.4,-2.9) -- (3.2,-2.9);
\draw [axes] (3.2,-3.2) -- (10.25,-3.2);

\foreach \x/\s in {0.4/z_0, 1.1/\bar{z}_1,3.2/\bar{z}_2}{
	\draw (\x,-2.8) -- (\x,-3.0);
    \node [font=\scriptsize,anchor=west] at (\x,-2.6) {$\s$};
};
\foreach \x/\s in {3.2/, 5.535/\bar{z}_3, 6.35/\bar{z}_4, 8.55/\bar{z}_5, 10.25/\bar{z}_6\in Y}{
	\draw (\x,-3.1) -- (\x,-3.3);
    \node [font=\scriptsize,anchor=west] at (\x,-2.9) {$\s$};
};

\foreach \i/\x/\j in {
1/0.8/:=t'_1\pm\del,
2/2.25/:=t'_1-\bar{t}_1+t'_2\pm\gam,
3/4.6/:=t'_1-\bar{t}_1 + t'_2-\bar{t}_2 + t'_3\pm\gam, 
4/5.9/:=t'_1-\bar{t}_1 + t'_2-\bar{t}_2 + t'_3-\bar{t}_3+t'_4\pm\del,
5/7.35/:=t'_1-\bar{t}_1 + \quad \cdots \quad +t'_4-\bar{t}_4+t'_5\pm\gam, 
6/9.45/:=t'_1-\bar{t}_1 + \quad \cdots \quad +t'_5-\bar{t}_5+t'_6\pm\gam}{
	\node [font=\scriptsize,text width=.9cm,anchor=north,text centered] at (\x,-3.3) {$\bar{t}_\i\j$};
};

\foreach \x in {0.4, 1.2, 3}{
	\draw [dotted] (\x,-1.7) -- (\x,-3);
};

\foreach \x in {3, 5.4, 6.2, 8.3, 10}{
    \draw [dotted] (\x,-1.7) -- ($(\x,-3.2)+(0.2,0)$);
};

\end{tikzpicture}
\end{center}
\caption{Illustration of construction of $\bar{\sigma}$ for the empty set $C$ and the last step variable. 
%
%
}
\label{fig:fuzzying2}
\end{figure}

%
%
%
%
%
The path $\sigma'$ is obtained from the $\sigma$ in the same way as before. We need to allow imprecision in the waiting times of $\sigma'$ so that we get a
set of paths of measure linear in $\gamma^{\,g}$. 
\begin{itemize}
 \item For the case $g = d+1$ it is straightforward as we make the last step also with imprecision $\pm \gamma/2$. Precisely

\begin{align*}
 \bar{t}_{i} \in \begin{cases}
	[T_i,T_i]
	    & \mbox{if } E_i \mbox{ are fixed-delay events,}\\
        (T_i - \frac{\gamma}{2}, \; T_i + \frac{\gamma}{2})
	    & \mbox{if } i = r(j) \mbox{ for } 1 \leq j \leq d\mbox{ or $i=m$,} \\
	(T_i - \frac{\delta}{\kusdelty}, \; T_i + \frac{\delta}{\kusdelty})
	    & \mbox{otherwise}
       \end{cases}
\end{align*}
where $E_n$ are originally scheduled in the $m$-th step if $E_n$ are fixed-delay events, and $m$ equals $n$, otherwise. The difference of the value of events in any $F_i$ in 
the state $\bar{z}_n$ from the state $z$ is at most $\gamma$ because 
it is influenced by the imprecision of the step preceding its original scheduling and also by the imprecision of the last step. Events in $C$ have the difference of the value at most $\gamma/2$ because of the last step. Hence, $\bar{z}_n \in Y$.
Again, we get that $\kernel^n(z_0,Y) \; \geq \; \smallmeasure(Y) \cdot 
(\delta \cdot \pmin \cdot \densb / 8)^n$ and conclude the proof
by setting $\varepsilon = (\delta \cdot \pmin \cdot \densb / 8)^n$.

 \item For the case $g = d$ it is somewhat tricky since
only at most $d$ choices of waiting times can have their precision dependent 
on $\gamma$. In each step we compensate for the imprecision of the previous 
step. Only the imprecision of the step preceding the first scheduling $E_1$ is not
compensated for. Otherwise, it would influence the value of events $E_1$ in
$\bar{z}_n$. Let $T^a_i$ denote $t'_i + \sum_{k=a}^{i-1} (t'_k - \bar{t}_k)$. 
As illustrated in Figure~\ref{fig:fuzzying2}, we contraint
\begin{align*}
 \bar{t}_{i} \in \begin{cases}
	[T_i^1,T_i^1]
	    & \mbox{if } E_i \mbox{ are fixed-delay events,}\\
	(T_i^1 - \frac{\delta}{6}, \; T^1_i + \frac{\delta}{6})
	    & \mbox{if } i \leq r(1) \mbox{,} \\
        (T^{r(1)+1}_i - \frac{\gamma}{2}, \; T^{r(1)+1}_i + \frac{\gamma}{2})
	    & \mbox{if } i = r(j) \mbox{ for } 2 \leq j \leq d  \mbox{ or } i=m \mbox{,} \\
	(T^{r(1)+1}_i - \frac{\delta}{6}, \; T^{r(1)+1}_i + \frac{\delta}{6})
	    & \mbox{otherwise.}
       \end{cases}
\end{align*}

The difference to $\sigma'$ of any two absolute times is not changed 
by more than $3\cdot \delta/6 = \delta/2$ because the imprecision of any step 
is bounded by $\delta/6$. Because $\sigma'$ is $(\delta/2)$-wide, any path
$\bar{\sigma}$ goes through the same 
regions as $\sigma'$. The difference of the value of events $E_1$ in 
the state $\bar{z}_n$ from the state $z$ is at most $\gamma/2$ because 
it is only influenced by the imprecision of the last step. The difference of 
any other event $e$ is at most $2 \cdot \gamma/2$ because 
it is influenced by the imprecision of the step preceding the original scheduling of $e$, 
as well. Hence, $\bar{z}_n \in Y$.

\noindent
Now, we get that $\kernel^n(z_0,Y) \; \geq \; \smallmeasure(Y) \cdot 
(\delta \cdot \pmin \cdot \densb / 12)^n$ and conclude the proof
by setting $\varepsilon = (\delta \cdot \pmin \cdot \densb / 12)^n$. \QED
\end{itemize}

\end{proof}

\begin{lemma}\label{lem:wide-path}
 Let $\delta > 0$ and $R$ be a region such that the ticking event $e$ is either not scheduled or has the greatest value among all events scheduled in $R$. There is $n \in \Nset$, $\delta'>0$, a configuration $z^\ast$, and a trace $s_0 E_1 \cdots E_n s_n$ such that 
from any $\delta$-separated $z \in R$, there is a $\delta'$-wide path to $z^\ast$ with trace $s_0 E_1 \cdots E_n s_n$ and total time $t \geq M$.
\end{lemma}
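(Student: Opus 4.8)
The plan is to construct, independently of the starting configuration, a single target $z^\ast$, a single length $n\in\Nset$, a single trace $s_0E_1\cdots E_ns_n$ and a single $\delta'>0$, and then to exhibit for \emph{every} $\delta$-separated $z\in R$ a choice of waiting times realizing a $\delta'$-wide path with this trace from $z$ to $z^\ast$ of total time at least $M$. The guiding idea is to decouple the start-dependent information carried by $z$ (the phase of the ticking event $e$ and the valuations of the old events) from the fixed target $z^\ast$: the former will be absorbed by an initial segment whose only purpose is to let more than $M$ units of time pass, while $z^\ast$ is pinned down exactly by a short deterministic tail.

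First I would synchronize the ticking event. If $e$ is not scheduled in $R$ there is no ticking event and every scheduled fixed-delay event is non-ticking, so I would skip directly to the flushing step. Otherwise the hypothesis that $e$ has the greatest value guarantees that from any $\delta$-separated $z\in R$ the event $e$ is the first to occur; I let $e$ tick and reschedule itself, which resets its value to exactly $0$. After this one step the ticking coordinate equals $0$ for every start, and the whole start-dependence of $e$ has been pushed into the single waiting time $u_e-\nu(e)$ of this step. This is precisely where the greatest-value assumption is used, since it lets $e$ fire cleanly before any competing event.

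Next I would run a flushing segment modelled on the proof of Lemma~\ref{lem:delta-sep}: I repeatedly let the variable-delay events run around the $[0,1]$ segment and be rescheduled, always placing each newly scheduled variable event into an empty slot with empty neighbours and never letting any variable event come within $\delta'$ of its upper bound, so that every visited configuration is $\delta'$-separated and the path stays $\delta'$-wide. I run this for a fixed number of steps chosen so that the elapsed time exceeds $M$. Since the lifetime of any non-ticking fixed-delay event is bounded by $M$, after time $M$ every non-ticking fixed-delay event present has died and been replaced by one whose scheduling chain originates at a variable-delay event; such events can therefore be placed freely, and only the now synchronized ticking event keeps a rigid phase. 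A short deterministic tail then fixes $z^\ast$: the value $0$ of the ticking coordinate is pinned exactly by letting the last step be a tick of $e$, while each remaining coordinate of $z^\ast$ is the value of an event scheduled by a freely timed variable-delay occurrence in the last inter-tick period, so choosing these occurrence times lands us on the prescribed $z^\ast$; as the admissible occurrence times form an open range, the same $z^\ast$ is reached from every start.

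The hard part will be the bookkeeping that makes $n$, the trace and $z^\ast$ genuinely independent of $z$ while keeping a single $\delta'>0$. The number of occurrences of each event along the trace, hence $n$, must be constant over all $z\in R$, and this has to be reconciled with the rigid spacing of $e$ (exactly $u_e$ between consecutive ticks): the reconciliation is that the lemma asks only for total time $\geq M$, not a fixed total time, so the start-dependent offset $u_e-\nu(e)$ can be absorbed entirely in the first gap without altering the number or order of the later events. The second delicate point, the presence of several fixed-delay events, is controlled by the single-ticking order $<$: a non-ticking fixed-delay event can cause only strictly greater fixed-delay events, so every causation chain is finite and contributes at most $M$ to a lifetime, which is exactly what lets the flushing segment wash all of them out and reduces the situation to the single-ticking-event picture sketched for Lemma~\ref{lem:small}.
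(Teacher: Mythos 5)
Your high-level skeleton is the same as the paper's: let more than $M$ time units pass so that the single-ticking order kills every start-dependent fixed-delay event, keep the path wide by slot-style placement as in Lemma~\ref{lem:delta-sep}, and pin the target configuration to the rigid schedule of the ticking event, absorbing the start-dependence into waiting times. However, two of your concrete steps fail. The first is the opening claim that the hypothesis ``$e$ has the greatest value'' guarantees that $e$ occurs first. ``Value'' means the fractional part $\fr(\nu(e))$, not the remaining time $u_e-\nu(e)$; remaining times also depend on integer parts, which the hypothesis does not constrain. For instance, in a region with $\nu(e)=0.9$, $u_e=2$ and a non-ticking fixed-delay event $g$ with $\nu(g)=0.5$, $u_g=1$, the event $e$ has the greatest value, yet $g$ is forced to occur $0.6$ time units before $e$ can tick (a bounded variable-delay event close to its upper bound is likewise forced to fire first). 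So a trace whose first step is $E_1=\{e\}$ is infeasible for some regions satisfying the hypothesis, the ``synchronization'' step collapses, and — since this was your only appeal to the hypothesis — your argument never uses the greatest-value assumption in a valid way.

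The second gap is exactly where that hypothesis is really needed. Your tail schedules the events of $z^\ast$ at \emph{start-independent} offsets before the final tick, while the events left over from your flushing segment still sit at \emph{start-dependent} values (flushing places them in adaptively chosen slots). But $\delta'$-wideness demands that every configuration on the path be $\delta'$-separated, so each fixed placement (which has value $0$ at the moment it is scheduled) must stay more than $\delta'$ away from all leftover values, uniformly over all $\delta$-separated $z\in R$; a single prescribed offset will collide with some leftover's value for some starts. Your ``open range'' argument shows you can hit $z^\ast$, not that you can hit it while remaining $\delta'$-separated for every start. The paper resolves this by confining \emph{all} occurrences, from the first step on, to a moving grid of span $\delta$ anchored at the value of $e$ — precisely the zone that the greatest-value hypothesis together with $\delta$-separation guarantees to be empty of every start-dependent value — so that wideness and start-independence of the final valuations are obtained simultaneously. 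Your two-phase construction (start-dependent flushing, then start-independent tail) needs an analogous device at the interface, e.g.\ consuming every leftover before the first fixed placement and showing this is compatible with one fixed trace; as written, that step is missing.
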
 
\begin{proof}
We use a similar concept as in the proof of Lemma~\ref{lem:delta-sep}. Let us fix a $\delta$-separated $z \in R$. Let $a$ be the greatest value  of all event scheduled in $z$. Observe, that no value is in the interval $(a,a+\delta)$. When we build the $\delta$-wide path step by step, we use a variable $s$ denoting start of this interval of interest which flows with time. Before the first step, we have $s := a$. After each step, which takes $t$ time, we set $s := \fr(s+t)$.

In the interval $[s,s+\delta]$ we make a grid of $3 \cdot |\events| + 1$ points that we shift along with $s$, 
and set $\delta' = \delta / (3 \cdot |\events| + 1)$. On this grid, a procedure similar to the $\delta$-separation takes place.
We 
build the $\delta'$-wide path by choosing sets of events $E_i$ to occur, waiting times $t_i$ of the individual transitions, and target states $z_i$ after each transition so that 
\begin{itemize}
 \item every variable-delay event occurs exactly at an empty point of the grid (i.e. at a time when an empty point has value $0$), and 
 \item the built path is ``feasible'', i.e. all the specified events can occur after the specified waiting time, and upon each occurrence of a specified event we move to the specified target state with positive probability,
\end{itemize}
These rules guarantee that the path we create is $\delta'$-wide. Indeed, the initial configuration is $\delta$-separated for $\delta > \delta'$,  upon every new transition, the $\delta'$-neighborhood of $0$ is empty, and every variable-delay event occurs at a point different form its current point, whence it occurs at least $\delta'$ prior to its upper bound. It is easy to see that such 
choices are possible
since there are only $\events$ events, but $3 \cdot |\events| + 1$ points.

Now we show that this procedure lasts only a fixed amount of steps before all the scheduled events lie on the grid. Notice that if the ticking event is scheduled in $R$, it lies at a point of the grid from the very beginning because we define the grid adjacent to its value. If it is not scheduled, it can get scheduled only by a variable-delay event which occurs already at a point of the grid. Values of any other scheduled fixed-delay event gets eventually placed at a point of a grid. Indeed, every such event gets scheduled by a variable-delay event next time, since we assume a \nice GSMP. We now that after time $M$, all the non-ticking fixed-delay events are either not scheduled or lie on the grid. Each step takes at least $\delta'$ time. In total, after $n = \lceil M/\delta' \rceil + 1$ steps with trace $E_1,\ldots,E_n$, we can set $z^\ast := z_n$.

It remains to show that from any other $\delta'$-separated configuration $z'\in R$, we can build a $\delta'$-wide path of length $n$, 
with trace $E_1,\ldots,E_n$ that ends in $z^\ast$. We start in the same region. 
From the definition of the region relation and from the fact that all events occur in the empty interval $(a,a+\delta)$ we get the following.
By appropriately adjusting the waiting times so that the events occur at the same points of the grid as before, we can follow the same trace and the same control states (going through the same regions) and build a path $z'_0 \ldots z'_n$ such that $z'_n = z_n$. 
Indeed, all scheduled events have the same value in $z'_n$ as in $z_n$ because 
they lie on the same points of the grid. In fact, this holds for $z'_{n-1}$ and $z_{n-1}$ as well (because the first $n-1$ steps take more than $M$ time units) except for the value of $\last$. Finally, also $\last$ has the same value in $z'_n$ as in $z_n$ because there is no need to alter the waiting time in the last step.
By the same arguments as before, the built path is also $\delta'$-wide. \QED

\end{proof}

\begin{reflemma}{lem:small}
 \LemSmall
\end{reflemma}

\begin{proof}
We choose some reachable region $R$ such that the ticking event $e$ is either not scheduled in $R$ or $e$ has the greatest fractional part among all the scheduled events. There clearly is such a region. We fix a sufficiently small $\delta > 0$ and choose $C$ to be the set of $\delta$-separated configurations in $R$. Now, we show how we fix this $\delta$.

It is a standard result from the theory of Markov chains, see e.g.~\cite[Lemma 8.3.9]{Rosenthal:book}, 
that in every ergodic Markov chain there is $n$ such that between
any two states there is a path of length exactly $n$. 
The same result holds for the aperiodic region graph $\regiongraph$. From Lemma~\ref{lem:delta-sep}, we reach in $m$ steps from any $z \in \configs$ a $\delta'$-separated configuration $z'$ with probability at least $q$. From $z'$, we have a path of length $n$ to the region $R$. From Lemma~\ref{lem:reach}, we have $p > 0$ such that we reach $R$ from $z'$ in $n$ steps with probability at least $p$.
Furthermore, we end up in a $(\delta'/3^n)$-separated configuration of the region $R$. Hence, we set $\delta = \delta'/3^n$ and obtain the first part of the lemma.

The second part of the lemma is directly by connecting Lemmata~\ref{lem:wide-path} and~\ref{lem:fuzzying}.
\end{proof}

\section{Proof of Theorem~\ref{thm:approx}}
\label{sec:app-results}

It only remains to prove the two Claims. 

\begin{claim}
\claimApproxReach
\end{claim}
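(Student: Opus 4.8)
The plan is to prove the two displayed inequalities separately, since they capture two different effects: the first is the rate of absorption into the component $\mathcal B$, the second the rate of equilibration once the process is trapped inside it.

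For the reachability bound I would first exploit that $X$ is absorbing in the region graph: once $\gssmc_i \in X$ the process stays in $X$, so $\kernel^i(z_0,X) \le \probm(\reach(X))$ and
\[
\probm(\reach(X)) - \kernel^i(z_0,X) = \probm(\gssmc_i \notin X,\ \reach(X)).
\]
I would then observe that if $\gssmc_i$ already lies in some BSCC $X'$, then $\reach(X)$ forces $X' = X$, contradicting $\gssmc_i \notin X$; hence the right-hand event is contained in $\{\gssmc_i \text{ lies in no BSCC}\}$, and it suffices to bound the probability of not yet being absorbed. Combining Lemma~\ref{lem:delta-sep} (a $\delta$-separated configuration is reached in $m$ steps with probability $\ge q$) with Lemma~\ref{lem:reach} (from a $\delta$-separated configuration the shortest path to a fixed BSCC region, of length $\le \sizereg$, is followed with probability $\ge p$) yields uniform constants $n_1 = m + \sizereg$ and $p_1 = q\cdot p > 0$ such that from \emph{every} configuration the process sits in some BSCC after $n_1$ steps with probability $\ge p_1$. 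Iterating through the Markov property, the probability of being in no BSCC after $i$ steps is at most $(1-p_1)^{\lfloor i/n_1 \rfloor}$, giving the first line.

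For the frequency bound I would use that $z_X$ and the whole run remain in the strongly connected component, so I may work with $\gssmc$ restricted to $X$. Corollary~\ref{cor:small} states that its entire state space is $(n,\varepsilon,\smallmeasure)$-small, i.e.\ $\kernel^n(z,\cdot) \ge \varepsilon\,\smallmeasure(\cdot)$ uniformly in $z$; this is precisely Doeblin's minorization, and the standard coupling argument for uniformly ergodic chains gives
\[
\bigl\| \kernel^i(z,\cdot) - \pi \bigr\|_{TV} \ \le\ (1-\varepsilon)^{\lfloor i/n \rfloor}
\]
for all $z$, where $\pi$ is the invariant measure of Proposition~\ref{prop:bscc-not-sync}. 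Recalling from its proof that $d = E_\pi[\delta] = \pi(X_\measured)$, I would evaluate this total-variation bound on the measurable set $X_\measured$ to obtain $\lvert \kernel^i(z_X,X_\measured) - d \rvert \le (1-\varepsilon)^{\lfloor i/n \rfloor}$, so that $n_2 = n$ and $p_2 = \varepsilon$ suffice.

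The hard part will not be correctness but \emph{computability} of $n_1,n_2,p_1,p_2$. The pair $n_1,p_1$ traces back to the region graph (its size $\sizereg$ and diameter), the minimal density $\densb$, the minimal successor probability $\pmin$, and the separation $\delta$ of Lemma~\ref{lem:delta-sep}, all of which are extractable from the finite description of the GSMP. The delicate point is $n_2,p_2$: I must check that the smallness horizon $n$ and the minorization constant $\varepsilon$ furnished by Lemma~\ref{lem:small} via Lemmata~\ref{lem:wide-path} and~\ref{lem:fuzzying} (where $\varepsilon$ has the explicit shape $(\delta\cdot\pmin\cdot\densb/8)^{\,n}$) can genuinely be computed, not merely shown to exist. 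Securing this effectiveness is what makes the estimate usable in the approximation scheme of Theorem~\ref{thm:approx}.
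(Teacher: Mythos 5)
Your proof is correct and follows essentially the same route as the paper: the first bound via Lemmata~\ref{lem:delta-sep} and~\ref{lem:reach} iterated through the Markov property (the paper phrases the absorption argument in terms of the set $Y$ of regions that can still reach $X$, you in terms of not yet lying in any BSCC, which is a cosmetic difference), and the second bound via Corollary~\ref{cor:small} plus the uniform-ergodicity consequence of Doeblin minorization, which the paper obtains by citing Theorem~8 of \cite{RR:GSSMC-PS}, evaluated at $A = X_\measured$ with $d = \pi(X_\measured)$. Your closing remarks on the computability of the constants track exactly the quantities ($\delta$, $\pmin$, $\densb$, $m$, $\sizereg$, $n$, $\varepsilon$) from which the paper's constants are built, so no gap there either.
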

\begin{proof}
Let $Y$ denote the union of regions from which the BSCC $\mathcal{B}$ is reachable.
By Lemmata~\ref{lem:delta-sep} and~\ref{lem:reach} we have $p,q > 0$ and $m \in \Nset$ and $k < \sizereg$ such that from any $z \in Y$ we reach $X$ in $m+k$ steps with probability at least $p \cdot q$. We get the first part by setting $n_1 = m+k$ and $p_1 = p \cdot q$. Indeed, if the process stays in $Y$ after $n_1$ steps, it has the same chance to reach $X$ again, if the process reaches $X$, it never leaves it, and if the process reaches $\configs \setminus (X \cup Y)$, it has no chance to reach $X$ any more.

By Corollary~\ref{cor:small} in Appendix~\ref{app:frequencies}, $\configs$ is $(n,\varepsilon,\smallmeasure)$-small. By Theorem~8 of \cite{RR:GSSMC-PS} we thus obtain that for all $x\in\configs$ 
and all $i\in \Nset$,
    \[
    \sup_{A\in \configsfield} |\kernel^i(x,A)-\pi(A)|\quad \leq \quad
    (1-\varepsilon)^{\lfloor i/n\rfloor}
    \]
which yields the second part by setting $A = \{(s,\nu)\in\configs\mid s=\measured\}$ and observing $d = \pi(A)$ and $A \in \configsfield$
\QED
\end{proof}

\begin{claim}
\claimFunctionW
\end{claim}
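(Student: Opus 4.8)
The plan is to prove the two assertions by separate arguments: continuity from an explicit tail--integral formula for $W$, and finiteness by charging each event's lifetime once, at the step where it fires. Recall that $W((s,\nu))$ is the expected time to the first occurrence of an event in $(s,\nu)$. Writing $\bar F_{e\mid\nu(e)}(t)=\int_t^\infty f_{e\mid\nu(e)}(y)\,\de{y}$ for the residual survival function of a variable-delay event $e$, letting $V$ and $F$ be the variable- and fixed-delay events scheduled in $s$, and setting $t^\ast(\nu)=\min\{u_f-\nu(f)\mid f\in F\}$ (with $t^\ast(\nu)=\infty$ when $F=\emptyset$), the probability that no event has fired by time $t$ is $\prod_{e\in V}\bar F_{e\mid\nu(e)}(t)$ for $t<t^\ast(\nu)$ and $0$ afterwards. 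Integrating the tail gives
\[
W((s,\nu))\;=\;\int_0^{t^\ast(\nu)}\;\prod_{e\in V}\bar F_{e\mid\nu(e)}(t)\;\de{t}.
\]

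For continuity, fix a region $R$ and a point $z_0=(s,\nu_0)\in R$. Since $R$ fixes the integral parts and the order of fractional parts, each $\nu(e)$ varies in an interval on which $\int_{\nu(e)}^\infty f_e$ is continuous and strictly positive near $\nu_0$ (for a bounded $e$ the open region keeps $\nu(e)<u_e$, so the denominator stays bounded away from $0$). As $f_e$ is continuous, $\bar F_{e\mid\nu(e)}(t)=\big(\int_{t+\nu(e)}^\infty f_e\big)/\big(\int_{\nu(e)}^\infty f_e\big)$ is jointly continuous in $(t,\nu(e))$; hence so is the finite product, and $t^\ast(\nu)$ is continuous as a minimum of affine maps. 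When $F\neq\emptyset$ the integrand is bounded by $1$ on $[0,t^\ast]$ with $t^\ast\le B$, so continuity of $W$ at $z_0$ follows from continuity of the integrand and of the upper limit. When $F=\emptyset$ the integral is improper, and I instead dominate, uniformly for $\nu$ in a small neighbourhood of $\nu_0$ in $R$, by a single tail $\bar F_{e_0\mid\cdot}(t)$; the finite-expectation assumption makes these residual tails locally uniformly integrable, so dominated convergence again yields continuity.

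For finiteness, the crucial observation is $t_i\le D_{e_{i+1}}$, where $t_i=\last(\Phi_{i+1})$ is the $i$-th sojourn time and $D_{e_{i+1}}$ is the full sampled delay of the event $e_{i+1}$ that fires at step $i+1$ (that delay equals its age at firing plus $t_i$). Because $\last$ is a coordinate of the configuration, the ergodic theorem --- available from the unique invariant measure and SLLN established for $\Phi$ via Corollary~\ref{cor:small} --- gives $E_\pi[W]=E_\pi[\last]=\lim_n\frac1n\sum_{i<n}t_i$ as a limit in $[0,\infty]$. Grouping firings by event yields
\[
\lim_n\tfrac1n\sum_{i=1}^{n}D_{e_i}\;=\;\sum_{e\in\events}r_e\cdot E[D_e\mid e\text{ fires}]\;\le\;\sum_{e\in\events}r_e\,\mu_e\;<\;\infty,
\]
where $r_e\in[0,1]$ is the (existing) firing rate of $e$ and $\mu_e<\infty$ its mean delay; the inequality uses that conditioning on winning the competition only lowers the expected delay. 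Hence $E_\pi[W]<\infty$ (and, since $\tau$ equals $W$ on $\measured$-configurations and $0$ elsewhere, $E_\pi[\tau]\le E_\pi[W]<\infty$ too).

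The main obstacle is the unbounded case $F=\emptyset$: there $W$ is an improper integral, so continuity requires passing the limit under an infinite integral, which is exactly where finite expectation (via local uniform integrability of residual tails) is needed. For finiteness the subtlety is that a per-step accounting of residual times would appear to demand a second moment (the inspection paradox); charging each lifetime a single time through $t_i\le D_{e_{i+1}}$ circumvents this and reduces the bound to the finite first moments $\mu_e$.
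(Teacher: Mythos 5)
Your proposal is correct in substance, and the two halves fare differently against the paper. For \emph{continuity} you follow essentially the paper's route (an explicit integral formula for $W$ plus continuity of the shifted densities), except that you use the equivalent tail-integral $W((s,\nu))=\int_0^{t^\ast(\nu)}\prod_{e}\bar{F}_{e\mid\nu(e)}(t)\,\de{t}$ instead of the paper's $\int_0^T t\cdot f(t)\,\de{t}+T\cdot(1-F(T))$; in the case $F=\emptyset$ you are in fact more careful than the paper, which just asserts that continuity of $F$ and $f$ yields continuity of $W$, whereas you justify the interchange of limit and improper integral by a locally uniform integrable dominating tail --- exactly the point where finite expectation is needed. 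For \emph{finiteness} your route is genuinely different. The paper argues pointwise: $W\leq T\leq\max_{d}\ell_d$ when a fixed-delay event is scheduled, and for $T=\infty$ it simply asserts finiteness ``due to the assumption that each $f_e$ has finite expected value.'' That assertion cannot be made pointwise: for a heavy-tailed density with finite mean but infinite second moment (e.g.\ $f_e(x)\propto(1+x)^{-3}$), the expected residual delay is unbounded in the age, so $W$ is unbounded on unbounded regions; finiteness of $E_\pi[W]$ is a property of the \emph{jump-chain} invariant measure, not of $W$ alone. Your accounting argument --- $E_\pi[W]=E_\pi[\last]$ by invariance, the extended SLLN in $[0,\infty]$ (which follows from the paper's SLLN by truncating $h\wedge M$ and letting $M\to\infty$), and the charge $t_i\leq D_{e_{i+1}}$ with each sampled delay charged at most once --- is the argument that actually fills this hole, and it correctly identifies why the inspection paradox is not an obstacle here. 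The one soft spot is the inequality $E[D_e\mid e\text{ fires}]\leq\mu_e$, which you justify only by the phrase ``conditioning on winning only lowers the expected delay.'' This is true but not obvious; it needs a coupling argument (with all other randomness fixed, the two processes coincide until $e$'s age reaches the smaller delay, so if $e$ is discarded under delay $d$ it is also discarded under any $d'>d$; hence $\{e\text{ fires}\}$ is a decreasing event in $D_e$ and Chebyshev's correlation inequality applies). Alternatively you can bypass conditioning altogether by summing over \emph{all} schedulings rather than only firings: since each step newly schedules each event at most once, $\sum_{i<n}t_i\leq\sum_{e\in\events}\sum_{j\leq n}D_{e,j}$ with $D_{e,j}$ i.i.d.\ of mean $\mu_e$, so $\limsup_n\frac{1}{n}\sum_{i<n}t_i\leq\sum_{e}\mu_e$ almost surely by the i.i.d.\ SLLN. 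With either repair your proof is complete, and on the unbounded case it is more complete than the paper's own.
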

\begin{proof}
Let $(s,\nu)$ be a configuration, $C$ and $D$ the set of variable-delay and fixed-delay events scheduled in $s$, respectively. For nonempty $D$ let $T=\min_{d\in D}(\ell_d-\nu(d))$ be the time the first fixed-delay event can occur; for $D=\emptyset$ we set $T=\infty$. 
%
The probability that the transition from $(s,\nu)$ occurs within time $t$ is 
$$F(t)=\begin{cases}
         1- \prod_{c\in C}\int_t^\infty f_{c \mid \nu(c)}(x)\;\de{x} & \text{for }0<t<T,\\
         1 & \text{for }t\geq T
       \end{cases}
$$
as non-occurrences of variable-delay events are mutually independent.
Observe that $F(t)$ is piece-wise differentiable on the interval $(0,T)$, we denote by $f(t)$ its piece-wise derivative. The expected waiting time in $(s,\nu)$ is 
\begin{equation}\label{eq:W}
   W((s,\nu))=\begin{cases}
              \int_0^T t\cdot f(t)\;\de{t} + T\cdot (1-F(T))&\text{for }T<\infty,\\
              \int_0^\infty t\cdot f(t)\;\de{t}  & \text{for }T=\infty.
             \end{cases}
  \end{equation} 
Recall that for each variable-delay event $e$, the density $f_e$ is continuous and bounded as it is defined on a closed interval.
Therefore, $f_{e \mid t}$ are also continuous, hence $F$ and $f$ are also continuous with respect to $\nu$ and with respect to $t$ on $(0,T)$. Thus $W$ is continuous for $T$ both finite and infinite. 
Moreover, for finite $T$, $W$ is bounded by $T$ which is for any $(s,\nu)$ smaller than $\max_{d\in\fixed}\ell_d$. Hence, $E_\pi[W]$ is finite. 
%
For $T=\infty$, $E_\pi[W]$ is finite due to the assumption that each $f_e$ has finite expected value.
%
%
%
%
%
 \QED
\end{proof}







\end{document}